\def\BState{\State\hskip-\ALG@thistlm}
\newtheorem{theorem}{Theorem}[section]
\theoremstyle{definition}
\newtheorem{definition}{Definition}[section]
\newtheorem{claim}{Claim}[section]
\newtheorem{lemma}{Lemma}[section]
\newtheorem{corollary}{Corollary}[section]
\newtheorem{example}{Example}[section]
\newtheorem{proposition}{Proposition}
\title{Sensitivity Analysis of Monotone Submodular Function Maximization}
\author{
	Conor McMeel\\
	Imperial College, London\\
	\texttt{c.mcmeel18@imperial.ac.uk}
	\and
	Yuichi Yoshida \footnote{This author was supported by JST, PRESTO Grant Number JPMJPR192B, Japan.}\\
	National Institute of Informatics\\
	\texttt{yyoshida@nii.ac.jp}
}
\begin{document}
	
\maketitle

\begin{abstract}
	We study the recently introduced idea of worst-case sensitivity for monotone submodular maximization with cardinality constraint $k$, which captures the degree to which the output argument changes on deletion of an element in the input. We find that for large classes of algorithms that non-trivial sensitivity of $o(k)$ is not possible, even with bounded curvature, and that these results also hold in the distributed framework. However, we also show that in the regime $k = \Omega(n)$ that we can obtain $O(1)$ sensitivity for sufficiently low curvature.
\end{abstract}

\section{Introduction}
Monotone submodular function maximization is an important problem in optimization that captures many problems of interest including Max-Cut~\cite{goemans1995improved}, facility location~\cite{ageev19990}, and generalized assignment~\cite{feige2006approximation}. It also has many practical applications such as optimal budget allocation~\cite{soma2014optimal}, sensor placement~\cite{krause2007near} and text summarization~\cite{lin2011class}. It has been studied extensively, and many approximation algorithms with provable guarantees have been derived for a variety of constraints~\cite{calinescu2011maximizing,iyer2013submodular}, as well as a variety of generalizations~\cite{ward2016maximizing,soma2018maximizing}.

Most works studying monotone submodular function maximization are concerned solely with the approximation guarantee and time complexity of the algorithm in question, assuming that we have full access to the submodular function.
In practical applications, however, it may be naive to assume that we have access to the entire ground set and the ground set does not change over time.
In these cases, it would be desirable to obtain an algorithm whose optimal solution does not change greatly as the ground set receives or loses elements.
We provide two examples below.
\begin{itemize}
	\itemsep=0pt
	\item In a recommendation system~\cite{El-Arini2009}, it is not the value of the objective function that is important to the user, but rather the solution, that is, the items recommended.
	If for example we are recommending items of media, it would be desirable for the user that even after our library of available media changes, the items recommended do not change too much.
	\item In sensor placement~\cite{krause2007near}, if the locations that we can install sensors change over time, it may be desirable to use an algorithm where the solution, that is, the locations we install sensors, does not change too much if installing and uninstalling sensors is a cost or time-intensive process.
\end{itemize}
To address the issue above, in this work, we will be concerned with the \emph{(worst-case) sensitivity} of algorithms for maximizing submodular functions, which is informally how much the output of a maximization algorithm changes upon deleting an element of the ground set (see Section~\ref{sec:pre} for details).

Furthermore, in typical applications for submodular maximization, like those listed above, we typically deal with such large amounts of data that we cannot store it all on one machine. Specifically with reference to our earlier examples, large deployments of sensors mean we now have many physical measurements to store. Additionally, the popularization and increasing user-activity of services that recommend media result in an ever-increasing amount of data. As this data can be high-dimensional, it often must be processed in a distributed fashion. As sensitivity has many applied interests, we also look at distributed algorithms.

\subsection{Our Contributions}
In this work, we analyze the sensitivity of algorithms for maximizing monotone submodular functions under a cardinality constraint.
Note that we have a trivial upper bound of $O(k)$ for any algorithm that outputs a set of size $k$.

We first study algorithms in a general sequential setting, where one element is selected to the output set after another. We first show that algorithms in a class we call \textit{proportional} (meaning elements with large marginals must be chosen with high probability) attain $\Omega(k)$ sensitivity for some functions, which matches the trivial upper bound of $O(k)$. We show the same for the randomized greedy algorithm studied in~\cite{buchbinder2014submodular}.

To get around the lower bound of $\Omega(k)$ for the general case, it is natural to consider monotone submodular functions with bounded curvature, where \emph{curvature}~\cite{Conforti1984} being close to zero implies that the function is close to being modular. However our result is also negative---we show that for algorithms where the probability of selection at each step depends only on the ordinal position of a marginal (and not the marginal itself), that some functions have $\Omega(k)$ sensitivity. Throughout we only have light, natural assumptions on the algorithms. For example, the result applies to the standard greedy algorithm as well as the randomized greedy algorithm of~\cite{buchbinder2014submodular}.


We then show that if we sacrifice the constant-factor approximation for all $k$ and $n$, then we can obtain lower sensitivity for the proportional greedy algorithm, which selects an element with probability proportional to its marginal gain $k$ times. We show this algorithm has a constant factor approximation when $k = \Omega(n)$, and that its sensitivity is $O((1-\sqrt{1-c})^2/c \cdot k)$.
Note that this bound converges to zero as $c \to 0$, which corresponds to the modular case, and converges to $k$ as $c \to 1$, which corresponds to the general case.
We complement our upper bound by showing that the proportional greedy algorithm has sensitivity $\Omega((1-\sqrt{1-c})^2/c \cdot k - \varepsilon)$ for any $\varepsilon > 0$.



We finally move on to studying distributed algorithms. As many of our hardness results rely on one element changing the ordering of marginals greatly, it could be hoped that splitting the ground set among a large number of machines would enable us to obtain a non-trivial upper bound for all $k, n$, as that one element can't affect too much of the ground set. However, we show this is not the case. Specifically, we show that for a framework given in~\cite{barbosa2016new} for creating distributed algorithms from centralized algorithms, that all algorithms we had currently shown to have $\Omega(k)$ sensitivity in the centralized case will also have $\Omega(k)$ sensitivity in the distributed case. We conclude with a brief discussion. 

\subsection{Related Work}
Sensitivity analysis of algorithms for combinatorial optimization problems is relatively a new area.
Varma and Yoshida~\cite{varma2019average} introduced the notion of \emph{average sensitivity}, which is how much the output of an algorithm changes upon deleting a random element of the ground set, and designed algorithms with low average sensitivity for various problems such as the minimum problem, the maximum matching problem, and the vertex cover problem.
Peng and Yoshida~\cite{Peng2020} studied spectral clustering and showed that it has low average sensitivity when the input graph has a cluster structure.
Yoshida and Zhou~\cite{yoshida2020sensitivity} showed a $(1-\varepsilon)$-approximation algorithm for the maximum matching problem with sensitivity $O_\epsilon(1)$.
Lattanzi~\emph{et~al.}~\cite{lattanzi2020fully} gave a $(1/2-\varepsilon)$ algorithm with polylogarithmic amortized update time in the fully dynamic setting.
Note that their algorithm does not imply an algorithm with low sensitivity because the output of a dynamic algorithm can depend on the order of elements in the stream.

\subsection{Organization}
We introduce notions used throughout this work in Section~\ref{sec:pre}.
In Section~\ref{sec:sequential}, we show that natural sequential algorithms have sensitivity $\Omega(k)$.
In Section~\ref{sec:curvature}, we discuss sensitivity of algorithms on submodular functions with bounded curvature.
In Section~\ref{sec:distributed}, we discuss distributed algorithms, and show how for a common framework for constructing distributed algorithms from centralized algorithms, our results carry over to that setting.
We conclude this paper in Section~\ref{sec:conclusion}. Comparable results for the average-sensitivity setting, as opposed to our worst-case setting, are found in Appendix~\ref{appendix:AveAll}

\section{Preliminaries}\label{sec:pre}
For a positive integer $n$, let $[n]$ denote the set $\{1,2,\ldots,n\}$.
For subsets $S,T \subseteq E$, $S \bigtriangleup T$ denotes the symmetric difference between $S$ and $T$.
For a subset $S \subseteq E$, and an element $e \in E \setminus S$, we denote by $S+e$ the set $S \cup \{e\}$.
For a set $S \subseteq E$, let $1_S \in {\{0,1\}}^E$ be the indicator vector of $S$.
Let $f\colon 2^E \to \mathbb{R}$ be a set function.
For an element $e \in E$, let $f^{\setminus e}\colon 2^{E \setminus \{e\}}\to \mathbb{R}$ be the function obtained from $f$ by restricting the domain to $E \setminus \{e\}$.
For a set $S \subseteq E$, let $f_S\colon 2^{E \setminus S} \to \mathbb{R}$ be the function defined as $f_S(T) = f(S \cup T)$.

\subsection{Submodular Functions}
In this subsection, we introduce some necessary definitions.
Let $f\colon 2^E \to \mathbb{R}$ be a set function.
We say that $f$ is \emph{monotone} if $f(S) \leq f(T)$ for any $S \subseteq T$.
We say that $f$ is \emph{submodular} if $f(S) + f(T) \geq f(S \cup T) + f(S \cap T)$.
It is well known that submodularity is equivalent to the \textit{diminishing return property}, that is, $f_S(e) \geq f_T(e)$ holds for any $S \subseteq T$ and $e \in E \setminus T$.
A function $f$ is \emph{supermodular} if $-f$ is submodular, and is \emph{modular} if it is both submodular and supermodular, which can be thought of as a set function analogue to linearity.

\begin{algorithm}[t!]
	\caption{Deterministic and randomized greedy algorithms}\label{alg:greedy}
	\Procedure{\emph{\Call{DeterministicGreedy}{$f,k$}}}{
		\Input{Monotone submodular function $f\colon 2^E \to \mathbb{R}_+$ and an integer $k$.}
		$S \gets \emptyset$\;
		\While{$|S| \leq k$}{
			Let $e \in E \setminus S$ be the element with the maximum marginal gain $f_S(e)$\;
			$S \gets S \cup \{e\}$.
		}
		\textbf{return} $S$\;
	}
	\Procedure{\emph{\Call{RandomizedGreedy}{$f,k$}}}{
		\Input{Monotone submodular function $f\colon 2^E \to \mathbb{R}_+$ and an integer $k$.}
		$S \gets \emptyset$\;
		\While{$|S| \leq k$}{
			Let $M$ be a set of size $k$ maximizing $\sum_{e \in M} f(S \cup \{e\})$\;
			Select the element $e'$ uniformly from $M$\;
			$S \gets S \cup \{e'\}$.
		}
		\textbf{return} $S$\;
	}
	
\end{algorithm}

Throughout, the optimization problem we are interested in is the following monotone submodular function maximization with a cardinality constraint:
\begin{quote}
	Given a monotone submodular function $f\colon 2^E \rightarrow \mathbb{R}$, and an integer $1 \leq k \leq  |E|$, find a subset $S \subseteq E$ which maximizes $f(S)$ subject to $|S| \leq k$.
\end{quote}
Although this problem is NP-hard, as it includes the maximum cut problem, it is known that the deterministic greedy algorithm (\Call{DeterministicGreedy}{} in Algorithm~\ref{alg:greedy}) achieves $(1-1/e)$-approximation~\cite{Nemhauser1978}, and $(1-1/e+\varepsilon)$-approximation is NP-Hard for any $\varepsilon > 0$~\cite{Nemhauser:1978dm}.
We can also use the randomized greedy algorithm (\Call{RandomizedGreedy}{} in Algorithm~\ref{alg:greedy}) to get $(1-1/e)$-approximation~\cite{buchbinder2014submodular}.

The \textit{curvature} of a submodular function measures how close a submodular function is to modular, and can be defined as follows:
\begin{definition}[\cite{Conforti1984,vondrak2010submodularity}]
	Let $f\colon 2^E \to \mathbb{R}_+$ be a monotone submodular function. Its \textit{curvature} is defined to be
	\[
	c= 1 - \min_{e \in E}\frac{f_{E \setminus \{e\}}(e)}{f(e)}
	\]
\end{definition}
Note that $c \in [0,1]$ and a curvature of $0$ implies that the function is modular.

\subsection{Sensitivity}
In this section, we formally introduce the notion of sensitivity. We start with deterministic algorithms.
\begin{definition}\label{defn:sensDet}
	Let $\mathcal{A}$ be a deterministic algorithm that takes a function $f\colon 2^E \to \mathbb{R}$ and returns a set $S \subseteq E$.
	The \emph{sensitivity} of $\mathcal{A}$ on $f\colon2^E \to \mathbb{R}$ is
	\[
	\max_{e \in E} |\mathcal{A}(f) \bigtriangleup \mathcal{A}(f^{\setminus  e})|.
	\]
\end{definition}

To define the sensitivity of randomized algorithms, we need the following notion that measures the distance between distributions over subsets.
\begin{definition}
	For two distributions $\mathcal{D}_1$ and $\mathcal{D}_2$ over $2^E$, the \emph{earth mover's (or Wasserstein) distance} between $\mathcal{D}_1$ and $\mathcal{D}_2$ is
	\[
	d_{\mathrm{EM}}(\mathcal{D}_1,\mathcal{D}_2) := \min_{\mathcal{D}} \mathop{\mathbf{E}}_{(S,S') \sim \mathcal{D}}|S \bigtriangleup S'|,
	\]
	where $\mathcal{D}$ is a distribution over $2^E \times 2^E$ such that the marginal distributions on the first and the second coordinates are equal to $\mathcal{D}_1$ and $\mathcal{D}_2$, respectively.
\end{definition}

\begin{definition}\label{defn:sensRand}
	Let $\mathcal{A}$ be a randomized algorithm that takes a function $f\colon 2^E \to \mathbb{R}$ and returns a set $S \subseteq E$.
	The \emph{sensitivity} of $\mathcal{A}$ on $f$ is
	\[
	\max_{e \in E} d_{\mathrm{EM}}(\mathcal{A}(f), \mathcal{A}(f^{\setminus  e})),
	\]
	where we identified $\mathcal{A}(f)$ and $\mathcal{A}(f^{\setminus e})$ with the distributions of $\mathcal{A}(f)$ and $\mathcal{A}(f^{\setminus e})$, respectively.
\end{definition}
We note that this definition matches the previous one for deterministic algorithms.


\section{Sequential Algorithms: A Warm-up}\label{sec:sequential}
\begin{algorithm}[t!]
	\caption{Sequential algorithm for monotone submodular function maximization}\label{alg:sequential}
	\Procedure{\emph{\Call{Sequential}{$f,k,\mathcal{R}$}}}{
		\Input{Monotone submodular function $f\colon 2^E \to \mathbb{R}_+$, an integer $k$, and a decision rule $\mathcal{R}$.}
		$S \gets \emptyset$\;
		\While{$|S| \leq k$}{
			$e \leftarrow \mathcal{R}(f,S)$\;
			$S \gets S \cup \{e\}$.
		}
		\textbf{return} $S$\;
	}
\end{algorithm}

In this section, we consider sequential algorithms, which add elements to the output one by one.
In Section~\ref{subsec:proportional}, we show that any sequential algorithm in a broad class we call \textit{proportional algorithms} has sensitivity $\Omega(k)$, which matches the trivial upper bound of $O(k)$.
In Section~\ref{subsec:randomized-greedy}, we also show the same lower bound for \Call{RandomizedGreedy}{}~\cite{buchbinder2014submodular}.

\subsection{Proportional Algorithms}\label{subsec:proportional}
First, we formally define a decision rule for selecting one element.
\begin{definition}[Decision Rule]
	A (possibly randomized) procedure is called a \emph{decision rule} if it takes a function $f\colon 2^E \to \mathbb{R}$ and a set $S \subseteq E$ and returns an element $e \in E \setminus S$ solely based on the marginal gains $f_S(e) \; (e \in E \setminus S)$.
\end{definition}

The framework of a sequential algorithm is described in Algorithm~\ref{alg:sequential}.
Starting with an empty set, it keeps choosing an element using the given decision rule and then adding it to the set, until the size of the set reaches $k$.
Noting that the choice of $\mathcal{R}$ defines the behavior of this algorithm, we give a couple of examples to make this framework clearer:
\begin{example}
	If $\mathcal{R}(f,S)$ returns $e \in E \setminus S$ with the maximum marginal gain, then we recover \Call{DeterministicGreedy}{}.
	Suppose that all marginals are different at each step for simplicity.
	Then if $\mathcal{R}(f,S)$ returns an element $e \in E \setminus S$ with probability $1/k$ if it is among the $k$ highest marginals and $0$ otherwise, then we recover \Call{RandomizedGreedy}{}.
\end{example}

We note that the choice of $\mathcal{R}$ can result in a very broad class of algorithms.
With this in mind, we limit the choice to one we call \textit{proportional decision rules} for this warm-up section, which formalize the intuition that if a marginal is arbitrarily large, we should choose it with arbitrarily high probability.
\begin{definition}[Proportional Decision Rule]
	We call a (randomized) decision rule $\mathcal{R}$ \emph{proportional} if for any $\delta > 0$, there exists $\varepsilon = \varepsilon_{\mathcal{R}}(\delta) > 0$ such that, for any $S \subseteq E$ and $e \in E \setminus S$ with
	\[
	\frac{f_S(e)}{\sum_{e' \in E \setminus S} f_S(e')} > 1 - \varepsilon,
	\]
	the probability that $\mathcal{R}(f,S)$ returns $e$ is at least $1 - \delta$.
\end{definition}
We note that we can assume $\varepsilon_\mathcal{R}(\delta)$ is non-decreasing  without loss of generality.
To see this, suppose that there exists $\delta < \delta'$ with $\varepsilon_\mathcal{R}(\delta) > \varepsilon_\mathcal{R}(\delta')$.
Then, we can set $\varepsilon_\mathcal{R}(\delta')$ to be $\varepsilon_\mathcal{R}(\delta)$ because if the relative marginal gain of $e$ is more than $1-\varepsilon_\mathcal{R}(\delta) < 1-\varepsilon_\mathcal{R}(\delta')$, then we choose it with probability at least $1-\delta > 1-\delta'$.

We can confirm that \Call{DeterministicGreedy}{} is proportional by setting $\varepsilon = 1/2$ for each $\delta>0$. We also note that \Call{RandomizedGreedy}{} is not proportional, as there is no corresponding $\varepsilon$ for when $1-\delta > \frac{1}{k}$, hence our separate analysis of it.

\begin{theorem}
	Suppose $k \leq n/2$.
	Then, the sensitivity of any sequential algorithm with a proportional decision rule is $\Omega(k)$.
\end{theorem}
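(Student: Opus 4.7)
The plan is to construct, for any given proportional decision rule $\mathcal{R}$, a monotone submodular $f$ and a distinguished element $e^{*}\in E$ so that the runs of the algorithm on $f$ and on $f^{\setminus e^{*}}$ follow essentially disjoint paths. Take $E=\{e^{*}\}\sqcup\{a_1,\dots,a_{k-1}\}\sqcup\{b_1,\dots,b_k\}$ (pad with inert dummies if $n>2k$). With a parameter $K$ to be fixed later (depending only on $\varepsilon_{\mathcal{R}}$), set $W_j=K^{k-j+1}$, $\alpha_i=K^{-i}$, and $M_0 = K\sum_j W_j$, and define
\[
f(S)=M_0\cdot \mathbb{1}[e^{*}\in S]+\mathbb{1}[e^{*}\notin S]\cdot\sum_{b_j\in S}W_j+\sum_{a_i\in S}\alpha_i.
\]
The role of the indicator is that $e^{*}$ ``covers'' everything the $b_j$ contribute, while the $a_i$ part is purely modular. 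Monotonicity (using $M_0>\sum_j W_j$) and submodularity follow by inspecting the three types of marginal increment case by case.

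Now I trace the two runs. On $f^{\setminus e^{*}}$, at step $j$ with $S=\{b_1,\dots,b_{j-1}\}$, the only remaining nonzero marginals are $W_\ell$ for $\ell\ge j$ and the tiny $\alpha_i$; the geometric gap gives $W_j$'s share of the remaining total at least $1-O(1/K)$. On $f$, the first step has $e^{*}$'s marginal $M_0$ dominating every other marginal by a factor $\ge K$; once $e^{*}$ is picked, every $b_j$-marginal drops to $0$, so thereafter only the $\alpha_i$ compete, and $\alpha_i$'s share at step $i+1$ is again at least $1-O(1/K)$. Choose $\delta=1/(2k^2)$; since $\mathcal{R}$ is proportional, $\varepsilon_{\mathcal{R}}(\delta)>0$, and I pick $K$ large enough that the $O(1/K)$ slack is below $\varepsilon_{\mathcal{R}}(\delta)$. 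Then at each of the $k$ steps the intended element is selected with probability $\ge 1-\delta$; a union bound shows that $\mathcal{A}(f)$ outputs $T_1:=\{e^{*},a_1,\dots,a_{k-1}\}$ and $\mathcal{A}(f^{\setminus e^{*}})$ outputs $T_2:=\{b_1,\dots,b_k\}$, each with probability at least $1-k\delta\ge 1-1/(2k)$.

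Since $T_1\cap T_2=\emptyset$, $|T_1\bigtriangleup T_2|=2k$. Because symmetric difference is a metric, for any coupling $\pi$ of $\mathcal{A}(f)$ and $\mathcal{A}(f^{\setminus e^{*}})$,
\[
\mathop{\mathbf{E}}_\pi|S\bigtriangleup S'|\ \ge\ |T_1\bigtriangleup T_2|-\mathop{\mathbf{E}}|S\bigtriangleup T_1|-\mathop{\mathbf{E}}|S'\bigtriangleup T_2|,
\]
and each error term is at most $2k\cdot 1/(2k)=1$ (via the trivial bound $|S\bigtriangleup T|\le 2k$), so $d_{\mathrm{EM}}(\mathcal{A}(f),\mathcal{A}(f^{\setminus e^{*}}))\ge 2k-2=\Omega(k)$, as required.

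The main obstacle I anticipate is making the proportional-rule condition fire at \emph{every} one of the $k$ steps of both traces: that condition requires the dominant marginal to be a $(1-\varepsilon)$-fraction of the \emph{sum} of all remaining marginals, not just of the maximum, so even a handful of moderately large competitors can spoil it. The geometric choice of weights with a single ratio $K$ is precisely what forces the worst-step ratio to be uniformly $1-O(1/K)$ across both traces, so one choice of $K$ depending only on $\varepsilon_{\mathcal{R}}(1/(2k^2))$ serves everywhere.
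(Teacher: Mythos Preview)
Your proof is correct and follows essentially the same approach as the paper: build a monotone submodular function with a distinguished ``cover'' element $e^{*}$ whose selection zeros out the marginals of one group, so that the runs on $f$ and on $f^{\setminus e^{*}}$ land in disjoint halves of the ground set, giving $d_{\mathrm{EM}}=\Omega(k)$. Your choice of geometrically decaying weights for the $a_i$ is in fact a little more careful than the paper's version, where the analogous group (the $C$-weighted elements $e_2,\dots,e_{n/2}$) all share the \emph{same} marginal after $e_1$ is picked; in that situation no single element dominates and the proportional-rule hypothesis does not literally fire, so the paper's Claim~3.1 implicitly assumes the rule never selects zero-marginal elements, whereas your construction avoids this by forcing a dominant element at every step of both traces.
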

\begin{proof}
	We assume $n$ is even and fix a proportional decision rule $\mathcal{R}$.
	For $E = \{e_1,\ldots,e_n\}$, define the function $f\colon 2^E \to \mathbb{R}$ as the set function such that
	\[
	f(S) = Ax_1 + B_i(1-x_1)\sum_{i=n/2+1}^{n}x_i + C\sum_{i = 2}^{n/2}x_i,
	\text{ where } x_i = 1[e_i \in S]\text{ for each }i\in [n].
	\]
	and the constants $A, B_i, C$ are such that $A \gg B_{n/2 +1} \gg \ldots \gg B_n \gg C$.
	In particular, for small enough $\delta = o(1/n)$ and $\varepsilon = \varepsilon_\mathcal{R}(\delta)$, we want
	\begin{align*}
	\frac{A}{A+B(n/2)+C(n/2-1)} > 1-\varepsilon \quad \text{and} \quad \frac{B_i}{\sum_{j=i+1}^n B_j+C(n/2-1)} > 1-\varepsilon.
	\end{align*}
	where the second equation holds for all $i \in \{n/2+1 \ldots n\}$. It is clear that $f$ is monotone submodular. We now analyze the output of the sequential algorithm on $f$ and $f^{\setminus e_i}$ for $i \in [n]$, using the following two claims whose proofs are found in Appendix~\ref{appendix:proofs}:
	
	\begin{claim}\label{claim:PropClaim1}
		We have
		\[
		\Pr\left[|\mathcal{A}(f,k,\mathcal{R}) \cap \{e_1,\ldots,e_{n/2}\}| = k\right] > 1 - \delta.
		\]
	\end{claim}

	
	\begin{claim}\label{claim:PropClaim2}
		We have
		\[
		\Pr\left[|\mathcal{A}(f^{\setminus e_1}, k,\mathcal{R}) \cap \{e_{n/2+1}, \ldots, e_n\}| = k \right] > 1-k\delta .
		\]
	\end{claim}
	We then see that the two claims above immediately show the sensitivity of $\mathcal{A}(\cdot,k,\mathcal{R})$ is $\Omega(k)$.
\end{proof}

\subsection{Randomized Greedy Algorithm}\label{subsec:randomized-greedy}
In this subsection, we analyze the sensitivity of \Call{RandomizedGreedy}{}.
\begin{theorem}
	Suppose $k \leq n/4 - 1$. Then, the sensitivity of \Call{RandomizedGreedy}{} is $\Omega(k)$.
\end{theorem}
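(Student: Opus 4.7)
The plan is to construct a monotone submodular function with a distinguished ``pivot'' element $e_0$ so that \Call{RandomizedGreedy}{} on $f$ ends up selecting $\Omega(k)$ elements from a pool $B$ that is completely avoided on $f^{\setminus e_0}$. Take $E=\{e_0\}\cup A\cup B$ with $A,B$ disjoint and $|A|,|B|\ge 2k$, which is feasible since $n\ge 4k+4$, and define
\[
f(S)=\alpha\,1[e_0\in S]+\sum_{a_i\in A}\Bigl(1+\tfrac{\epsilon}{i}\Bigr)1[a_i\in S]+\sum_{b_j\in B}\Bigl(\tfrac{1}{2}+\tfrac{\mu}{j}\Bigr)1[b_j\in S]-c\,1[e_0\in S]\,|S\cap A|,
\]
where $\alpha\ge c|A|$, $c\in(\tfrac12,1)$, and $\epsilon,\mu>0$ are tiny perturbations that make marginals distinct. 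Every term but the last is modular; the last contributes marginal $-c$ to $e_0$ per element of $S\cap A$ and $-c$ to every $a_i$ once $e_0\in S$, so diminishing returns holds, and the calibration $\alpha\ge c|A|$, $c\le 1$ keeps $f$ nonnegative and monotone.

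The crucial structural property is a ``flip'' at the moment $e_0$ is chosen. While $e_0\notin S$, the marginal of $e_0$ is $\alpha-c|S\cap A|\gg 1$, each $a_i$-marginal is $\approx 1$, and each $b_j$-marginal is $\approx \tfrac12$, so the top-$k$ is always $\{e_0\}\cup(\text{top }k-1\text{ of }A\setminus S)$. Once $e_0\in S$, every $a_i$-marginal drops to $\approx 1-c<\tfrac12$ while $b_j$-marginals are unchanged, so the top-$k$ becomes the top-$k$ of $B\setminus S$. Consequently, \Call{RandomizedGreedy}{} on $f^{\setminus e_0}$ has every top-$k$ contained in $A$ and returns a $k$-subset $S'\subseteq A$; on $f$, at each step prior to $e_0$ being picked the element $e_0$ sits in the top-$k$ and is selected with probability exactly $1/k$, so the step $T\in[k]$ at which $e_0$ is picked satisfies $\Pr[T=t]=(1-1/k)^{t-1}/k$. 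On $\{T=t\}$ the output $S$ is $\{e_0\}\cup(\text{$t-1$ of the }a_i\text{'s})\cup(\text{$k-t$ of the }b_j\text{'s})$; in particular $|S\cap A|=t-1$.

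For the earth-mover lower bound, any coupling $(S,S')$ of $\mathcal{A}(f)$ and $\mathcal{A}(f^{\setminus e_0})$ has $|S|=|S'|=k$ and $S'\subseteq A$, so $|S\cap S'|\le|S\cap A|=T-1$ and $|S\bigtriangleup S'|=2k-2|S\cap S'|\ge 2(k-T+1)$ deterministically on the event $\{T\le k\}$. Taking expectations and keeping only the terms with $t\le\lfloor k/2\rfloor$ (on which the geometric weight $(1-1/k)^{t-1}$ is bounded below by a constant),
\[
d_{\mathrm{EM}}(\mathcal{A}(f),\mathcal{A}(f^{\setminus e_0}))\ \ge\ \frac{2}{k}\sum_{t=1}^{\lfloor k/2\rfloor}\Bigl(1-\tfrac{1}{k}\Bigr)^{t-1}(k-t+1)\ =\ \Omega(k),
\]
which is the desired bound. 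The main obstacle is designing the coupling term $-c\cdot 1[e_0\in S]\cdot|S\cap A|$ so that it simultaneously keeps $A$-marginals above $B$-marginals before $e_0$ is chosen, inverts that order the moment $e_0$ is chosen, and remains submodular and monotone. Calibrating $c$ strictly between $\tfrac12$ and $1$ is exactly what produces this flip, and the flip, combined with the $1/k$ sampling rate inherent to \Call{RandomizedGreedy}{}, forces the geometric distribution of $T$ that yields $\Omega(k)$ expected symmetric difference.
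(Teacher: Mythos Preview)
Your proof is correct and follows essentially the same approach as the paper. Both arguments construct a pivot element whose selection flips the top-$k$ from one pool ($A$) to a disjoint pool ($B$), observe that the pivot is selected at a geometrically distributed time $T$ with parameter $1/k$, and lower-bound the symmetric difference by $2(k-T+1)$; the only cosmetic differences are that the paper drops the $A$-marginals to $0$ via the factor $(1-x_1)$ rather than to $1-c$ via your subtractive term, and does not bother with explicit tie-breaking perturbations.
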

\begin{proof}
	The function we use is $f: 2^E \rightarrow \mathbb{R}$ with:
	\begin{align*}
	f(S) = Cx_1 + (1-x_i)\sum_{i=2}^{2k+1}x_i + 0.5\sum_{i = 2k+2}^{n}x_i
	\text{ where }x_i = 1[e_i \in S]\text{ for each }i \in [n].
	\end{align*}
	where $C$ is large enough so the function remains monotone submodular. We now analyze the output of this algorithm on $f$ and $f^{\setminus e_1}$, with claims whose proofs are found in Appendix~\ref{appendix:proofs}
	\begin{claim}\label{claim:randClaim1}
		We have
		\[
		|\mathcal{A}(f^{\setminus e_1},k,\mathcal{R}) \cap \{e_2,\ldots,e_{2k+1}\}| = k
		\]
	\end{claim}
	
	\begin{claim}\label{claim:randClaim2}
		For $p_i = {\left(\frac{k-1}{k}\right)}^i$, we have
		\begin{equation}
		\Pr\left[|\mathcal{A}(f, k,\mathcal{R}) \cap \{e_{2k+2}, \ldots, e_n\}| = i \right] = p_{k-i-1}/k. \nonumber
		\end{equation}
	\end{claim}
	These two claims enable us to give a simple bound on the sensitivity of this function:
	\begin{align*}
	&\max_{e \in E} d_{\mathrm{EM}}(\mathcal{A}(f), \mathcal{A}(f^{\setminus  e}))
	\geq d_{\mathrm{EM}}(\mathcal{A}(f), \mathcal{A}(f^{\setminus  e_1}))
	\geq \sum_{i=0}^{k-1} p_{k-i-1} \frac{2k-2i}{k} \\
	&=
	2k \left(1 - 2 {\left(\frac{k-1}{k}\right)}^k\right)
	= \Omega(k)
	\end{align*}
	giving us our result.
\end{proof}
\section{Submodular Functions with Bounded Curvature}\label{sec:curvature}
We have now seen that in the general case, a large class of sequential algorithms cannot achieve non-trivial sensitivity in general. With this in mind, we now consider submodular functions with bounded curvature.

In this section, we first begin with another small ``warm-up'' section, where we analyze \Call{DeterministicGreedy}{} and \Call{RandomizedGreedy}{} for small curvature, and show that they also cannot achieve non-trivial sensitivity in general, even when the curvature is arbitrarily close to $0$. The proof techniques we use here will be used in later sections also.
Following on from that, we then move onto general sequential algorithms. We show that under a couple of natural assumptions, no sequential algorithm can achieve non-trivial sensitivity in general.

\subsection{Two Greedy Algorithms}\label{sec:simpleLowCurv}

In this section, we prove the following:
\begin{theorem}
	\Call{DeterministicGreedy}{} and \Call{RandomizedGreedy}{} both have sensitivity $\Omega(k)$ sensitivity, even when the input function has curvature arbitrary close to 0.
\end{theorem}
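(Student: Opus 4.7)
The plan is to modify the hard instances constructed in Sections~\ref{subsec:proportional} and~\ref{subsec:randomized-greedy} by adding a sufficiently large modular term, which drives the curvature to zero while leaving the execution of both greedy algorithms unchanged. Concretely, given a hard submodular function $g\colon 2^E \to \mathbb{R}_+$ used in one of those original proofs, I consider $f(S) \defeq g(S) + M\lvert S \rvert$ for a parameter $M > 0$, applied once to the construction of Section~\ref{subsec:proportional} (to handle \Call{DeterministicGreedy}{}, which is proportional) and once to the construction of Section~\ref{subsec:randomized-greedy} (to handle \Call{RandomizedGreedy}{}). Two observations drive the argument: (i) adding a modular function preserves both monotonicity and submodularity, so $f$ is a valid input; and (ii) the marginal $f_S(e) = g_S(e) + M$ is obtained from $g_S(e)$ by an additive shift that is identical across all candidate elements at every step.

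From (ii), at every greedy step the argmax (for \Call{DeterministicGreedy}{}) and the top-$k$ set (for \Call{RandomizedGreedy}{}) are unchanged when we pass from $g$ to $f$, so the entire execution of each algorithm on $f$ and on $f^{\setminus e}$ coincides with its execution on $g$ and on $g^{\setminus e}$. Consequently the claims from Sections~\ref{subsec:proportional} and~\ref{subsec:randomized-greedy} transfer verbatim, and both algorithms retain sensitivity $\Omega(k)$ on $f$. Ties in the marginals of $g$ persist as ties in $f$, so any tie-breaking convention behaves consistently.

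It then remains to bound the curvature of $f$. Setting $L \defeq \max_{e'\in E} g(\{e'\})$ and using $f_{E\setminus e}(e) = g_{E\setminus e}(e) + M \geq M$ together with $f(\{e\}) = g(\{e\}) + M \leq L + M$, we get
\[
\frac{f_{E\setminus e}(e)}{f(\{e\})} \;\geq\; \frac{M}{L+M} \qquad \text{for all } e \in E,
\]
so the curvature of $f$ is at most $L/(L+M)$. Given any target $c > 0$, taking $M > L(1-c)/c$ forces the curvature strictly below $c$ while preserving $\Omega(k)$ sensitivity, establishing the theorem.

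No serious obstacle is anticipated; the only subtlety is verifying that nothing in the two earlier proofs depends on the absolute magnitudes of marginals rather than their ordering, which is straightforward since both \Call{DeterministicGreedy}{} and \Call{RandomizedGreedy}{} make selections purely from the order statistics of marginals. The construction also illustrates a general principle for the sequel: any lower bound that proceeds by controlling the argmax (or top-$k$) of marginals is automatically robust to additive modular perturbations, and hence to the curvature regime.
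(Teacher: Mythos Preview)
Your proof is correct and takes a genuinely different route from the paper. The paper builds fresh hard instances with the curvature parameter $c$ baked in: for \Call{DeterministicGreedy}{} it uses
\[
f(S) = Cx_1 + (1-cx_1)\sum_{i=2}^{k+1} x_i + \Bigl(1-\tfrac{c}{2}\Bigr)\sum_{i=k+2}^{2k+1} x_i,
\]
so that selecting $e_1$ drops one block of marginals from $1$ to $1-c$, just below the competing block at $1-c/2$; deleting $e_1$ flips which block the greedy prefers. An analogous instance handles \Call{RandomizedGreedy}{}. You instead recycle the curvature-$1$ instances of Sections~\ref{subsec:proportional} and~\ref{subsec:randomized-greedy} and add a large modular term $M|S|$, observing that this shifts every marginal by the same constant and therefore leaves the argmax and the top-$k$ set---hence the entire execution trace of each algorithm---untouched, while driving the curvature below any prescribed threshold. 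Your argument is shorter and isolates the right structural principle (ordinal lower bounds are automatically curvature-robust, exactly the ``general principle'' you note at the end). The paper's explicit $(1-cx_1)$ versus $(1-c/2)$ template, on the other hand, is what gets generalized in Section~\ref{subsec:GenHard} to the near-equality scheme for arbitrary \Call{IndependentSequential}{} algorithms, so constructing it here serves as a warm-up for that later machinery.
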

Over the next two subsections, we will prove this statement for each algorithm in turn.
\subsubsection{Deterministic Greedy}
Let $\mathcal{A}$ denote \Call{DeterministicGreedy}{}.
The function $f:2^E \to \mathbb{R}_+$ we claim attains $\Omega(k)$ sensitivity is the following:
\begin{align*}
f(S) = Cx_1 + (1-cx_1)\sum_{i=2}^{k+1} x_i + \left(1-\frac{c}{2}\right)\sum_{i=k+2}^{2k+1} x_i,
\end{align*}
where $x_i = 1[e_i \in S]$, $C$ is a large constant, and $c \in (0,1]$ is the curvature. Note that this function is monotone submodular with curvature $c$. We first analyze the behavior of the algorithm on the full ground set. After selecting the element $e_1$, we see that the algorithm will select the elements $e_{k+1}, \ldots, e_{2k}$ in some order, as the marginals for $e_2, \ldots, e_k$ will be lowered.

Now suppose we delete $e_1$ from the ground set. It is clear the algorithm will select the elements $e_2, \ldots e_{k+1}$ in some order.

Then, we have the sensitivity:
\begin{align*}
\max_{e \in E} |\mathcal{A}(f) \bigtriangleup \mathcal{A}(f^{\setminus  e})|
\geq |\mathcal{A}(f) \bigtriangleup \mathcal{A}(f^{\setminus  e_1})|
= k,
\end{align*}
giving the $\Omega(k)$ lower bound.

\subsubsection{Randomized Greedy}
Let $\mathcal{A}$ denote \Call{RandomizedGreedy}{}.
For this algorithm, we will use a similar function to the previous subsection:
\begin{align*}
f(S) = Cx_1 + (1-cx_1)\sum_{i=2}^{2k+1} x_i + \left(1-\frac{c}{2}\right)\sum_{i=2k+2}^{4k+1} x_i
\end{align*}
where all notation is defined as in the deterministic greedy case. This time, we firstly analyze $\mathcal{A}(f^{\setminus e_1})$. As $e_1$ is deleted, all steps will consist of choosing randomly between elements in the interval ${e_2, \ldots, e_{2k+1}}$.

We then analyze $\mathcal{A}(f)$. In this case we can define a series $p_1, \ldots, p_{k-1}$, where $p_i$ is the probability that $e_1$ will be chosen at the $i$th step. Note that after $e_1$ is chosen, all further elements will be chosen from ${e_{2k+2}, \ldots, e_{4k+1}}$.

We therefore regard that if $e_1$ is chosen at the $i$th step, then the remaining $k-i$ steps will never happen in $\mathcal{A}(f^{\setminus e_1})$. We also note that $p_i = {\left(\frac{k-1}{k}\right)}^{i-1}\left(\frac{1}{k}\right)$. We can now derive a lower bound for the sensitivity as follows:
\begin{align*}
&\max_{e \in E} d_{\mathrm{EM}}(\mathcal{A}(f), \mathcal{A}(f^{\setminus  e}))
\geq d_{\mathrm{EM}}(\mathcal{A}(f), \mathcal{A}(f^{\setminus  e_1}))
\geq \sum_{i=1}^{k-1}p_i \left(k-i\right) \\
&= \sum_{i=1}^{k-1}{\left(\frac{k-1}{k}\right)}^{i-1}\left(\frac{1}{k}\right) \left(k-i\right)
= \Omega(k)
\end{align*}
where the second inequality comes from the fact that a lower bound comes from only considering the difference of probability in choosing elements from $e_{2k+2}, \ldots, e_{4k+1}$. This gives the required lower bound.
\subsection{General Hardness Results}\label{subsec:GenHard}
In this section, we detail some hardness results for general $k$ and $c$, and show that under a couple of small assumptions on our algorithm that are quite reasonable, for any algorithm $\mathcal{A}$ there is a function $f$ such that the sensitivity of $f$ is $\Omega(k)$, even in the case of bounded curvature.


\subsubsection{Algorithm Formalization}
In Algorithm~\ref{alg:genseq}, we give a framework for a sequential algorithm where elements are selected one at a time, that is a special case of Algorithm~\ref{alg:sequential}, but includes many common algorithms like \Call{DeterministicGreedy}{} and \Call{RandomizedGreedy}{}. In this algorithm, the sets $S_{k,i} \subseteq \mathbb{N}$ give the indices of a number of elements at step $i \in [k]$ that may have non-zero probability assigned to them, where the indices refer to the ordinal position of the elements marginal.

We also make the assumption that the probabilities assigned to elements are independent of marginals, and only depend on the ordinal position of the marginals. This is what motivates our choice of name, \Call{IndependentSequential}{}. Therefore, our algorithm can be specified by sets $\{S_{k,i}\}_{k\in \mathbb{N},i \in [k]}$, and distributions over the positive integers $\{D_{k,i}\}_{k\in \mathbb{N},i \in [k]}$, where $D_{k,j}$ is the distribution on the positive integers at step $j$, when the cardinality constraint is $k$ steps. At each point, the algorithm will choose an element from $S_{k,i}$ with respect to the distribution $D_{k,i}$, as illustrated in Algorithm~\ref{alg:genseq}. We also assume that the sets $S_{k,i}$, are defined solely via $k$ for each $i$, and do not vary as $n$ varies. As an example, this means we don't permit sets like $S_{k,i} = \{1, \ldots, n/3\}$.

The randomized decision rule $\mathcal{R}$ assigns probability of selecting the elements in $S_{k,i}$ according to the distribution $D_{k,i}$, and then selects an element according to those probabilities. 
\begin{example}
	For \Call{DeterministicGreedy}{}, we have $S_{k,i} = \{1\}$ for any $k \in \mathbb{N}$ and $i\in [k]$. We also have the distributions $D_{k,i}$ assigning probability $1$ to the integer $1$, for all $k$ and $i$.
	We then have $\mathcal{R}(S_{k,i}, D_{k,i})$ returns $e_1$ with probability $1$, where the ordering of the elements is the same as the ordering of the marginal.
	
	For \Call{RandomizedGreedy}{}, we have $S_{k,i} = \{1,\ldots,k\}$ for any $k \in \mathbb{N}$. The distribution $D_{k,i}$ assigns equal probability to all integers from $1$ to $k$. We then have $\mathcal{R}(S_{k,i}, D_{k,i})$ assigns the probability of selecting each of $e_1, \ldots, e_k$ as $1/k$, then chooses one and returns it.
\end{example}
\begin{algorithm}[t!]
	\caption{Independent sequential algorithm}\label{alg:genseq}
	\Procedure{\emph{\Call{IndependentSequential$_{\mathcal{S},\mathcal{D}}$}{$f,k$}}}{
		\Comment{This algorithm is parameterized by a family of sets $\mathcal{S} = \{S_{k,i} \subseteq \mathbb{N} \}_{k \in \mathbb{N}, i \in [k]}$ and a collection of distributions $\mathcal{D} =  \{D_{k,i} \subseteq \mathbb{N} \}_{k \in \mathbb{N}, i \in [k]}$}\;
		\Input{Monotone submodular function $f\colon 2^E \to \mathbb{R}_+$ and an integer $k$.}
		$S \gets \emptyset$\;
		\While{$|S| \leq k$}{
			Sort elements according to their marginal, with $e_1$ the largest, and select the elements corresponding to $S_{k,i}$\;
			$e \leftarrow \mathcal{R}(S_{k,i}, D_{k,i})$\;
			$S \gets S \cup \{e\}$.
		}
		\textbf{return} $S$\;
	}
\end{algorithm}

The goal of this section is to prove the following:
\begin{theorem}\label{mainhard}
	For every algorithm satisfying the assumptions above, there exists a function $f$ that attains sensitivity $\Omega(k)$, or it does not attain a constant-factor approximation.
\end{theorem}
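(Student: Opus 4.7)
The plan is to show that any independent sequential algorithm $\mathcal{A}$ that attains a constant-factor approximation must have sensitivity $\Omega(k)$ on some monotone submodular function with bounded curvature. The proof generalizes the ``curvature reshuffling'' construction used in Section~\ref{sec:simpleLowCurv}.

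First, I would exploit the constant-factor assumption to control the distributions $\{D_{k,i}\}$. On a modular function with a single heavy element of weight $M \gg k$ and all other elements of weight $1$, the algorithm must select the heavy element with constant probability; otherwise its expected value is bounded away from the optimum $M + k - 1$. Since the heavy element occupies ordinal position $1$ until it is chosen, this forces the cumulative mass $\sum_i D_{k,i}(1) \cdot \mathbf{1}[1 \in S_{k,i}]$ to be $\Omega(1)$. A more refined family of functions (varying the number of heavy elements) further constrains how much mass the distributions can place on ordinal positions far from the top, yielding a quantitative ``effective width'' $m = m(\mathcal{A},k)$ below which most probability mass lives.

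Next, I would use a generalization of the construction in Section~\ref{sec:simpleLowCurv}:
\[
f(S) = C x_1 + (1 - c x_1)\sum_{i=2}^{m+1} x_i + \bigl(1 - c/2\bigr)\sum_{i=m+2}^{m+k+1} x_i,
\]
where $C$ is a large constant and $m$ is the effective width from the first step (analogous to $m=k$ for \textsc{DeterministicGreedy} and $m=2k$ for \textsc{RandomizedGreedy}). Without $e_1$, the algorithm's $k$ picks come from the first block $\{e_2,\ldots,e_{m+1}\}$. Once $e_1$ is chosen in the full instance, the curvature drop from $1$ to $1-c$ pushes the first block below the second block $\{e_{m+2},\ldots,e_{m+k+1}\}$ in ordinal rank, so subsequent picks concentrate in the second block. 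By the same accounting as in the warm-up, the symmetric difference between $\mathcal{A}(f)$ and $\mathcal{A}(f^{\setminus e_1})$ is $\Omega(k)$.

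The main obstacle is handling algorithms that defer the selection of $e_1$ to a late step, since in that case only a bounded number of subsequent picks are reshuffled and the sensitivity argument degrades. To handle this, I would either amplify the marginal of $e_1$ so that deferring its selection costs a constant fraction of the optimum (forcing $e_1$ to be chosen early by the approximation hypothesis), or layer several staggered pivot elements so that the adversarial reshuffling accumulates independently of when any individual pivot is selected. Matching the construction parameter $m$ uniformly to both the cardinality of the support of the $S_{k,i}$ and the tail mass of $D_{k,i}$ is likely the most delicate part of the proof, since the argument must apply to every algorithm in the class while using only the structural consequences of the approximation assumption.
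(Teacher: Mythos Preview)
Your proposal correctly isolates the crux—controlling \emph{when} the pivot is selected—but your first suggested fix does not work in this algorithm class. Amplifying the marginal of $e_1$ has no effect whatsoever: by the defining assumption of \Call{IndependentSequential}{}, the decision at each step depends only on the \emph{ordinal rank} of the marginals, not on their values. Whether $C=2$ or $C=10^{100}$, the algorithm sees $e_1$ in rank~$1$ and behaves identically; and since $e_1$'s contribution to $f$ is the same no matter which step it is picked at, the approximation guarantee is satisfied even if $e_1$ is selected at step $k$. So the approximation hypothesis applied to a single heavy element cannot force early selection of the pivot, and your argument stalls at exactly the obstacle you identified.

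The paper resolves this with an averaging step rather than a single pivot. In the large-element scheme it places $k$ (not one) large elements of equal weight, so that any $\alpha$-approximation must satisfy
\[
\sum_{i=1}^{k} P_{(1-\alpha/2)k}(e_i) \;\geq\; \frac{\alpha k}{2}.
\]
Since each summand is at most $1$, this forces some particular index $j\in[k]$ with $P_{(1-\alpha/2)k}(e_j)=\Omega(1)$ while $\Omega(k)$ steps remain. That $e_j$ becomes the pivot in a near-equality function engineered so that, before $e_j$ is chosen, the ordinal ranking of all remaining elements coincides with that of the modular test function; the $\Omega(1)$ early-selection probability therefore transfers verbatim, and Lemma~\ref{lem:hardmain} finishes. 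Your second idea (``layer several staggered pivot elements'') gestures in this direction, but the actual mechanism is a one-line pigeonhole over $k$ symmetric heavy elements, not a staggered construction.

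A minor additional point: the ``effective width $m$'' you propose to extract from the approximation guarantee is unnecessary. The framework already assumes the sets $S_{k,i}$ are determined by $k$ alone and do not vary with $n$, so $I_{\max}=\max_{i}\max S_{k,i}$ is a fixed finite number; the paper simply sizes the two blocks of the near-equality function to $I_{\max}$ each and takes $n$ large.
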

\subsubsection{Hardness Preliminaries}
Our proof will rely on the fact that we can design a function where one particular element must be chosen with probability $\Omega(1)$ relatively early on in the process of the algorithm. For an element $e \in E$, step $i \in [k]$, and current solution set $S \subseteq E$, let $p_{i,S}(e)$ denote the probability of selecting the element $e$ at step $i$ when the current solution set is $S$. Then if we let $q(S)$ be the probability that $S$ is the solution set after $|S|-1$ steps, we can define the following two quantities:
\begin{align*}
p_i(e) = \sum_{S \subseteq E:|S|=i-1} p_{i,S}(e)q(S), \quad P_i(e) = \sum_{j = 1}^i \prod_{k=1}^{j-1}(1-p_j(e))p_i(e),
\end{align*}
so we have that $p_i(e)$ is the probability of selecting $e$ at step $i$, and $P_i(e)$ is the probability of having selected $e$ by step $i$.
For an element $e^* \in E$, we similarly define $p^{\setminus e^*}_{i,S}(e), p^{\setminus e^*}_i(e), P^{\setminus e^*}_i(e)$ as the same quantities for when the algorithm is run on the ground set $E \setminus \{e^*\}$.

The result we will use for proving Theorem~\ref{mainhard} is the following, the proof of which is found in Appendix~\ref{appendix:proofs}:
\begin{lemma}\label{lem:hardmain}
	Let $\mathcal{A}$ be an algorithm, $f:2^E \to \mathbb{R}_+$ be a monotone submodular function, $e^* \in E$ be an element.
	Suppose we have $P_{i^*}(e^*) = \Omega(1)$ for some $i^* \in [k]$ with $k-i^* = \Omega(k)$.
	Additionally, suppose there exist subsets of elements $E_1 = \{e_1, \ldots, e_{a-1}\}$ and $E_0 = \{e_a,e_{a+1},\ldots,e_b\}$ with $e^* \in E_1$ such that $\mathcal{A}(f^{\setminus e^*})$ only selects elements from $E_1$, and $\mathcal{A}(f)$ only selects from $E_1$ before $e^*$ is chosen, and only selects elements from $E_0$ after $e^*$ is chosen.
	Then the function $f$ attains $\Omega(k)$ sensitivity for the algorithm $\mathcal{A}$.
\end{lemma}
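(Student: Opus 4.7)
The plan is to lower bound $d_{\mathrm{EM}}(\mathcal{A}(f), \mathcal{A}(f^{\setminus e^*}))$ directly using the disjointness of $E_0$ and $E_1$. First I would observe that $E_1 = \{e_1, \ldots, e_{a-1}\}$ and $E_0 = \{e_a, \ldots, e_b\}$ are disjoint by construction, and that the hypothesis guarantees the support of $\mathcal{A}(f^{\setminus e^*})$ is contained in $E_1$. Hence for any coupling $\mathcal{D}$ on $2^E \times 2^E$ with marginals $\mathcal{A}(f)$ and $\mathcal{A}(f^{\setminus e^*})$, and any pair $(S,S') \in \mathrm{supp}(\mathcal{D})$, the portion $S \cap E_0$ is disjoint from $S'$ and thus contributes in full to $|S \bigtriangleup S'|$. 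Taking expectations and then the infimum over couplings yields
\[
d_{\mathrm{EM}}(\mathcal{A}(f), \mathcal{A}(f^{\setminus e^*})) \geq \mathop{\mathbf{E}}_{S \sim \mathcal{A}(f)} |S \cap E_0|.
\]

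Next I would lower bound the right-hand side using the other structural hypothesis on $\mathcal{A}(f)$: once $e^*$ has been chosen, every subsequent step picks from $E_0$. So if $e^*$ is selected at some step $j \leq i^*$, then the remaining $k - j \geq k - i^*$ selections all land in $E_0$, giving $|S \cap E_0| \geq k - i^*$ deterministically on that event. Since the event ``$e^*$ is selected by step $i^*$'' has probability exactly $P_{i^*}(e^*)$, which is $\Omega(1)$ by assumption, I obtain
\[
\mathop{\mathbf{E}}_{S \sim \mathcal{A}(f)} |S \cap E_0| \geq (k - i^*)\cdot P_{i^*}(e^*) = \Omega(k)\cdot \Omega(1) = \Omega(k),
\]
where I used $k - i^* = \Omega(k)$. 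Chaining the two inequalities gives sensitivity $\Omega(k)$, as desired.

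The argument is mostly bookkeeping once the hypotheses are unpacked, and I do not expect a significant obstacle inside the lemma. The one step I would handle with a bit of care is the coupling-free bound: rather than constructing an explicit optimal coupling (which is awkward because $e^* \in E_1$ creates subtle overlap with $S'$), I rely on the fact that $E_0$ is disjoint from every realization of $S'$, which lets the pointwise bound $|S \bigtriangleup S'| \geq |S \cap E_0|$ pass through the infimum in the definition of $d_{\mathrm{EM}}$. The genuinely hard work will occur outside this lemma, in the proof of Theorem~\ref{mainhard}, where one must exhibit a concrete monotone submodular $f$, a distinguished $e^*$, and sets $E_0, E_1$ that simultaneously realize all three hypotheses ($E_0$/$E_1$ dichotomy, early selection of $e^*$ with constant probability, and constant-factor approximation being required of $\mathcal{A}$).
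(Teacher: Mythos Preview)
Your proof is correct and follows essentially the same approach as the paper: both arguments reduce to the bound $d_{\mathrm{EM}}(\mathcal{A}(f),\mathcal{A}(f^{\setminus e^*})) \geq \mathbf{E}_{S\sim\mathcal{A}(f)}|S\cap E_0| \geq P_{i^*}(e^*)\cdot(k-i^*)$. The only cosmetic difference is that the paper first invokes the general inequality $d_{\mathrm{EM}} \geq \sum_{e}|P_k(e)-P_k^{\setminus e^*}(e)|$ and then restricts to $E_0$ (where $P_k^{\setminus e^*}(e)=0$), whereas you obtain the same quantity directly from the coupling definition via $|S\bigtriangleup S'|\geq |S\cap E_0|$; both routes are equally valid.
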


There are two main stages to our proof of Theorem~\ref{mainhard}. In the first, we consider the performance of our algorithm on a modular function with a large-weight element, and derive conditions on the distributions $D_{k,1}, \ldots, D_{k,k}$ to ensure we get a constant-factor approximation.
We will call this the \textit{large-element scheme}. After that, we look at the function whose exact form will be motivated and described later, but intuitively will have all elements very close together. We call this the \textit{near-equality scheme}. We will show that the conditions derived in the large-element scheme can be used to show that the function in the near-equality scheme attains high sensitivity.

\subsubsection{Large Element Scheme}
We consider the following function:
\begin{align*}
f(S) = \sum_{i=1}^k x_i + \varepsilon\sum_{i=k+1}^n x_i, \text{ where } x_i = 1[e_i \in S]  
\end{align*}
and suppose that our algorithm $\mathcal{A}$ has a constant factor approximation $\alpha > 0$.
This means that at step $(1 - \alpha/2)k$, we know that at least a $(\alpha/2)$-fraction of the elements $e_1, \ldots, e_k$ must be chosen.

Now, as in Lemma~\ref{lem:hardmain}, let $P_i(e)$ be the probability of selecting element $e$ before step $i$ (inclusively).
Then, we have
\begin{align}
\sum_{i=1}^k P_{(1-\alpha/2)k}(e_i) \geq \frac{\alpha k}{2} \label{equation:largescheme}
\end{align}
and therefore $\Omega(k)$ of these elements must be chosen by this step with probability $\Omega(1)$ to ensure a constant factor approximation. Knowing this, we can now move on to the near-equality scheme.

\subsubsection{Near-Equality Scheme}
Recall that we have just seen that $\Omega(k)$ elements between $e_1, \ldots, e_k$ must be chosen with probability $\Omega(1)$ to ensure a constant-factor approximation in the large-element scheme. Recall that our choices of elements is determined solely by the ordering of the marginals, but not the actual marginal values themselves.

Therefore, as the large-element function was modular, we can say that for each of those $\Omega(k)$ elements $e_j$, we can give a range of other functions $f$ that will also have $e_j$ chosen with probability $\Omega(1)$ with $\Omega(k)$ steps remaining, as long as that function $f$ can preserve the same ordering of elements before $e_j$ is chosen.

Now define $f:2^E \to \mathbb{R}_+$ as follows:
\begin{align}
f(S) &= \left((1-ce_j)\sum_{i=1}^{j-1} e_i\right) + e_j + \left((1-c + \varepsilon(1-e_j))\sum_{i=j+1}^{I_{\max}+1}e_i\right) \nonumber \\
&+ (1-c + \varepsilon/2)\sum_{i=I_{\max}+2}^{2I_{\max}+1} e_i + \varepsilon\sum_{2I_{\max}+2}^{n} e_i \nonumber
\end{align}
where $x_i = 1[e_i \in S]$ for each $i \in [n]$ and $I_{\max}$ is the element with largest index across all of the sets $\{S_{k,i}\}$. Note that for all $S \in V$ such that $e_j \not\in S$, that the ordering of all elements in $V \setminus S$ by marginal value at the point $f(S)$ is the same for the functions in the near-equality and large-element scheme. Therefore, we can say that for this function also, $e_j$ will be chosen with probability $\Omega(1)$ with $\Omega(k)$ steps remaining. We briefly note that we couldn't simply use this function to reach the conclusion in the previous subsection directly, as we may have $I_{\max}$ is $\omega(k)$, and so in Equation~\eqref{equation:largescheme} we would have $\omega(k)$ elements in the summation, meaning none would necessarily be selected with $\Omega(1)$ probability.

Note that before $e_j$ is selected, all elements that are selected come from the set $\{e_1, \ldots, e_{I_{\max}}\}$. After selecting $e_j$, the top $I_{\max}$ elements ordered by marginal will be $\{e_{I_{\max}+1}, \ldots, e_{2I_{\max}}\}$, and so all further elements that must be chosen (of which there are $\Omega(k)$), must be chosen from this set. Then if we consider the function ran on the reduced ground set $E\setminus\{e_j\}$, then we see that all large elements chosen will come from the set $\{e_1,\ldots,e_{I_{\max}+1}\}\setminus \{e_j\}$, by the fact that the resulting function is modular, and the definition of $I_{\max}$.

This gives us the statement for Theorem~\ref{mainhard} using Lemma~\ref{lem:hardmain}, noting we can set $e^* = e_j$, $E_1 = \{e_1, \ldots, e_{I_{\max}+1}\}$, $E_0 = \{e_{I_{\max}+2}, \ldots, e_{2I_{\max}+1}\}$.

\begin{algorithm}[t!]
	\caption{Proportional greedy algorithms}\label{alg:propgreedy}
	\Procedure{\emph{\Call{ProportionalGreedy}{$f,k$}}}{
		\Input{Monotone submodular function $f\colon 2^E \to \mathbb{R}_+$ and an integer $k$.}
		$S \gets \emptyset$\;
		\While{$|S| \leq k$}{
			Select $e \in E \setminus S$ with probability proportional to the marginal gain $f_S(e)$\;
			$S \gets S \cup \{e\}$.
		}
		\textbf{return} $S$\;
	}
\end{algorithm}

\subsubsection{Obtaining bounded sensitivity}
Theorem~\ref{mainhard} makes it difficult to identify a situation where we can guarantee non-trivial sensitivity upper bounds.
As one example however, we can show that in the regime $k = \Omega(n)$, we can find this for an algorithm we call \Call{ProportionalGreedy}{} (see Algorithm~\ref{alg:propgreedy}). However, while this algorithm does not have a constant-factor approximation in general, though in the regime $k = \Omega(n)$ we do get the following, with the proof in Appendix~\ref{appendix:PropGredFactor}:
\begin{theorem}
	If we have $k \geq cn$ for some constant $c$, then the Algorithm \Call{ProportionalGreedy}{} has a constant factor approximation of $\left(1 - \frac{1}{e^{c/(1-c)}}\right)$
\end{theorem}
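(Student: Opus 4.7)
The plan is to lower-bound the conditional expected marginal gain at each iteration of \Call{ProportionalGreedy}{} and then iterate the resulting recursion. Let $O \subseteq E$ with $|O| \leq k$ and $f(O) = \mathrm{OPT}$ be an optimal solution, and let $S_i$ denote the (random) set after $i$ selections. Since at iteration $i+1$ element $e \in E \setminus S_i$ is chosen with probability $f_{S_i}(e)/T_i$ where $T_i := \sum_{e \in E \setminus S_i} f_{S_i}(e)$, the conditional expected gain satisfies
\[
\mathbb{E}[f(S_{i+1}) - f(S_i) \mid S_i] \;=\; \frac{\sum_{e \in E \setminus S_i} f_{S_i}(e)^2}{T_i}.
\]

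First I would record two standard consequences of submodularity and monotonicity: $T_i \geq \sum_{o \in O \setminus S_i} f_{S_i}(o) \geq f(O \cup S_i) - f(S_i) \geq \mathrm{OPT} - f(S_i)$; and, restricting the numerator to $o \in O \setminus S_i$ (a set of size at most $k$) and applying Cauchy--Schwarz, $\sum_e f_{S_i}(e)^2 \geq (\mathrm{OPT} - f(S_i))^2 / k$. Combining these bounds with a matching upper bound on $T_i$ that is where the hypothesis $k \geq cn$ must be invoked, the target per-step inequality is
\[
\mathbb{E}[f(S_{i+1}) - f(S_i) \mid S_i] \;\geq\; \frac{\mathrm{OPT} - f(S_i)}{n-k}.
\]

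Letting $g_i := \mathbb{E}[f(S_i)]$ and taking expectation, this gives the recursion $\mathrm{OPT} - g_{i+1} \leq (\mathrm{OPT} - g_i)(1 - 1/(n-k))$. Iterating $k$ times and applying $1 - x \leq e^{-x}$,
\[
\mathrm{OPT} - g_k \;\leq\; \mathrm{OPT}\cdot \left(1 - \tfrac{1}{n-k}\right)^{k} \;\leq\; \mathrm{OPT}\cdot e^{-k/(n-k)} \;=\; \mathrm{OPT}\cdot e^{-c/(1-c)},
\]
using $k \geq cn$ in the final identification. Rearranging gives $g_k \geq (1 - e^{-c/(1-c)})\,\mathrm{OPT}$, as claimed.

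The main obstacle is the per-step expected gain bound with denominator $n-k$. A direct Cauchy--Schwarz over all of $E \setminus S_i$ only yields the weaker denominator $n-i$, which telescopes exactly to a $k/n$ ratio rather than the sharper $1 - e^{-c/(1-c)}$. Closing this gap is exactly where the hypothesis $k \geq cn$ must enter: the $O$-restricted lower bound on $\sum_e f_{S_i}(e)^2$ must be paired with a structural upper bound on $T_i$ showing that when $k$ is a constant fraction of $n$, the total marginal mass $T_i$ cannot be much larger than $\mathrm{OPT} - f(S_i)$ scaled by $k/(n-k)$. This is the technical crux and the step I would expect to occupy the bulk of the argument.
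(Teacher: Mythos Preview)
Your setup matches the paper's, but the handling of the key per-step inequality diverges, and the path you outline for it does not work.

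The ``structural upper bound on $T_i$'' you are hoping for is not available. Already for the all-ones modular function $f(S)=|S|$ one has $T_0=n$, whereas your target inequality would force $T_0 \le (\mathrm{OPT}-f(S_0))\cdot (n-k)/k = n-k$; this fails for every $k<n$, so no hypothesis of the form $k\ge cn$ can rescue it. More generally $T_i$ can be of order $n$ times a typical marginal while $\mathrm{OPT}-f(S_i)$ is only of order $k$ times one, so any bound of the shape $T_i \lesssim (\mathrm{OPT}-f(S_i))$ with a constant depending only on $c$ is false.

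The paper does not try to upper-bound $T_i=M$ at all. It applies Cauchy--Schwarz \emph{globally} over $E\setminus S_i$---precisely the step you set aside as too weak---to get $M^2 \le |E\setminus S_i|\sum_e f_{S_i}(e)^2$, equivalently
\[
\mathbb{E}\bigl[f(S_{i+1})-f(S_i)\mid S_i\bigr] \;=\; \frac{\sum_e f_{S_i}(e)^2}{M} \;\ge\; \frac{M}{|E\setminus S_i|}.
\]
Now the \emph{lower} bound $M\ge \mathrm{OPT}-f(S_i)$ that you already recorded goes into the numerator, yielding $\mathbb{E}[\Delta\mid S_i]\ge (\mathrm{OPT}-f(S_i))/|E\setminus S_i|$ with no upper control on $M$ required. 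The paper then writes this denominator as $n-k$, sets $k'=(1/c-1)k$, and runs the standard greedy recursion; the hypothesis $k\ge cn$ enters only at this final rewriting, not through any bound on $T_i$. In short: global Cauchy--Schwarz combined with the lower bound on $M$ in the numerator, rather than $O$-restricted Cauchy--Schwarz followed by a nonexistent upper bound on $M$ in the denominator.
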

Our result on the sensitivity, proved in Appendix~\ref{appendix:PropGreedyproof} is as follows:
\begin{theorem}
	When we have $k = \Omega(n)$, then \Call{ProportionalGreedy}{} has a constant factor approximation and achieves the following sensitivity bound when the curvature is $c$:
	\begin{align*}
	O\left(\frac{{\left(1 - \sqrt{1-c}\right)}^2}{c} \cdot k\right).
	\end{align*}
	Further, this bound is tight.
\end{theorem}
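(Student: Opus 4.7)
The plan is to prove the upper bound by a stepwise coupling of the two runs of \Call{ProportionalGreedy}{} on $f$ and $f^{\setminus e^*}$, and the matching lower bound by exhibiting a single submodular instance whose output distribution shifts by the corresponding amount upon deletion of $e^*$.

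For the upper bound, I would run the two chains in parallel under an optimal coupling. While the partial solutions $S_{t-1}$ and $S_{t-1}'$ agree and $e^* \notin S_{t-1}$, the two selection distributions differ only through the reweighting that removes $e^*$ from the support, so the probability that the coupling fails at step $t$ is exactly $p_t(e^*) = f_{S_{t-1}}(e^*)/\sum_{e' \in E \setminus S_{t-1}} f_{S_{t-1}}(e')$. Curvature $c$ controls this quantity: $f_{S_{t-1}}(e^*) \leq f(e^*)$ while $f_{S_{t-1}}(e') \geq (1-c)f(e')$ for every unselected $e'$, and in the regime $k = \Omega(n)$ the denominator retains a constant fraction of the total marginal mass throughout the execution. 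Combined with the constant-factor approximation result already established for this regime, this gives usable control over each $p_t(e^*)$.

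Once coupling fails at some time $t$, the two chains genuinely diverge: $S_t$ contains $e^*$ and $S_t'$ contains some alternative $e'$, and from this point the two selection distributions act on different sets. I would bound the drift of the expected symmetric difference $|S_t \triangle S_t'|$ by a potential-function argument, writing a recurrence in which each subsequent step contributes an increment proportional to the per-step coupling gap induced by curvature. The closed form $(1-\sqrt{1-c})^2/c = (1-\sqrt{1-c})/(1+\sqrt{1-c})$ would arise as the fixed point of this recurrence, corresponding to the growth factor that a single coupling failure induces in the total symmetric difference; summing over the remaining $k$ steps then yields the claimed $O((1-\sqrt{1-c})^2/c \cdot k)$ bound.

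For the matching lower bound I would design an explicit monotone submodular instance of curvature exactly $c$ in which one distinguished element $e^*$ sits at the threshold between two nearly equal classes of elements, so that deleting $e^*$ reweights the proportional distribution on the survivors by precisely the factor that drives the recurrence above. Computing the earth mover's distance directly then matches the upper bound up to the additive $\varepsilon$ stated in the introduction. The main obstacle throughout is the post-divergence analysis: after the first coupling failure, submodularity and the curvature bound must be used in tandem to show that the symmetric difference cannot grow faster than the fixed-point rate, and matching the constant in the lower bound instance to the exponent $(1-\sqrt{1-c})^2/c$ is the most delicate calculation—this is where the assumption $k = \Omega(n)$ enters critically, since it prevents the normalizing denominators from shrinking and inflating the coupling gap beyond the fixed point.
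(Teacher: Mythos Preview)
Your coupling framework matches the paper: the two chains couple perfectly until $e^*$ is selected on the $f$-side (conditioning on $e_f \neq e^*$ makes the two selection distributions identical), and afterwards one bounds the per-step growth of the earth mover's distance. But you misidentify how the constant $(1-\sqrt{1-c})^2/c$ arises. It is not a fixed point of a recurrence; it is simply the per-step additive increment, obtained from a static optimization. Once $e^*$ is in the $f$-side partial solution, curvature and submodularity give $(1-c)\,\bm m'(e) \le \bm m(e) \le \bm m'(e)$ for the two marginal-gain vectors on the common remaining elements $R$, so the TV distance between the two proportional selection rules is at most $\max_{\bm\alpha \in [1-c,1]^R} \tfrac12 \sum_{e} \bigl| \bm\alpha(e)\bm m'(e)/(\bm\alpha^\top \bm m') - \bm m'(e)/(\bm 1^\top \bm m') \bigr|$. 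The paper shows this maximum is attained on the boundary $\{1-c,1\}^R$ and equals $(1-\sqrt{1-c})^2/c$, with equality when the mass splits between the two blocks in ratio $\sqrt{1-c}:1$. Summing this bound over at most $k$ post-divergence steps gives the result directly. Your proposal does not identify this optimization, and without it there is no route to the sharp constant.

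You also misplace the role of $k = \Omega(n)$. The per-step TV bound above depends only on the ratio of marginals, which curvature controls uniformly; the normalizing denominators are irrelevant, and the sensitivity upper bound in fact holds for all $k,n$. The hypothesis $k = \Omega(n)$ is used \emph{only} for the approximation guarantee. For the lower bound, the paper's construction realizes exactly the mass split that saturates the optimization: the ground set is $A \cup B \cup \{e^*\}$ with $|B|/n = (1-\sqrt{1-c})/c$, elements of $B$ drop from marginal $1$ to $1-c$ once $e^*$ is chosen, and $e^*$ has huge marginal so it is selected first with high probability; comparing $\Pr[e \in \mathcal A(f_{e^*})]$ with $\Pr[e \in \mathcal A(f')]$ separately for $e \in A$ and $e \in B$ then yields the matching lower bound.
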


\section{Distributed Algorithms}\label{sec:distributed}
We have seen that many of our hardness results rely on one element changing the ordering of marginals greatly. Therefore one could hope that splitting the ground set among many machines would enable us to obtain a non-trivial upper bound for all $k, n$, as that one element can't affect too much of the ground set overall. However, this is not necessarily the case.

For both the algorithm \Call{GreeDi}{} found in~\cite{barbosa2015power}, and for a general framework found in~\cite{barbosa2016new} for constructing distributed algorithms for centralized ones, we see that we still obtain high sensitivity. Details of \Call{GreeDi}{} and the general framework, along with the proofs of the following results, are found in Appendix~\ref{appendix:distributed}:
\begin{theorem}\label
	The algorithm \Call{GreeDi}{} attains $\Omega(k)$ sensitivity for some function for sufficiently large $n$, for all curvature $0 < c \leq 1$.
\end{theorem}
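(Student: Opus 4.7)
The plan is to lift the bad example for \Call{DeterministicGreedy}{} from Section~\ref{sec:simpleLowCurv} into the distributed setting, and argue that \Call{GreeDi}{}'s two-phase structure (per-machine greedy followed by an aggregation greedy on the union) actually preserves the same bad behavior. Concretely, for the given curvature $c\in(0,1]$ and sufficiently large $n$, I will set
\[
f(S) = Cx_1 + (1-cx_1)\sum_{i=2}^{k+1} x_i + \Bigl(1-\tfrac{c}{2}\Bigr)\sum_{i=k+2}^{n} x_i, \qquad x_i = \mathbf{1}[e_i \in S],
\]
with $C$ a large constant guaranteeing that any greedy step with access to $e_1$ selects it first. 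A direct calculation (identical to the verification in Section~\ref{sec:simpleLowCurv}) shows that $f$ is monotone submodular and has curvature exactly $c$, handling every $c\in(0,1]$ with a single family of functions.

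The first key step is to analyze \Call{GreeDi}{}$(f)$ for any partition of $E$ across the $m$ machines. Exactly one machine $M^*$ receives $e_1$, and its local greedy selects $e_1$ first; the secondary elements $e_2,\dots,e_{k+1}$ then have marginal $1-c$, strictly below the marginal $1-c/2$ of each tertiary element $e_{k+2},\dots,e_n$. Choosing $n$ large enough (say $n\ge 2mk$) guarantees $M^*$ holds at least $k-1$ tertiary elements, so its local output is $\{e_1\}$ together with $k-1$ tertiary elements and contains no secondary element. The remaining machines have no access to $e_1$, see secondary elements dominate tertiary ones, and output their present secondaries first. In the aggregation phase on the union, $e_1$ is again the unique largest element, after which suppressed secondaries ($1-c$) lose to tertiaries ($1-c/2$); hence \Call{GreeDi}{}$(f)$ returns $\{e_1\}$ plus $k-1$ tertiary elements.

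The second key step is the analysis of \Call{GreeDi}{}$(f^{\setminus e_1})$. On the restricted ground set, $f^{\setminus e_1}$ is modular with secondary weights $1$ strictly exceeding tertiary weights $1-c/2$, so every local greedy returns its present secondary elements first. Summing over machines, all $k$ secondary elements appear in the union, and the aggregation greedy returns exactly $\{e_2,\dots,e_{k+1}\}$. The two output sets are therefore disjoint, with symmetric difference $2k-1$ on \emph{every} realization of the partition. Consequently, for any coupling $\mathcal{D}$ between $\mathcal{A}(f)$ and $\mathcal{A}(f^{\setminus e_1})$, every pair $(S,S')$ in the support satisfies $|S\bigtriangleup S'|=\Omega(k)$, so the earth mover's distance and hence the sensitivity is $\Omega(k)$.

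The main obstacle I expect is the first step, specifically making the per-machine argument robust to \Call{GreeDi}{}'s (possibly random) partitioning: one must ensure that regardless of which machine receives $e_1$, that machine still contains enough tertiary elements for the tertiary-over-secondary marginal comparison to govern its entire greedy trajectory after selecting $e_1$. Choosing $n$ sufficiently large relative to $mk$ addresses this deterministically for every partition, which is crucial since we need a pointwise bound (not merely an in-expectation one) in order to conclude an $\Omega(k)$ bound on the earth mover's distance rather than on the expected difference over a single randomized run.
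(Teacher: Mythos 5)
Your overall strategy --- reusing the bounded-curvature hard instance from Section~\ref{sec:simpleLowCurv} inside \Call{GreeDi}{}, and arguing that the machine holding $e_1$ picks up tertiaries while the other machines supply secondaries, so that the aggregation greedy collapses to the centralized analysis --- is exactly the paper's plan. (The paper takes $n/2$ secondaries rather than your $k$, but both variants work; you should also say a word about the final $\arg\max$ over $\{f(S'), f(S_i)\}$, which returns a set containing $e_1$ because $C$ is large and $S'$ is greedy on the aggregated union.)

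The one genuine gap is the claim that taking $n \ge 2mk$ makes $e_1$'s machine hold at least $k-1$ tertiary elements \emph{deterministically, for every partition}. That claim is false: \Call{GreeDi}{} assigns each element independently and uniformly at random to one of the $m$ machines, so for any $n$ there are positive-probability assignments in which $e_1$'s machine receives few or no tertiaries. The correct statement is probabilistic: with $m < n^{1-\Omega(1)}$ and $k$ fixed, the number of tertiaries landing on $e_1$'s machine is binomial with mean $\omega(k)$, hence at least $k-1$ with probability $1 - o_n(1)$; this is precisely what Lemma~\ref{lem:GreeDiLem} establishes. You also misdiagnose why you wanted determinism: a pointwise bound is not required for the conclusion. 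Since $d_{\mathrm{EM}}$ is an infimum over couplings of an expectation, and every coupling puts probability $1-o_n(1)$ on pairs with symmetric difference $2k-1$ (and a nonnegative difference otherwise), the expectation is still at least $(1-o_n(1))(2k-1) = \Omega(k)$. Replacing your deterministic sentence with this concentration bound closes the gap and recovers the paper's proof.
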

\begin{theorem}
	Let $\mathcal{A}$ be some algorithm with $\Omega(k)$ sensitivity, as shown by Theorem~\ref{mainhard}. Then $\mathcal{A}_d$, a distributed algorithm constructed via the framework in~\cite{barbosa2016new} from $\mathcal{A}$ also has $\Omega(k)$ sensitivity for some function.
\end{theorem}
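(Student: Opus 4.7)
The plan is to port the near-equality hardness construction from Theorem~\ref{mainhard} through the composable-coreset-style framework of~\cite{barbosa2016new}. Recall that in that framework the ground set $V$ is randomly partitioned into $V_1,\ldots,V_m$; each machine computes $T_i = \mathcal{A}(f|_{V_i},k)$; and the aggregator returns either $\mathcal{A}(f|_{\bigcup_i T_i},k)$ or $\arg\max_i f(T_i)$, whichever is better. The crucial structural observation is that the final aggregation call is itself an invocation of $\mathcal{A}$, so if we can arrange that the hard input from the near-equality scheme appears essentially unchanged as the input to that last call, then the $\Omega(k)$ sensitivity proved for $\mathcal{A}$ in Theorem~\ref{mainhard} carries over directly.

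First I would take the near-equality function $f$ used in the proof of Theorem~\ref{mainhard}, which partitions elements into a small block of ``critical'' indices $\{e_1,\ldots,e_{2I_{\max}+1}\}$ together with a large pool of identical low-weight elements $e_{2I_{\max}+2},\ldots,e_n$, and embed it into the distributed instance on the full ground set $V$. Because the critical block has constant size relative to $n$, one can choose $n$ large enough (or lightly replicate the critical structure) that with probability $1$ over the random partition at least one machine $M_j$ receives an induced copy of the near-equality structure, including the pivotal element $e^*=e_j$. On that machine $\mathcal{A}$ is forced (by the same argument used in Theorem~\ref{mainhard}) to place $e^*$ in $T_j$ with probability $\Omega(1)$, so $e^* \in \bigcup_i T_i$. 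Removing $e^*$ from the ground set strips it from $T_j$ without affecting any other $T_i$, since the low-weight tail elements are interchangeable and every other machine sees an essentially modular function.

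Next I would analyze the aggregation call. Its input is $\bigcup_i T_i$ with and without $e^*$; by construction these two sets differ by exactly one element, and the induced marginal orderings on them reproduce the near-equality configuration (up to renaming). Since the decision rule underlying Algorithm~\ref{alg:genseq} depends only on the ordinal positions of marginals, Lemma~\ref{lem:hardmain} applies to this aggregation call verbatim, giving two aggregated outputs whose symmetric difference is $\Omega(k)$ in expectation. The $\max_i f(T_i)$ fallback does not rescue the situation: for every $i \neq j$ the function restricted to $V_i$ is essentially modular on the low-weight tail, so $f(T_i)$ is bounded well below $f$ of the aggregated output on both sides, meaning the aggregated output is always chosen.

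The main obstacle will be the randomness of the partition: one must ensure that, regardless of how the critical block splits, the aggregation call still receives an input faithful to the near-equality configuration. The cleanest way around this is to pad the construction so that the critical block is either small enough to fit (with probability $1$) on a single machine, or is replicated $\Theta(m)$ times so that some copy survives intact on some machine; either choice leaves the remainder of the proof a direct invocation of Lemma~\ref{lem:hardmain} on the aggregation step, which is where the centralized hardness result is reused.
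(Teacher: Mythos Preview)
Your proposal misidentifies the framework of~\cite{barbosa2016new}. What you describe—partition once, compute $T_i=\mathcal{A}(f|_{V_i})$, aggregate—is essentially \Call{GreeDi}{}. The actual framework (see the description preceding the theorem in Appendix~\ref{appendix:distributed}) runs for $\Theta(1/\varepsilon)$ rounds, maintaining a growing pool $C_r$ that is fed to every machine in round $r+1$; the final output is the best $S_{i,r}$ ever seen. The paper's proof exploits this multi-round structure in an essential way, and your single-round analysis does not transfer.

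The concrete gap is your handling of the random partition. You need the critical block $\{e_1,\ldots,e_{2I_{\max}+1}\}$ intact on one machine, but under a uniform random partition into $m=\omega_n(1)$ parts the probability that any fixed set of size $\geq 2$ lands on a single machine is $o(1)$, not $1$. Your fallback of replicating the block $\Theta(m)$ times changes the function in a way that breaks the ordinal structure the near-equality argument relies on (replicas of $e_j$ would interact with each other's marginals). The paper sidesteps this entirely: it designs a function with $k$ very large elements $e_1,\ldots,e_k$ so that, with high probability, each machine receives at most one of them and is forced (by the constant-factor guarantee) to include it in its output. Hence after the \emph{first} round the pool $C_1$ already contains the entire critical configuration; from round~2 onward every machine runs $\mathcal{A}$ on an input containing all of $\{e_1,\ldots,e_k\}$ together with enough elements from the other blocks, at which point the centralized analysis of Theorem~\ref{mainhard} applies verbatim. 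You are missing both the correct description of the framework and the idea of using round~1 to assemble the hard instance inside the pool rather than hoping the partition delivers it to one machine.
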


\section{Conclusion}\label{sec:conclusion}

While we have given some strong negative results about sensitivity for monotone submodular maximization with a cardinality constraint, some obvious questions remain---namely does there exist any constant-factor algorithm for monotone submodular maximization that has a non-trivial sensitivity bound? As we have seen, it looks unlikely when using discrete algorithms, and so continuous algorithms would be the most promising.

We can also ask if there other settings in which our results imply other hardness bounds? We have seen that our centralized results apply to distributed algorithms, but as example it would be interesting to know what, if anything, our results imply in the streaming setting?

A final, and important question is, does there exist a relevant subclass of submodular functions such that we can obtain a non-trivial sensitivity bound? From what we have seen, it doesn't seem like that subclass is bounded curvature. An alternative could be $M^{\natural}$-concave functions~\cite{murota2010submodular}, which are submodular functions equipped with an exchange property. A sufficient question to answer would be: is the solution to the weighted assignment problem stable when deleting an element? We note that the argument for the minimum spanning forest problem in~\cite{varma2019average} can be used to show that the matroid rank function has $O(1)$ sensitivity. As the unweighted assignment problem can be described by a matroid rank function, we see that $M^{\natural}$-concave functions form an interesting ``midpoint'' class that could warrant further investigation.

\newpage
\bibliographystyle{plain}
\bibliography{AverageSensitivity}
\newpage
\appendix

\section{Proofs of missing results}\label{appendix:proofs}
\subsection{Claim~\ref{claim:PropClaim1}}
\begin{proof}
	From the choice of $A$, we choose $e_1$ with probability at least $1-\delta$ in the first step.
	When this happens, the marginal gains of $e_i$ for $i > n/2$ are $0$ in subsequent steps, and hence we choose only elements $e_i$ for $i \leq n/2$.
\end{proof}
\subsection{Claim~\ref{claim:PropClaim2}}
\begin{proof}
	By the choice of $B_i$ and $C$, we see that we successively choose the elements $B_{n/2+1}, \ldots, B_n$ with probability at least $1-\delta$. We then see that the probability of selecting each of these elements in turn is at least $(1-\delta)^k$, and using Bernoulli's inequality gives our result.
\end{proof}
\subsection{Claim~\ref{claim:randClaim1}}
\begin{proof}
	We see that $p_i$ is clearly the probability that $e_1$ is not selected after $i$ iterations. Additionally, we see after selecting $e_1$, that all further elements under consideration will be from the set $e_{2k+2}, \ldots, e_n$, giving our claim.
\end{proof}
\subsection{Claim~\ref{claim:randClaim2}}
\begin{proof}
	As we select from the top $k$ marginals, at the first step we will select one of $e_2, \ldots, e_{k+1}$, and then add $e_{k+2}$ to the set of elements that we could possibly choose.
	
	After $k$ steps, the element with maximum index we could include is $e_{2k+1}$, giving our claim.
\end{proof}

\subsection{Lemma~\ref{lem:hardmain}}
\begin{proof}
	Note that the sensitivity can be bounded from below by the following total variation distance between the probabilities of selecting elements $e_i$ in $\mathcal{A}(f)$ and $\mathcal{A}(f^{\setminus e^*})$:
	\begin{align*}
	d_{\mathrm{EM}}(\mathcal{A}(f), \mathcal{A}(f^{\setminus e^*})) \geq \sum_{i=1}^n |P_k(e_i) - P_k^{\setminus e^*}(e_i)|
	\end{align*}
	we can again bound this from below by restricting the summation to elements in the set $E_0$. Then noting that $\Omega(k)$ steps are still taken when we start selecting from $E_0$, and that $P_{i^*}(e^*)$ is $\Omega(1)$, we have:
	\begin{align*}
	\sum_{i=1}^n |P_k(e_i) - P_k^{\setminus e^*}(e_i)| \geq \sum_{e \in E_0} |P_k(e) - P_k^{\setminus e^*}(e)| \geq P_{i^*}(e^*) \cdot (k-i^*) = \Omega(k)
	\end{align*}
	as required.
\end{proof}

\section{Proof of Randomized Greedy Algorithm}\label{appendix:PropGredFactor}
In this section, we prove the following:
\begin{theorem}
	\Call{ProportionalGreedy}{} in Algorithm~\ref{alg:propgreedy} is a $(1-e^{-c/(1-c)})$-approximation algorithm when $k \geq cn$ for $c \in [0,1]$.
\end{theorem}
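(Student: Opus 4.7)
The plan is to track the expected residual gap $\phi_i \defeq V - \mathbb{E}[f(S_i)]$ (with $V \defeq f(\mathrm{OPT})$), establish a multiplicative recursion of the form $\phi_i \leq \phi_{i-1}\bigl(1 - 1/(n-k)\bigr)$, and telescope across the $k$ iterations.

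First, I would condition on the history $S_{i-1}$ and compute the one-step expected gain. Writing $g(e) \defeq f_{S_{i-1}}(e)$ for $e \in E \setminus S_{i-1}$ and $\sigma \defeq \sum_{e} g(e)$, the selection rule of ProportionalGreedy yields
\[
\mathbb{E}\bigl[f(S_i) - f(S_{i-1}) \,\big|\, S_{i-1}\bigr] \;=\; \frac{\sum_{e \in E \setminus S_{i-1}} g(e)^2}{\sigma}.
\]
Monotone submodularity supplies two ingredients to attack this quantity: (a) a telescoping identity yields $\sigma \geq f(E) - f(S_{i-1}) \geq V - f(S_{i-1})$, and (b) restricting the sum $\sum_e g(e)^2$ to $\mathrm{OPT} \setminus S_{i-1}$ (a set of size at most $k$) and applying Cauchy--Schwarz yields $\sum_e g(e)^2 \geq (V - f(S_{i-1}))^2/k$.

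Second, combining these ingredients with a matching upper bound on $\sigma$ should produce the per-step inequality
\[
\mathbb{E}\bigl[f(S_i) - f(S_{i-1}) \,\big|\, S_{i-1}\bigr] \;\geq\; \frac{V - f(S_{i-1})}{n - k},
\]
from which $\phi_i \leq \phi_{i-1}\bigl(1 - 1/(n-k)\bigr)$ follows by linearity after taking expectations. Telescoping and applying the elementary bound $(1 - 1/(n-k))^k \leq e^{-k/(n-k)}$, together with $k \geq cn$ (so that $k/(n-k) \geq c/(1-c)$), gives
\[
\phi_k \;\leq\; V \cdot e^{-k/(n-k)} \;\leq\; V \cdot e^{-c/(1-c)},
\]
which rearranges to $\mathbb{E}[f(S_k)] \geq \bigl(1 - e^{-c/(1-c)}\bigr) V$, as required.

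The main obstacle is the per-step bound with $n-k$ (rather than $n-i+1$) in the denominator. Naive Cauchy--Schwarz over all $|E \setminus S_{i-1}| = n-i+1$ remaining elements only yields $(V - f(S_{i-1}))/(n-i+1)$, which telescopes to the strictly weaker approximation ratio $k/n$. Improving the denominator to $n-k$ requires carefully separating $E \setminus S_{i-1}$ into the at most $k$ elements of $\mathrm{OPT} \setminus S_{i-1}$ and the remaining at most $n-k$ ``non-OPT'' elements, and arguing that the non-OPT contribution to $\sigma$ cannot inflate the denominator beyond a factor of $(n-k)/k$ times the numerator without contradicting the optimality of $\mathrm{OPT}$. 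The hypothesis $k = \Omega(n)$ enters here precisely to keep $(n-k)/k = O(1)$, so that the resulting per-step decrement is bounded away from zero and the final exponent $c/(1-c)$ is finite.
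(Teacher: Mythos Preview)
Your overall framework---track $\phi_i = V - \mathbb{E}[f(S_i)]$, derive a multiplicative per-step recursion, telescope, then invoke $k \geq cn$---is exactly the paper's. The divergence is in how the per-step inequality is obtained.

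The paper does \emph{not} go through your ingredient (b). It takes precisely the route you call ``naive'': it chains $V - f(S_i) \leq \sum_{j\in\mathrm{OPT}} f_{S_i}(e_j) \leq M$ (this is your ingredient (a), read as an upper bound on the gap rather than a lower bound on $\sigma$) with a single Cauchy--Schwarz step over all remaining elements, written in the paper as $M^2 \leq (n-k)\sum_{e\in E\setminus S_i} f_{S_i}(e)^2$, which rearranges to $M \leq (n-k)\,\mathbb{E}[f(S_{i+1})-f(S_i)]$. Combining gives $V - f(S_i) \leq (n-k)\,\mathbb{E}[f(S_{i+1})-f(S_i)]$ directly; the recursion then runs with $k' = (1/c-1)k$. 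No separation of $\mathrm{OPT}$ from non-$\mathrm{OPT}$ elements, and no upper bound on $M$, is invoked.

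Your alternative route has a genuine gap. To lower-bound $\sum_e g(e)^2/\sigma$ you need $\sigma$ bounded from \emph{above}, but your (a) supplies only a lower bound on $\sigma$, which points the wrong way. Ingredient (b) alone leaves you with $\mathbb{E}[\text{gain}] \geq (V - f(S_{i-1}))^2/(k\sigma)$, and there is no general upper bound on $\sigma$ in terms of $V - f(S_{i-1})$: elements outside $\mathrm{OPT}$ can each have marginal arbitrarily close to the best $\mathrm{OPT}$ marginal without violating optimality, so your sketched ``separate $\mathrm{OPT}$ and non-$\mathrm{OPT}$'' argument cannot force $\sigma \leq \tfrac{n-k}{k}\,(V - f(S_{i-1}))$. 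The paper sidesteps this entirely by using $M=\sigma$ only as an intermediary---the gap sits below it and $(n-k)\,\mathbb{E}[\text{gain}]$ sits above it---rather than trying to control $\sigma$ directly.
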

\begin{proof}
	The proof will be very similar to the one for the  deterministic greedy algorithm.
	Let the optimal set $\mathrm{OPT} = \{e_1, \ldots, e_k\}$, where the elements are in arbitrary order (and we can assume $|\mathrm{OPT}|=k$ by monotonicity).
	We denote the partial solution after $i$ steps as $S_i$.
	Note that $S_i$ is a random variable.
	By the random selection rule, we have
	\begin{align*}
	\mathbb{E}[f(S_{i+1}) - f(S_i)] = \frac{1}{M}\sum_{e \in E \setminus S_i} {f_{S_i}(e)}^2
	\end{align*}
	where $M = \sum_{e \in E \setminus S_i} f_{S_i}(e)$.
	Then, We have
	\begin{align*}
	f(\mathrm{OPT}) & \leq
	f(\mathrm{OPT} \cup S_i) \tag{by monotonicity} \\
	&= f(S_i) + \sum_{j=1}^k f_{S_i \cup \{e_1, \ldots, e_{j-1} \}}(e_j) \\
	& \leq f(S_i) + \sum_{j=1}^k f_{S_i}(e_j). \tag{by submodularity} \\
	& \leq f(S_i) + M \tag{by $\sum_{j=1}^k f_{S_i}(e_j) \leq M$} \\
	& \leq f(S_i) + (n-k)\mathbb{E}[f(S_{i+1}) - f(S_i)]. \tag{by $M^2 \leq (n-k)\sum_{e \in E \setminus S_i} f_{S_i}(e)^2$.}
	\end{align*}
	however, in our case we have $k = cn$, so we can rearrange this to:
	\begin{equation}
	f(\mathrm{OPT}) \leq f(S_i) + \left(\frac{1}{c}-1\right)k\mathbb{E}[f(S_{i+1}) - f(S_i)]
	\end{equation}
	Rearranging the inequality, and letting $k' = \left(\frac{1}{c}-1\right)k$ we obtain, as in the deterministic greedy:
	\begin{align*}
	f(\mathrm{OPT}) - f(S_i) \leq k'\mathbb{E}[f(S_{i+1}) - f(S_i)]
	\end{align*}
	Following the proof of the deterministic greedy algorithm, we then get:
	\begin{align*}
	\mathbb{E}[f(S_{k})] \geq \left(1 - {\left(1- \frac{1}{k'}\right)}^k\right)f(\mathrm{OPT}).
	\end{align*}
	From here, we can use the fact that $k' = \left(\frac{1}{c}-1\right)k$ and deterministic limits to show:
	\begin{align*}
	\mathbb{E}[f(S_{k})] \geq \left(1 - \frac{1}{e^{c/(1-c)}}\right) f(\mathrm{OPT})
	\end{align*}
	as required.
\end{proof}
Note that as $c \rightarrow 0$ (which brings us to the general case) that the approximation factor tends to $0$.

\section{The Proportional Greedy Algorithm}\label{appendix:PropGreedyproof}
In Section~\ref{sec:sequential}, we showed that the trivial upper bound on sensitivity is tight for proportional sequential algorithms.
We also showed that for a large class of constant factor approximation algorithms, where the probability of selecting an element depends only on its ordinal position among marginals, that the trivial upper bound on sensitivity is also tight.

Hence in this section, we introduce an algorithm that only has a constant-factor approximation when $k = \Omega(n)$, but attains non-trivial sensitivity for all functions.

\subsection{Upper Bound for the Randomized Greedy Algorithm}

In this section, we prove the following:
\begin{theorem}\label{thm:ub-randomized-greedy}
	The sensitivity of the randomized greedy algorithm is
	\[
	O\left(\frac{{(1-\sqrt{1-c})}^2}{c}\cdot k \right).
	\]
	In particular, the sensitivity is $O(1)$ when $c = O(1/k)$.
\end{theorem}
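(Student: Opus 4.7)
The plan is to construct a coupling between the executions of \Call{ProportionalGreedy}{} on $f$ and on $f^{\setminus e^*}$, run the two together, and bound the expected symmetric difference in their outputs. Process the $k$ steps jointly: whenever the two runs currently share the same partial solution $S$ (with $e^* \notin S$), let $W = \sum_{e \in E \setminus S} f_S(e)$ and $W' = W - f_S(e^*)$. With probability $W'/W$, both runs pick the same element drawn proportionally to $f_S$ from $E \setminus (S \cup \{e^*\})$, preserving their coupling; with the remaining probability $f_S(e^*)/W$, the $f$-run selects $e^*$ while the $f^{\setminus e^*}$-run independently draws from $E \setminus (S \cup \{e^*\})$ proportionally to $f_S$. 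A direct marginal check confirms that this reproduces the algorithm's selection distribution on each side.

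Let $T$ be the first step at which the $f$-run selects $e^*$ (set $T = \infty$ if this never occurs). Before step $T$ both runs share the same partial set, so all symmetric difference is created from step $T$ onward; crudely bounding that by twice the number of remaining selections gives
\[
|\mathcal{A}(f) \bigtriangleup \mathcal{A}(f^{\setminus e^*})| \le 2(k - T + 1)\,\mathbf{1}_{T \le k},
\]
which, after switching the order of summation, yields $d_{\mathrm{EM}}(\mathcal{A}(f), \mathcal{A}(f^{\setminus e^*})) \le 2\sum_{j=1}^k P(T \le j)$.

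It remains to bound $P(T \le j)$. Under the coupling, the per-step probability of the $f$-run picking $e^*$ at step $i$ (given $T \ge i$) is exactly $f_{S_i}(e^*)/W_i$. The two-sided curvature inequality $(1-c)f(e) \le f_{S_i}(e) \le f(e)$ for any $e \notin S_i$ controls both numerator and denominator: $f_{S_i}(e^*) \le f(e^*)$ while $W_i \ge (1-c)\sum_{e \in E \setminus S_i} f(e)$. The $k = \Omega(n)$ hypothesis, already needed for a constant-factor approximation, ensures that $\sum_{e \in E \setminus S_i} f(e)$ remains a constant fraction of the initial total mass throughout the coupled trajectory, so each step contributes probability $O(1/k)$ with an explicit curvature-dependent coefficient. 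Summing over $j \le k$ and invoking the identity
\[
\frac{(1-\sqrt{1-c})^{2}}{c} \;=\; \frac{c}{(1+\sqrt{1-c})^{2}}
\]
yields the advertised bound and collapses to $O(1)$ precisely when $c = O(1/k)$.

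The main obstacle will be extracting the sharp constant $(1-\sqrt{1-c})^{2}/c$ rather than the looser $c$ that a one-sided use of curvature would produce. The finer factor should emerge from a \emph{balanced} application of the curvature inequality: combining the upper bound on $f_{S_i}(e^*)$ in the numerator with the lower bound on each $f_{S_i}(e)$ in $W_i$ in a way that produces two independent $\sqrt{1-c}$ slacks, one in each factor of the $(1+\sqrt{1-c})^{2}$ denominator, together with careful control of the residual mass $\sum_{e \not\in S_i} f(e)$ afforded by the $k = \Omega(n)$ hypothesis. The matching $\Omega((1-\sqrt{1-c})^{2}/c \cdot k)$ lower bound already promised in the main text is a useful sanity check that this level of bookkeeping is tight.
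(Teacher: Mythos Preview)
Your coupling up to the step $T$ at which the $f$-run selects $e^*$ is fine and matches the paper's Proposition~\ref{prop:same-distribution}. The gap is everything after that point: you discard the post-$T$ structure with the crude bound $|\mathcal{A}(f)\bigtriangleup\mathcal{A}(f^{\setminus e^*})|\le 2(k-T+1)$ and then try to recover the curvature factor from $P(T\le j)$. That cannot work. Nothing in the hypotheses bounds $f(e^*)$ relative to the other weights; in the paper's own lower-bound instance (Theorem~\ref{thm:lb-randomized-greedy}) one takes $f(e^*)=M=\Omega(n^2)$, so $P(T=1)\approx 1$ and your estimate $2\sum_{j=1}^k P(T\le j)$ collapses to the trivial $2k$. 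The $k=\Omega(n)$ hypothesis does not rescue this---it controls the residual mass $\sum_{e\notin S_i}f(e)$ but says nothing about the ratio $f(e^*)/\sum_e f(e)$. (In fact the sensitivity bound holds for all $k,n$; the $k=\Omega(n)$ assumption is only for the approximation ratio.)

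The factor $(1-\sqrt{1-c})^2/c$ enters in a completely different place: it is the \emph{per-step} growth in earth mover's distance \emph{after} $e^*$ has been selected on one side. The paper's Lemma~\ref{lem:em-one-iteration} shows that if the $f$-run sits at $S+e^*$ and the $f'$-run at $T$, then one more iteration increases $d_{\mathrm{EM}}$ by at most $(1-\sqrt{1-c})^2/c$. The point is that the two selection distributions at that step are both proportional samplers over the same residual set $R$, with weight vectors $\bm m(e)=f_{S+e^*}(e)$ and $\bm m'(e)=f'_T(e)$ satisfying $(1-c)\bm m'(e)\le \bm m(e)\le \bm m'(e)$; the maximum possible $d_{\mathrm{TV}}$ between two such normalized vectors is an explicit optimization over $\bm\alpha\in[1-c,1]^R$ whose value is exactly $(1-\sqrt{1-c})^2/c$, attained when the mass splits as $M_1=\sqrt{1-c}\,M_{1-c}$. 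Summing this per-step bound over the at most $k$ remaining iterations (Corollary~\ref{cor:em-one-iteration}) and combining with the pre-$T$ coupling gives the theorem. So the missing idea is: do not bound the post-$T$ contribution trivially; instead couple the two proportional samplers step by step and control their one-step TV distance via the curvature-induced weight ratio.
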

Let $\mathcal{A}$ denote the randomized greedy algorithm.
Let $\mathcal{I}$ denote one iteration of $\mathcal{A}$, that is, given a function $f\colon 2^E \to \mathbb{R}$, it outputs $e \in E$ with probability $f(e)/ \sum_{e' \in E}f(e')$.

In what follows, we fix a monotone submodular function $f\colon 2^E \to \mathbb{R}$ and let $c \in [0,1]$ be the curvature of $f$.
Let $e^* \in E$ be the element that defines the sensitivity of $\mathcal{A}$, that is, $d_{\mathrm{EM}}(\mathcal{A}(f), \mathcal{A}(f^{\setminus e}))$ is maximized when $e = e^*$.
Letting $f' = f^{\setminus e^*}$, our goal is to bound $d_{\mathrm{EM}}(\mathcal{A}(f), \mathcal{A}(f'))$ from above.

The following lemma shows that, given that $e^*$ is already selected in the process of $\mathcal{A}$ on $f$, the earth mover's distance between $\mathcal{A}$ on $f$ and $f'$ does not increase by much in one iteration.
\begin{lemma}\label{lem:em-one-iteration}
	For any $S,T \subseteq E - e^*$, we have
	\[
	d_{\mathrm{EM}}\bigl(S + \mathcal{I}(f_{S + e^*}),T + \mathcal{I}(f'_T)\bigr) \leq |S\bigtriangleup T| + \frac{{\left(1-\sqrt{1-c}\right)}^2}{c}.
	\]
\end{lemma}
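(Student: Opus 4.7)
The plan is to bound $d_{\mathrm{EM}}\bigl(S + \mathcal{I}(f_{S + e^*}),T + \mathcal{I}(f'_T)\bigr)$ by constructing an explicit coupling of the two random sets. Write $\mu$ for the law of $U := \mathcal{I}(f_{S+e^*})$, supported on $E \setminus (S+e^*)$ with $\mu(e) \propto f_{S+e^*}(e)$, and $\nu$ for the law of $V := \mathcal{I}(f'_T)$, supported on $E \setminus (T+e^*)$ with $\nu(e) \propto f_T(e)$ (using $f'_T(e) = f_T(e)$ for $e \neq e^*$). The goal is to exhibit a coupling $\pi$ of $(U,V)$ for which $\mathbf{E}_\pi |(S+U)\bigtriangleup(T+V)|$ meets the stated bound.

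First I would set up the bookkeeping. Because $U \notin S$ and $V \notin T$, a case analysis of $(S+U) \bigtriangleup (T+V)$ based on where $U, V$ fall relative to $S, T$ yields the pointwise estimate
\[
|(S+U) \bigtriangleup (T+V)| \leq |S \bigtriangleup T| + 2\cdot \mathbf{1}[U \neq V],
\]
with strict savings of $2$ whenever $U \in T \setminus S$ and $V \in S \setminus T$. So the coupling should (i) match mass on the common support $E \setminus (S \cup T \cup \{e^*\})$ whenever possible, and (ii) pair $\mu$'s mass on $T \setminus S$ with $\nu$'s mass on $S \setminus T$ to exploit those savings, thereby absorbing the would-be factor of two in the passage from a total-variation estimate to expected symmetric difference.

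Next I would invoke curvature to pin down $\mu$ and $\nu$ pointwise. For any $e \notin S \cup T \cup \{e^*\}$, the chain $f_{S+e^*}(e),\, f_T(e) \geq f_{E \setminus e}(e) \geq (1-c)\, f(e)$ combined with submodularity yields the two-sided comparison $f_{S+e^*}(e)/f_T(e) \in [\,1-c,\, 1/(1-c)\,]$; the same bounds aggregate to give a corresponding control on the normalizers $M_\mu = \sum_{e \in E \setminus (S+e^*)} f_{S+e^*}(e)$ and $M_\nu = \sum_{e \in E \setminus (T+e^*)} f_T(e)$. Therefore the density ratio $\mu(e)/\nu(e)$ lies in an interval depending only on $c$.

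The main obstacle is converting these ratio bounds into the sharp constant $(1-\sqrt{1-c})^2/c$. I would reduce to a worst case of two atoms via a mass-transport argument: merging two atoms whose density ratios are equal preserves the relevant deviation while reducing the support, so the extremal configuration is two-atomic. In that case the quantity becomes an explicit function $c\,p(1-p)/(1 - c(1-p))$ of a single mass parameter $p \in [0,1]$, and setting its derivative to zero yields the quadratic $c p^{2} + 2(1-c) p - (1-c) = 0$, whose positive root is $p^* = \sqrt{1-c}\,(1 - \sqrt{1-c})/c$. Using the simplification $1 - c(1-p^*) = \sqrt{1-c}$, the maximum evaluates to $(1-\sqrt{1-c})^2/c$, and combining this with the cross-pairing savings from the bookkeeping step gives the stated bound.
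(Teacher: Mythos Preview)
Your overall plan matches the paper's: couple the two random selections, split off the contribution from the common region $R = E \setminus (S \cup T + e^*)$, bound the discrepancy on $R$ via a curvature-driven ratio constraint, reduce to a two-point extremal configuration, and optimize. The paper phrases the off-$R$ savings as conditioning on the event $\{e_f \in R\}\cap\{e_{f'} \in R\}$ (observing that outside this event the symmetric difference does not grow) rather than as an explicit cross-pairing, but the underlying idea is the same.

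The gap is in your optimization step. The two-point expression $cp(1-p)/(1-c(1-p))$ and its maximum $(1-\sqrt{1-c})^2/c$ arise precisely when the unnormalized weight ratio is confined to the \emph{one-sided} interval $[1-c,1]$: the paper parameterizes $f_{S+e^*}(e)=\alpha(e)\,f_T(e)$ with $\alpha(e)\in[1-c,1]$, and it is this one-sided constraint that produces your formula. Your curvature chain via $f_{E\setminus e}(e)\geq (1-c)f(e)$ only yields the \emph{two-sided} bound $\alpha(e)\in[1-c,\,1/(1-c)]$, and under that weaker constraint the extremal two-point total-variation distance is $c/(2-c)$, strictly larger than $(1-\sqrt{1-c})^2/c$ on $(0,1)$. (Substituting $\gamma=1-(1-c)^2$ for $c$ in your own formula gives optimum $(1-\sqrt{1-\gamma})^2/\gamma=c^2/(c(2-c))=c/(2-c)$.) So the constant you write down does not follow from the premise you established. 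The paper instead asserts $f_{S+e^*}(e)\leq f_T(e)$ directly ``by submodularity''; your route through $f_{E\setminus e}$ does not reach that one-sided inequality, and you would need a separate argument for it.

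A smaller point: ``cross-pairing absorbs the factor of two'' is only gestured at. The paper's conditioning on $R$ makes this bookkeeping explicit; you would need to carry out the analogous accounting to close the argument.
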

\begin{proof}
	Let $e_f = \mathcal{I}(f_{S+e^*})$ and $e_{f'} = \mathcal{I}(f'_{T})$.
	Note that $e_f$ and $e_{f'}$ are random variables and we assume they are independent.
	Conditioned on the event $e_f \in T \setminus S$ or the event $e_{f'} \in S \setminus T$, we have
	\[
	\bigl|(S + e_f) \bigtriangleup (T + e_{f'})\bigr|\leq |S\bigtriangleup T|.
	\]
	Note that the condition can be rephrased as $e_f \not \in R$ or $e_{f'} \not \in R$, where $R = E \setminus ((S \cup T) +e^*)$.
	Let $e_f^R$ (resp., $e_{f'}^R$) be $e_f$ (resp., $e_{f'}$) conditioned on that $e_f \in R$ (resp., $e_{f'} \in R$).
	Then, we have
	\begin{align*}
	& d_{\mathrm{EM}}(S + e_f,T + e_{f'}) \\
	& \leq
	\Pr[e_f \not \in R \vee e_{f'} \not \in R] \cdot |S \bigtriangleup T|
	+ \Pr[e_f \in R \wedge e_{f'} \in R] \cdot        d_{\mathrm{EM}}\bigl(S + e_f^R,T + e_{f'}^R \bigr) \\
	& \leq
	\Pr[e_f \not \in R \vee e_{f'} \not \in R] \cdot |S \bigtriangleup T|
	+
	\Pr[e_f \in R \wedge e_{f'} \in R] \cdot |S \bigtriangleup T| \\
	& \quad + \Pr[e_f \in R \wedge e_{f'} \in R] \cdot  d_{\mathrm{TV}}\bigl(e_f^R , e_{f'}^R \bigr) \\
	& \leq |S \bigtriangleup T| + d_{\mathrm{TV}}\bigl(e_f^R, e_{f'}^R\bigr).
	\end{align*}
	
	To prove the lemma, it suffices to show that $d_{\mathrm{TV}}\bigl(e_f^R, e_{f'}^R\bigr) \leq {(1-\sqrt{1-c})}^2/c$.
	To this end, define vectors $\bm{m},\bm{m}' \in \mathbb{R}^R$ as $\bm{m}(e) = f_{S+e^*}(e)$ and $\bm{m}'(e) = f'_T(e)$ for each $e \in R$.
	Then, we have $(1-c)\bm{m}'(e) \leq \bm{m}(e) \leq \bm{m}'(e)$ for every $e \in R$, where the first inequality is due to $f$ having curvature $c$ and the second inequality is due to the submodularity of $f$.
	Then, we have
	\begin{align*}
	& d_{\mathrm{TV}}\bigl(e_f^R, e_{f'}^R\bigr)
	\leq \max_{\bm{\alpha} \in {[1-c,1]}^R}h(\bm{\alpha}),
	\text{ where } h(\bm{\alpha}) :=
	\frac12 \sum_{e \in R} \left|\frac{\bm{\alpha}(e) \bm{m}'(e)}{\bm{\alpha}^\top \bm{m}'} - \frac{\bm{m}'(e)}{\bm{1}_R^\top \bm{m}'}  \right|.
	\end{align*}
	In what follows, we compute the maximum value of $h$.
	\begin{claim}
		The function $h$ is maximized by some $\bm{\alpha} \in {\{1-c,1\}}^R$.
	\end{claim}
	\begin{proof}
		For $e \in R$, let $\sigma_e = 1$ if $\frac{\bm{\alpha}(e) \bm{m}'(e)}{\bm{\alpha}^\top \bm{m}'} \geq \frac{\bm{m}'(e)}{\bm{1}_R^\top \bm{m}'}$ and $\sigma_e = -1$ otherwise.
		Then, the subdifferential of $h$ with respect to $\bm{\alpha}(e)$ includes
		\begin{align*}
		\frac{\partial{h}}{\partial \bm{\alpha}(e)}
		& \ni
		\frac{1}{2} \sigma_e \frac{\bm{m}'(e)(\bm{\alpha}^\top \bm{m}' - \bm{\alpha}(e)\bm{m}'(e)) }{{(\bm{\alpha}^\top \bm{m}')}^2}
		-
		\frac{1}{2} \bm{m}'(e) \sum_{e' \in R: e' \neq e} \sigma_{e'} \frac{\bm{\alpha}(e')\bm{m}'(e')}{{(\bm{\alpha}^\top \bm{m}')}^2}.
		\end{align*}
		The subdifferential includes zero if and only if
		\begin{align}
		\sum_{e' \in R: e' \neq e}\sigma_{e'}\bm{\alpha}(e')\bm{m}'(e') =
		\sigma_e (\bm{\alpha}^\top \bm{m}' -     \bm{\alpha}(e)\bm{m}'(e))
		= \sigma_e\sum_{e' \in R: e' \neq e} \bm{\alpha}(e') \bm{m}'(e').
		\label{eq:cond}
		\end{align}
		Noting that $\bm{\alpha}(e')\bm{m}'(e')$ are all non-negative, we have $\nabla_{\bm{\alpha}} h = \bm{0}$ only when all $\sigma_e\;(e \in R)$ have the same value.
		By some case analysis, we can show that this implies that $\bm{\alpha}$ is a multiple of the all-one vector, in which case $h$ takes the minimum value of $0$.
		Hence, $h$ takes the maximum value at boundary of ${[1-c,1]}^R$ and the claim follows.
	\end{proof}

	Now, we compute the maximum value of $h$.
	By the claim above, we only have to consider $\bm{\alpha} \in {\{1-c,1\}}^R$.
	Let $R_1 = \{e \in R \mid \bm{\alpha}(e) = 1\}$ and $R_{1-c} = \{e \in R \mid \bm{\alpha}(e) = 1-c\}$.
	Then, we have
	\begin{align*}
	h(\bm{\alpha}) &= \frac{c \cdot \sum_{e \in R_1 }\bm{m}'(e) \cdot \sum_{e \in R_{1-c}}  \bm{m}'(e)}{\sum_{e \in R}\bm{m}'(e) \cdot \left(\sum_{e \in R_1} \bm{m}'(e) +\sum_{e \in R_{1-c}} (1-c) \bm{m}'(e) \right)} \\
	& = \frac{cM_1 M_{1-c}}{(M_1+M_{1-c})(M_1 + (1-c)M_{1-c})},
	\end{align*}
	where $M_1 = \sum_{e \in R_1 }\bm{m}'(e)$ and $M_{1-c} = \sum_{e \in R_{1-c} }\bm{m}'(e)$.
	We can confirm
	\[
	h(\bm{\alpha}) \leq \frac{{\left(1-\sqrt{1-c}\right)}^2}{c},
	\]
	where the equality holds when $M_1 = \sqrt{1-c} \cdot M_{1-c}$.
	This proves the lemma.
\end{proof}
The following corollary bounds the earth mover's distance between $\mathcal{A}$ on $f$ and $f'$, given that $e^*$ is already selected in the process of $\mathcal{A}$ on $f$.
\begin{corollary}\label{cor:em-one-iteration}
	For any $S,T \subseteq E - e^*$ with $s := |S|=|T|-1 \leq k-1$, we have
	\[
	d_{\mathrm{EM}}\bigl(S+e^* \cup \mathcal{A}(f_{S+e^*}), T \cup \mathcal{A}(f'_T)\bigr) \leq |S \bigtriangleup T| + \frac{{\left(1-\sqrt{1-c}\right)}^2}{c} \cdot (k-s-1) + 1.
	\]
\end{corollary}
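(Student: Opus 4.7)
The plan is to iterate Lemma~\ref{lem:em-one-iteration} across the $k-s-1$ greedy steps remaining in each process, while carrying the extra element $e^*$ (present on the left but not the right) as a ``ghost'' that contributes a single $+1$ at the end.

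For $j \in \{0, \ldots, k-s-1\}$, let $X_j$ and $Y_j$ denote the (random) sets maintained by the left and right processes after $j$ additional iterations, so $X_0 = S+e^*$, $Y_0 = T$, and $X_{j+1}=X_j + \mathcal{I}(f_{X_j})$, $Y_{j+1}=Y_j + \mathcal{I}(f'_{Y_j})$. Define $\widetilde X_j = X_j \setminus \{e^*\}$, which lies in $E - e^*$ and evolves by $\widetilde X_{j+1} = \widetilde X_j + \mathcal{I}(f_{\widetilde X_j + e^*})$ since $\mathcal{I}(f_{X_j})$ selects from $E \setminus X_j$ and hence never re-selects the already-present $e^*$. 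Applying Lemma~\ref{lem:em-one-iteration} with $(S, T) = (\widetilde X_j, Y_j)$ gives, for each realization, a coupling of the next picks whose expected contribution to the symmetric difference is at most $(1-\sqrt{1-c})^2/c$. Gluing this conditional coupling onto an optimal coupling for $d_{\mathrm{EM}}(\widetilde X_j, Y_j)$ and invoking the infimum definition of EM distance yields
\[
d_{\mathrm{EM}}(\widetilde X_{j+1}, Y_{j+1}) \;\leq\; d_{\mathrm{EM}}(\widetilde X_j, Y_j) + \frac{(1-\sqrt{1-c})^2}{c}.
\]
Iterating from $d_{\mathrm{EM}}(\widetilde X_0, Y_0) = |S \bigtriangleup T|$ gives $d_{\mathrm{EM}}(\widetilde X_{k-s-1}, Y_{k-s-1}) \leq |S \bigtriangleup T| + (1-\sqrt{1-c})^2(k-s-1)/c$.

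Finally, because $e^* \in X_j$ and $e^* \notin Y_j$ for every $j$, any coupling of $X_{k-s-1}$ and $Y_{k-s-1}$ satisfies $|X_{k-s-1} \bigtriangleup Y_{k-s-1}| = |\widetilde X_{k-s-1} \bigtriangleup Y_{k-s-1}| + 1$ pointwise, so $d_{\mathrm{EM}}(X_{k-s-1}, Y_{k-s-1}) = d_{\mathrm{EM}}(\widetilde X_{k-s-1}, Y_{k-s-1}) + 1$, yielding the claimed bound. The main technical point is the gluing step in the recursion: one must verify that the family of conditional couplings from Lemma~\ref{lem:em-one-iteration}, indexed by realizations of $(\widetilde X_j, Y_j)$, can be assembled into a single joint distribution with the correct marginals so that the infimum definition of $d_{\mathrm{EM}}$ applies; this is a standard ``chaining of couplings'' argument but deserves explicit mention since Lemma~\ref{lem:em-one-iteration} is stated only for deterministic $S, T$.
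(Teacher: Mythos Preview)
Your proposal is correct and follows essentially the same approach as the paper: both arguments iterate Lemma~\ref{lem:em-one-iteration} across the remaining $k-s-1$ steps and chain the resulting couplings. The paper phrases this as a backward induction on $s$ (coupling the single next step via the lemma, then invoking the induction hypothesis on the remainder), while you phrase it as a forward iteration (coupling the current states, then applying the lemma conditionally); your explicit separation of $e^*$ via $\widetilde X_j$ and the final $+1$ is a clean way to account for the extra element, which the paper handles implicitly through the base case.
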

\begin{proof}
	We prove by (backward) induction on $s$.
	When $s = k-1$, the statement clearly holds because $\mathcal{A}(f_{S+e^*}) = \mathcal{A}(f'_T) = \emptyset$.
	
	Suppose that the statement holds for any set of size more than $s$.
	Let $S,T \subseteq E-e^*$ be sets of size $s$, and let $e_f = \mathcal{I}(f_{S+e^*})$ and $e_{f'} = \mathcal{I}(f'_{T})$.
	Then, there exists a distribution $\mu$ over $(E \setminus (S + e^*)) \times (E \setminus (T+e^*))$ such that the marginal distributions of the first and second coordinates are equal to the distributions of $e_f$ and $e_{f'}$, respectively, and
	\[
	d_{\mathrm{EM}}\bigl(S+e_f, T+e_{f'}\bigr)
	= \mathop{\mathbf{E}}_{(e_f,e_{f'}) \sim \mu}\bigl[|(S + e_f) \bigtriangleup (T + e_{f'})|\bigr],
	\]
	where we used the symbols $e_f$ and $e_{f'}$ in the RHS because the marginal distributions match.
	Then, we have
	\begin{align*}
	& d_{\mathrm{EM}}\bigl(\mathcal{A}(f_{S+e^*}), \mathcal{A}(f'_T)\bigr)  \\
	& \leq \sum_{e,e'}\Pr_{(e_f,e_{f'})\sim \mu}[e_f = e \wedge e_{f'} = e'] d_{\mathrm{EM}}\bigl(\mathcal{A}(f_{S+e+e^*}), \mathcal{A}(f'_{T+e'})\bigr) \\
	& = \sum_{e,e'} \Pr_{(e_f,e_{f'})\sim \mu}[e_f = e \wedge e_{f'} = e'] \left(|(S+e) \bigtriangleup (T+e')| + \frac{{\left(1-\sqrt{1-c}\right)}^2}{c}\cdot(k-s-2) + 1\right) \tag{by induction hypothesis} \\
	& = \mathop{\mathbf{E}}_{(e_f,e_{f'}) \sim \mu}\bigl[|(S + e_f) \bigtriangleup (T + e_{f'})|\bigr] + \frac{{\left(1-\sqrt{1-c}\right)}^2}{c}\cdot(k-s-2) + 1 \\
	& = |S\bigtriangleup T| + \frac{{\left(1-\sqrt{1-c}\right)}^2}{c}\cdot(k-s-1) + 1, \tag{by Lemma~\ref{lem:em-one-iteration}}
	\end{align*}
	as desired.
\end{proof}

For $1 \leq i \leq k$, let $e_i$ and $e'_i$ denote the element that $\mathcal{A}$ on $f$ and $f'$, respectively, selects at the $i$-th iteration.
Let $S_i = \{e_1,\ldots,e_i\}$ and $S'_i = \{e'_1,\ldots,e_i\}$.
Note that these are all random variables.
The following clearly holds.
\begin{proposition}\label{prop:same-distribution}
	For every $i \in [k]$ and $S \subseteq E - e^*$ of size $i-1$,
	We have
	\[
	\Pr[e_i = e \mid S_{i-1} = S \wedge e_i \neq e^*] = \Pr[e'_i = e \mid S'_{i-1} = S].
	\]
\end{proposition}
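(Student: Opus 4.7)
The plan is to unfold the definitions of \Call{ProportionalGreedy}{} on $f$ and on $f'=f^{\setminus e^*}$, and then observe that the two conditional distributions in the statement are literally the same renormalized vector of marginal gains. The argument is almost entirely bookkeeping; the key identity is that for $e^*\notin S$ and $e\in E\setminus(S\cup\{e^*\})$, the marginal gain of $e$ on top of $S$ is unaffected by whether or not $e^*$ is in the ground set, i.e.\ $f'_S(e)=f_S(e)$. I do not anticipate a hard step.

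First I would fix $i\in[k]$ and $S\subseteq E\setminus\{e^*\}$ with $|S|=i-1$, and condition the whole process on $S_{i-1}=S$. By the definition of \Call{ProportionalGreedy}{} (Algorithm~\ref{alg:propgreedy}), the $i$-th element is selected from $E\setminus S$ with probabilities proportional to the marginal gains on top of $S$, independently of the past:
\[
\Pr[e_i=e\mid S_{i-1}=S]=\frac{f_S(e)}{\sum_{e''\in E\setminus S}f_S(e'')}\qquad(e\in E\setminus S).
\]
Conditioning further on $e_i\neq e^*$ restricts the support to $e\in E\setminus(S\cup\{e^*\})$ and renormalizes, giving
\[
\Pr[e_i=e\mid S_{i-1}=S,\ e_i\neq e^*]=\frac{f_S(e)}{\sum_{e''\in E\setminus(S\cup\{e^*\})}f_S(e'')}.
\]

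Next I would write out the analogous expression for $\mathcal{A}$ run on $f'=f^{\setminus e^*}$. Since the ground set is $E\setminus\{e^*\}$ and $S\subseteq E\setminus\{e^*\}$,
\[
\Pr[e'_i=e\mid S'_{i-1}=S]=\frac{f'_S(e)}{\sum_{e''\in (E\setminus\{e^*\})\setminus S}f'_S(e'')}\qquad(e\in(E\setminus\{e^*\})\setminus S).
\]
The last step is the identity $f'_S(e)=f(S\cup\{e\})-f(S)=f_S(e)$ for every $e\in(E\setminus\{e^*\})\setminus S$, which holds because $e^*\notin S\cup\{e\}$ so the values of $f$ on the relevant sets coincide with those of $f'$. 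Plugging this into the two displays above shows that both conditional probabilities equal the same ratio, which proves the proposition.
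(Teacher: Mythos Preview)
Your proof is correct; the paper itself offers no argument beyond the remark that the proposition ``clearly holds,'' and your write-up is exactly the unfolding of definitions that makes this clear. There is nothing to compare---you have simply spelled out what the paper leaves implicit.
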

Now, we show the following, which implies Theorem~\ref{thm:ub-randomized-greedy} when $S=\emptyset$.
\begin{lemma}
	For any $S \subseteq E-e^*$, we have
	\[
	d_{\mathrm{EM}}\bigl(\mathcal{A}(f_S), \mathcal{A}(f'_S)\bigr) \leq \frac{{\left(1-\sqrt{1-c}\right)}^2}{c} \cdot (k-1) + 2.
	\]
\end{lemma}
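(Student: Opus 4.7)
The plan is to prove the lemma by backward induction on $s := |S|$, exploiting Proposition~\ref{prop:same-distribution} to couple the two runs as long as $e^*$ has not yet been selected, and invoking Corollary~\ref{cor:em-one-iteration} once $e^*$ is selected. The base case $s=k$ is immediate, since both $\mathcal{A}(f_S)$ and $\mathcal{A}(f'_S)$ are then deterministically empty.

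For the inductive step, fix $S \subseteq E - e^*$ with $|S|=s<k$ and let $e_1 = \mathcal{I}(f_S)$, $e'_1 = \mathcal{I}(f'_S)$ be the first elements picked by the two algorithms. Write $p^* = \Pr[e_1 = e^*]$. I would construct an explicit joint coupling of $\mathcal{A}(f_S)$ and $\mathcal{A}(f'_S)$ in two mutually exclusive cases. In the first case, with probability $1-p^*$ we have $e_1 \neq e^*$; by Proposition~\ref{prop:same-distribution} the law of $e_1$ conditioned on $\{e_1 \neq e^*\}$ equals that of $e'_1$, so we may couple them with $e'_1 = e_1$ and then recursively couple $\mathcal{A}(f_{S+e_1})$ with $\mathcal{A}(f'_{S+e_1})$ using the inductive hypothesis on the set $S+e_1 \subseteq E-e^*$ of size $s+1$. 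This contributes at most $\tfrac{(1-\sqrt{1-c})^2}{c}(k-1)+2$ to the expected symmetric difference.

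In the second case, with probability $p^*$ we have $e_1 = e^*$ and $e'_1$ is drawn independently from $\mathcal{I}(f'_S)$. Setting $T := S + e'_1$, which has size $s+1$ and lies in $E-e^*$, Corollary~\ref{cor:em-one-iteration} gives
\[
d_{\mathrm{EM}}\bigl(S+e^*\cup \mathcal{A}(f_{S+e^*}), T\cup \mathcal{A}(f'_T)\bigr) \;\leq\; |S\bigtriangleup T| + \tfrac{(1-\sqrt{1-c})^2}{c}(k-s-1)+1 \;=\; \tfrac{(1-\sqrt{1-c})^2}{c}(k-s-1)+2,
\]
and since the common prefix $S$ does not affect the symmetric difference, this bounds the contribution of Case~2 as well. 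Averaging the two cases yields
\[
d_{\mathrm{EM}}\bigl(\mathcal{A}(f_S), \mathcal{A}(f'_S)\bigr) \;\leq\; \tfrac{(1-\sqrt{1-c})^2}{c}\bigl(p^*(k-s-1)+(1-p^*)(k-1)\bigr) + 2 \;\leq\; \tfrac{(1-\sqrt{1-c})^2}{c}(k-1)+2,
\]
where the final inequality uses $k-s-1 \leq k-1$ for $s\geq 0$, closing the induction.

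The main technical obstacle is making the coupling rigorous: one has to verify that the conditional coupling in Case~1 (justified by Proposition~\ref{prop:same-distribution}) and the Corollary-based coupling in Case~2 can be pieced together into a single valid joint distribution whose marginals are $\mathcal{A}(f_S)$ and $\mathcal{A}(f'_S)$. Everything else is routine algebra on the induction coefficients; the only reason the bound is uniform in $s$ is that the loss $\tfrac{(1-\sqrt{1-c})^2}{c}(k-s-1)+2$ paid in Case~2 (which is smaller for larger $s$) is dominated by the recursive bound inherited in Case~1.
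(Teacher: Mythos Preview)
Your proposal is correct and follows essentially the same approach as the paper: backward induction on $|S|$, using Proposition~\ref{prop:same-distribution} to couple the two processes identically whenever $e^*$ is not selected and invoking Corollary~\ref{cor:em-one-iteration} (with $T=S+e'_1$, hence $|S\bigtriangleup T|=1$) once it is. The paper phrases the argument via an explicit joint distribution $\mu$ on first-step pairs rather than your case split, but the underlying coupling and the arithmetic are identical.
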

\begin{proof}
	We prove by (backward) induction on the size of $S$.
	The claim clearly holds when $|S|=k$.
	
	Now, we assume that the claim holds for sets of size more than $s$.
	Let $S \subseteq E - e^*$ be a set of size $s$.
	By Proposition~\ref{prop:same-distribution}, there exists a distribution $\mu$ over $(E \setminus S) \times (E \setminus (S+e^*))$ such that the marginal distributions of $\mu$ on the first and second coordinates are equal to the distribution of $\mathcal{I}(f_S)$ and $\mathcal{I}(f'_S)$, respectively, and a pair $(e_f,e_{f'})$ sampled from $\mu$ always satisfies $e_f=e_{f'}$ if $e_f \neq e^*$.
	Then, we have
	\begin{align*}
	& d_{\mathrm{EM}}(\mathcal{A}(f_S),\mathcal{A}(f'_S))  \\
	& \leq \Pr_{(e_f,e_{f'})\sim \mu}[e_f = e^*] \cdot d_{\mathrm{EM}}\bigl(\mathcal{A}(f_{S+e^*}), \mathcal{A}(f'_{S+e_{f'}})\bigr) \\
	& \quad + \sum_{e \in E \setminus (S+e^*)}\Pr_{(e_f,e_{f'}) \sim \mu}[e_f = e \mid e_f \neq e^*] \cdot d_{\mathrm{EM}}\bigl(\mathcal{A}(f_{S+e}), \mathcal{A}(f'_{S+e})\bigr) \\
	& \leq \Pr_{(e_f,e_{f'})\sim \mu}[e_f = e^*]  \cdot \left(\frac{{\left(1-\sqrt{1-c}\right)}^2}{c} \cdot (k-s-1)+ 2\right) \\
	& \quad + \left(1-\Pr_{(e_f,e_{f'})\sim \mu}[e_f = e^*]\right) \left(\frac{{\left(1-\sqrt{1-c}\right)}^2}{c} \cdot (k-1) + 2\right) \tag{by Corollary~\ref{cor:em-one-iteration} and the induction hypothesis} \\
	& \leq \frac{{\left(1-\sqrt{1-c}\right)}^2}{c} \cdot (k-1) + 2. \qedhere
	\end{align*}
\end{proof}

\subsection{Lower Bound for the Randomized Greedy Algorithm}
In this section, we show the following lower bound for the randomized greedy algorithm, which shows that the bound in Theorem~\ref{thm:ub-randomized-greedy} is tight.
\begin{theorem}\label{thm:lb-randomized-greedy}
	For any $c \in (0,1]$ and $\varepsilon > 0$, the sensitivity of the randomized greedy algorithm on monotone submodular functions with curvature $c$ is at least $\frac{{1-(1-\sqrt{c})}^2}{c}-\varepsilon$.
\end{theorem}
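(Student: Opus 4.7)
The plan is to construct, for each $c \in (0,1]$, a single monotone submodular function of curvature $c$ that saturates the equality case of the upper bound. Recall that in the proof of Lemma~\ref{lem:em-one-iteration} the extremal configuration occurs at $\bm{\alpha} \in \{1-c, 1\}^R$ with $M_1 = \sqrt{1-c}\cdot M_{1-c}$, so I would take a ground set $E = \{e^*\} \cup R_1 \cup R_{1-c}$ with $|R_1|$ chosen close to $\sqrt{1-c}\cdot|R_{1-c}|$ and set
\[
f(S) = M\cdot \mathbf{1}[e^* \in S] + |S \cap R_1| + \bigl(1 - c\cdot\mathbf{1}[e^* \in S]\bigr)\cdot|S\cap R_{1-c}|
\]
for a large constant $M$. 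A short verification shows $f$ is monotone submodular, and for $M \geq |R_{1-c}|$ its curvature is exactly $c$ (attained on the elements of $R_{1-c}$, where the marginal drops from $1$ to $1-c$ upon adding $e^*$).

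The behavior of $\mathcal{A} = $ \Call{ProportionalGreedy}{} on $f$ and on $f' = f^{\setminus e^*}$ becomes transparent as $M \to \infty$. On $f$, the first step selects $e^*$ with probability tending to $1$, after which the remaining $k-1$ steps sample (essentially independently when $|R_1|,|R_{1-c}| \gg k$) from the post-$e^*$ distribution $q$ placing mass $1/N'$ on each element of $R_1$ and $(1-c)/N'$ on each element of $R_{1-c}$, where $N' = |R_1|+(1-c)|R_{1-c}|$. On $f'$ the function is modular with unit marginals, so the algorithm simply takes $k$ uniform picks from $E\setminus \{e^*\}$, each element appearing with probability $k/N$ where $N = |R_1|+|R_{1-c}|$.

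I would then lower-bound the earth mover's distance by its elementwise marginal slack,
\[
d_{\mathrm{EM}}\bigl(\mathcal{A}(f),\mathcal{A}(f')\bigr) \;\geq\; \sum_{e\in E}\bigl|\Pr[e\in\mathcal{A}(f)] - \Pr[e\in\mathcal{A}(f')]\bigr|,
\]
which follows for any coupling from $|S\bigtriangleup S'|=\sum_e |\mathbf{1}[e\in S] - \mathbf{1}[e\in S']|$ by moving the absolute value outside the expectation. In the limit, $e^*$ contributes $1$; each $e\in R_1$ contributes $(k-1)/N' - k/N$ and each $e\in R_{1-c}$ contributes $k/N - (k-1)(1-c)/N'$. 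Substituting $|R_1|/|R_{1-c}| = \sqrt{1-c}$ gives $N'/N = \sqrt{1-c}$, and the two group sums combine to a positive multiple of $(1-\sqrt{1-c})/(1+\sqrt{1-c})$, which rationalizes to $(1-\sqrt{1-c})^2/c$. Taking $|R_{1-c}|$ large enough therefore drives the lower bound above $(1-\sqrt{1-c})^2/c - \varepsilon$.

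The main technical obstacle is controlling three error terms so they can be swept into $\varepsilon$: the finite-$M$ probability that $e^*$ is not selected first (of order $1/M$), the without-replacement correction to the approximation $\Pr[e\in\mathcal{A}(f)]\approx (k-1)q(e)$ (of order $k^2/|R_{1-c}|$), and the integer rounding in matching $|R_1|$ to $\sqrt{1-c}\cdot|R_{1-c}|$ (of order $1/|R_{1-c}|$). Each is made arbitrarily small by letting $M$ and $|R_{1-c}|$ grow, producing the stated bound and matching Theorem~\ref{thm:ub-randomized-greedy} up to lower-order terms.
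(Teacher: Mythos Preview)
Your proposal is correct and follows essentially the same approach as the paper: the same hard instance (your $R_1,R_{1-c}$ are the paper's $A,B$ with the identical ratio $|A|/|B|=\sqrt{1-c}$, derived from $\alpha=(1-\sqrt{1-c})/c$), and the same elementwise-marginal lower bound $d_{\mathrm{EM}}\geq\sum_e|\Pr[e\in\mathcal{A}(f)]-\Pr[e\in\mathcal{A}(f')]|$. The paper spells out the without-replacement estimates via explicit two-sided bounds on $p_A,p_B$ (Lemmas~\ref{lem:bound-on-pA}--\ref{lem:bound-on-pB}) and handles the ``$e^*$ chosen first'' event through a formal coupling, whereas you absorb these into the three $o(1)$ error terms, but the argument is the same.
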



Let $n$ be an integer, $c \in (0,1]$, and $\alpha = (1-\sqrt{1-c})/c$.
Let $A,B$ be sets of size $(1-\alpha)n$ and $\alpha n$, respectively, where we assume $\alpha n$ is an integer, and let $E = A \cup B \cup \{e^*\}$ be a set of size $n + 1$, where $e^*$ is a newly introduced element.
Now, we define a function $f\colon 2^E \to \mathbb{R}_+$ as
\[
f(S) = M x_{e^*} + \sum_{e \in A} x_e + \sum_{e \in B} \Bigl((1-x_{e^*})x_e + (1-c)x_{e^*}x_e\Bigr),
\]
where $M \in \mathbb{R}$ is a large value determined later, and $x_e\;(e \in E)$ is the indicator of the event that $e$ belongs to $S$.
The marginal gain of an element $e^*$ is $M$, which never changes.
All other marginal gains are initially $1$, and those of the elements in $A$ never change whereas those of the elements in $B$ decreases to $1-c$ after $e^*$ is selected.
We can confirm that $f$ is a monotone submodular function with curvature $c$.
Our goal is to show that $d_{\mathrm{EM}}(\mathcal{A}(f),\mathcal{A}(f'))$ is large, where $f' = f^{\setminus e^*}$.




For $e \in A$ (resp., $e \in B$), let $p_A$ (resp., $p_B$) be the probability that the output of $\mathcal{A}(f_{e^*})$ includes $e$.
Note that the choice of $e$ does not matter.
The following two lemmas bound $p_A$ and $p_B$ from below and above, respectively.
We let $D = (1-\alpha)n + (1-c)\alpha n$.
\begin{lemma}\label{lem:bound-on-pA}
	We have
	\[
	p_A \geq \frac{k}{D} \left(1 - \frac{k-1}{2(D - k)}\right).
	\]
\end{lemma}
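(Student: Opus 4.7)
The plan is to reduce $\mathcal{A}(f_{e^*})$ to $k$ rounds of weighted sampling without replacement and then lower bound $p_A$ through a direct conditional-probability argument followed by a second-order Taylor-type inequality.

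First I would observe that, on the function $f_{e^*}$, all marginal gains are frozen: every $e \in A$ has marginal gain $1$ and every $e \in B$ has marginal gain $1-c$ regardless of which other elements have been chosen, since $f$ is modular on $A$ and the presence of $e^*$ already fixes every $B$-marginal at $1-c$. Consequently running the algorithm on $f_{e^*}$ is exactly $k$ steps of weighted sampling without replacement from $A \cup B$, with weight $1$ on $A$-elements and weight $1-c$ on $B$-elements. In particular, the total remaining weight $w_i$ at the start of step $i$ satisfies $w_1 = D$ and $w_i \leq D$ \emph{pointwise} for every $i \in [k]$, since weight only decreases over time.

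Next, fix $e \in A$ and set $q_i = \Pr[e \notin S_i]$, where $S_i$ denotes the set after $i$ steps. Whenever $e \notin S_{i-1}$, the conditional probability of picking $e$ next equals $1/w_i \geq 1/D$, which holds pointwise with no appeal to Jensen. This yields the one-step recursion $q_i \leq q_{i-1}(1-1/D)$, hence $q_k \leq (1-1/D)^k$, so
\[
p_A \;=\; 1 - q_k \;\geq\; 1 - \left(1-\frac{1}{D}\right)^k.
\]

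To finish, I would derive the form in the statement from a second-order expansion of $(1-x)^k$. Setting $g(x) := 1 - kx + \binom{k}{2}x^2 - (1-x)^k$, one checks $g(0) = g'(0) = 0$ and $g''(x) = k(k-1)\bigl(1 - (1-x)^{k-2}\bigr) \geq 0$ on $[0,1]$, so $(1-x)^k \leq 1 - kx + \binom{k}{2}x^2$ there. Evaluating at $x = 1/D$ gives $p_A \geq \frac{k}{D} - \frac{k(k-1)}{2D^2}$, and since $D \geq D - k$ we have $\frac{1}{D^2} \leq \frac{1}{D(D-k)}$; after rearrangement this produces the claimed lower bound $p_A \geq \frac{k}{D}\bigl(1 - \frac{k-1}{2(D-k)}\bigr)$. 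I do not anticipate a serious obstacle; the main subtlety worth flagging is the \emph{pointwise} (rather than in-expectation) nature of the bound $w_i \leq D$, which is what makes the per-step estimate $\Pr[\text{pick }e \mid e \notin S_{i-1}] \geq 1/D$ available without any Jensen-style step.
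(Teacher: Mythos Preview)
Your argument is correct. The route differs from the paper's in a small but real way: the paper writes $p_A=\sum_{i=1}^k\prod_{j<i}(1-p_j)\,p_i$ and lower bounds each summand separately, using \emph{both} endpoints $1/D\le p_i\le 1/(D-k)$ (the upper bound to control $\prod_{j<i}(1-p_j)$, the lower bound for the trailing $p_i$), which gives $p_A\ge \frac{D-k}{D}\bigl(1-(1-\tfrac{1}{D-k})^k\bigr)$ before the second-order expansion at $x=1/(D-k)$. You instead pass directly to $p_A=1-\prod_i(1-p_i)$ and use only the pointwise lower bound $p_i\ge 1/D$, obtaining $p_A\ge 1-(1-1/D)^k$ and then expanding at $x=1/D$. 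This is cleaner---one bound instead of two---and even yields the slightly stronger intermediate inequality $p_A\ge \frac{k}{D}\bigl(1-\frac{k-1}{2D}\bigr)$ before you weaken $D$ to $D-k$ in the denominator to match the stated form. Both proofs rest on the same two ingredients (the pointwise per-step probability bounds and the second-order estimate $(1-x)^k\le 1-kx+\binom{k}{2}x^2$), so the difference is a simplification rather than a new idea.
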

\begin{proof}
	For $i \in [k]$, let $p_i$ be the probability that $e$ is selected at the $i$-th iteration, given that $e$ has not been selected in the first $i-1$ iterations.
	We note that $1/D \leq p_i \leq 1/(D-k)$ for any $i \in [k]$ because the marginal gain of adding an element in $A$ is $1$ whereas the sum of marginal gains is between $D$ and $D-k$ throughout the algorithm.
	Then, we have
	\begin{align*}
	p_A & = \sum_{i=1}^k \prod_{j=1}^{i-1}(1-p_j)p_i
	\geq \sum_{i=1}^k {\left(1-\frac{1}{D-k}\right)}^{i-1}\frac{1}{D}
	= \frac{D - k}{D} \left(1 - {\left(1 - \frac{1}{D - k}\right)}^k\right)\\
	& \geq \frac{D - k}{D} \left(1 - {\left(1 - \frac{k}{D - k} + \binom{k}{2}\frac{1}{{(D - k)}^2}\right)}\right)
	= \frac{k}{D} \left(1 - \frac{k-1}{2(D - k)}\right). \qedhere
	\end{align*}
\end{proof}

\begin{lemma}\label{lem:bound-on-pB}
	We have
	\[
	p_B \leq \frac{(1 - c)k}{D-k}.
	\]
\end{lemma}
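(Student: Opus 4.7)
The plan is to mirror the template used for Lemma~\ref{lem:bound-on-pA}, but replace its second-order geometric-series computation by a one-sided union bound, which is all that is needed here. First I would fix $e \in B$ and let $p_i$ denote the conditional probability that $e$ is selected at the $i$-th iteration of $\mathcal{A}(f_{e^*})$, given that it was not selected in iterations $1,\ldots,i-1$. Since \Call{ProportionalGreedy}{} chooses each remaining element with probability proportional to its marginal gain, $p_i$ is simply the marginal gain of $e$ at the current set $S$, divided by $\sum_{e'\in E\setminus S} f_S(e')$.

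The key observation separating this from the previous lemma is that once $e^*$ lies in the current solution $S$, the form of $f$ forces the marginal gain of every element of $B$ to drop permanently from $1$ to $1-c$, so the numerator is exactly $1-c$. For the denominator, I would write $\sum_{e'\in E\setminus S} f_S(e') = |A\setminus S| + (1-c)|B\setminus S|$, and note that this quantity starts at $D = (1-\alpha)n + (1-c)\alpha n$ and decreases by at most $1$ each time a new element is added to $S$, since each element of $A$ contributes $1$ to $D$ and each element of $B$ contributes $1-c \leq 1$ to $D$. Because at most $k$ elements are added after $e^*$, the sum stays at least $D-k$ throughout the run, and combining the two observations yields the uniform bound $p_i \leq (1-c)/(D-k)$ for every $i \in [k]$.

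The conclusion is then a one-liner: since $\prod_{j=1}^{i-1}(1-p_j) \leq 1$ for each $i$,
\[
p_B = \sum_{i=1}^{k}\Bigl(\prod_{j=1}^{i-1}(1-p_j)\Bigr) p_i \leq \sum_{i=1}^{k} p_i \leq \frac{(1-c)k}{D-k}.
\]
I do not foresee any real obstacle. Unlike the lower bound in Lemma~\ref{lem:bound-on-pA}, here we want an \emph{upper} bound on $p_B$, so the crude inequality $\sum\prod(1-p_j)\,p_i \leq \sum p_i$ is already tight enough and no Taylor expansion is needed. The only thing to be careful about is the uniform denominator bound, and that follows immediately from the explicit definition of $f$ together with the cardinality constraint.
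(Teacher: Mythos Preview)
Your proposal is correct and follows the same overall template as the paper---define the conditional selection probabilities $p_i$, bound them uniformly by $(1-c)/(D-k)$, and sum. The difference is in the final step: the paper additionally uses the lower bound $p_j \geq (1-c)/D$ to replace $\prod_{j<i}(1-p_j)$ by $(1-(1-c)/D)^{i-1}$, sums the resulting geometric series to $\frac{D}{D-k}\bigl(1-(1-(1-c)/D)^k\bigr)$, and then applies Bernoulli's inequality to reach $(1-c)k/(D-k)$. Your crude bound $\prod_{j<i}(1-p_j)\le 1$ skips all of this and lands on the same expression directly. So your route is strictly simpler; the paper's geometric-series detour buys nothing here, since the Bernoulli step at the end discards exactly the extra precision it had gained.
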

\begin{proof}
	For $i \in [k]$, let $p_i$ be the probability that $e$ is selected at the $i$-th iteration, given that $e$ has not been selected in the first $i-1$ iterations.
	We note that $(1-c)/D \leq p_i \leq (1-c)/(D-k)$ for any $i \in [k]$ because the marginal gain of adding an element in $B$ is $1 -c$ (note that we have already selected $e^*$) whereas the sum of marginal gains is between $D$ and $D-k$ throughout the algorithm.
	Then, we have
	\begin{align*}
	p_B & = \sum_{i=1}^k \prod_{j=1}^{i-1}(1-p_j)p_i
	\leq \sum_{i=1}^k {\left(1-\frac{1-c}{D}\right)}^{i-1}\frac{1-c}{D-k} \\
	& = \frac{D}{D-k} \left(1 - {\left(1 - \frac{1 - c}{D}\right)}^k\right)
	\leq \frac{(1 - c)k}{D-k}. \qedhere
	\end{align*}
\end{proof}

\begin{lemma}\label{lem:lb-after-selecting-e^*}
	For any $\varepsilon > 0$, there exists $n$ such that
	\[
	d_{\mathrm{EM}}\left(\mathcal{A}(f_{e^*}), \mathcal{A}(f')\right) >  \frac{{(1 - \sqrt{1-c})}^2}{c} \cdot k - \varepsilon.
	\]
\end{lemma}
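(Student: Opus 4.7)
My plan is to bound the earth mover's distance from below by the sum of per-element discrepancies in inclusion probabilities. Since $|S\bigtriangleup T|=\sum_e |\mathbf{1}_S(e)-\mathbf{1}_T(e)|$ and, under any coupling, $\mathbf{E}|\mathbf{1}_S(e)-\mathbf{1}_T(e)|\geq |\Pr[e\in S]-\Pr[e\in T]|$, minimizing over couplings gives
\[
d_{\mathrm{EM}}\bigl(\mathcal{A}(f_{e^*}),\mathcal{A}(f')\bigr) \geq \sum_{e\in A\cup B}\bigl|\Pr[e\in \mathcal{A}(f_{e^*})] - \Pr[e\in \mathcal{A}(f')]\bigr|.
\]

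Next I would identify both sides of the comparison. The restriction $f'$ of $f$ to $A\cup B$ is modular with weight $1$ on every singleton, so all remaining marginals coincide at every iteration of \Call{RandomizedGreedy}{}; by symmetry each element belongs to $\mathcal{A}(f')$ with probability exactly $k/n$. For $\mathcal{A}(f_{e^*})$, the construction is symmetric within $A$ and within $B$, so each $e\in A$ (resp.\ $e\in B$) is included with the common probability $p_A$ (resp.\ $p_B$), and I would bound these using Lemma~\ref{lem:bound-on-pA} and Lemma~\ref{lem:bound-on-pB}.

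Before plugging in the bounds it is useful to observe the exact identity $|A|(p_A-k/n) = |B|(k/n-p_B)$, which follows from $|A|p_A+|B|p_B=k=(|A|+|B|)\cdot k/n$. Writing $s=\sqrt{1-c}$, the leading term $k/D$ of Lemma~\ref{lem:bound-on-pA} strictly exceeds $k/n$ because $D=(1-\alpha)n+(1-c)\alpha n=ns$ (using $\alpha c = 1-s$), and the leading term $(1-c)k/(D-k)\approx sk/n$ of Lemma~\ref{lem:bound-on-pB} is strictly below $k/n$; hence for $n$ large the signs $p_A-k/n \geq 0$ and $k/n-p_B \geq 0$ let us drop the absolute values, and the marginal lower bound collapses to $2|A|(p_A-k/n)$. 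A short calculation using $1-\alpha=s/(1+s)$ and the identity $(1-s)^2/c=(1-s)/(1+s)$ then yields
\[
2|A|(p_A-k/n) \geq \frac{2(1-s)}{1+s}\cdot k - o_n(1) = \frac{2(1-\sqrt{1-c})^2}{c}\cdot k - o_n(1),
\]
which strictly exceeds the target by a factor of two, so taking $n$ large enough drives the $o_n(1)$ slack below $\varepsilon$.

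The main obstacle I anticipate is a clean treatment of the correction terms: the $(k-1)/(2(D-k))$ factor in Lemma~\ref{lem:bound-on-pA} and the slack in Lemma~\ref{lem:bound-on-pB} both vanish as $n\to\infty$ at fixed $k$, but one must choose $n$ as an explicit function of $k$, $c$, and $\varepsilon$ to conclude the statement. Fortunately, the factor-of-two cushion between the quantity the marginals deliver and the target means that neither lemma needs to be applied in its absolute sharpest form, so the book-keeping should be routine rather than delicate.
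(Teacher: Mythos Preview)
Your approach is essentially the paper's: both lower-bound $d_{\mathrm{EM}}$ by the sum of per-element inclusion-probability gaps, observe that $\mathcal{A}(f')$ includes each element with probability $k/n$ by symmetry, invoke Lemmas~\ref{lem:bound-on-pA} and~\ref{lem:bound-on-pB} for $p_A$ and $p_B$, and let $n\to\infty$ with $k$ fixed so the $O(k^2/n)$ corrections vanish. Your version is slightly cleaner---you use the exact identity $|A|(p_A-k/n)=|B|(k/n-p_B)$ to halve the algebra and you drop the paper's unnecessary $\tfrac12$ prefactor on the marginal bound, which gives you the factor-of-two cushion you noted; the paper instead carries both halves through a longer rational-function computation that converges to exactly $\frac{(1-\sqrt{1-c})^2}{c}k$.
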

\begin{proof}
	We first note that, for any $e \in A \cup B$, the probability that $\mathcal{A}(f')$ selects $e$ is $p := k/n$, and that $p_A \geq p \geq p_B$.
	Then, we have
	\begin{align*}
	& d_{\mathrm{EM}}\left(\mathcal{A}(f_{e^*}), \mathcal{A}(f')\right)
	= \frac{1}{2}\sum_{e \in A}\left|p_A - \frac{k}{n}\right| + \frac{1}{2}\sum_{e \in B}\left|p_B - \frac{k}{n}\right| \\
	& = \frac{1}{2}\sum_{e \in A}\left(p_A - \frac{k}{n}\right) + \frac{1}{2}\sum_{e \in B}\left(\frac{k}{n} - p_B\right) \\
	& \geq
	\frac{1}{2}\sum_{e \in A}\left(\frac{k}{D} \left(1 - \frac{k-1}{2(D - k)}\right) - \frac{k}{n}\right) + \frac{1}{2}\sum_{e \in B}\left(\frac{k}{n} - \frac{(1 - c)k}{D-k}\right)  \tag{by Lemmas~\ref{lem:bound-on-pA} and~\ref{lem:bound-on-pB}} \\
	& = \frac{k}{4} \left(\frac{3 k (1 - \alpha) + \alpha +  (4 - 2c) D \alpha-2D-1}{(k - D) D} \cdot n + 4 \alpha - 2\right) \\
	& = \frac{k}{4} \frac{k (4 c \alpha^2 - (1 + 2 c) \alpha - 1) + (1 - \alpha) (1 + 4 c n \alpha - 4 c^2 n \alpha^2)}{(1 - c \alpha) (n (1 - c \alpha) - k)}.
	\end{align*}
	which converges (as $n \to \infty$) to
	\begin{align*}
	\frac{c (1 - \alpha) \alpha}{1 - c \alpha} \cdot k = \frac{{(1 - \sqrt{1-c})}^2}{c} \cdot k
	\end{align*}
	and hence the claim holds.
\end{proof}

\begin{proof}[Proof of Theorem~\ref{thm:lb-randomized-greedy}]
	We have
	\begin{align}
	\Pr[\mathcal{I}(f) \neq e^*] = \frac{1}{M + n} = O\left(\frac{1}{n}\right). \label{eq:whp-e^*}
	\end{align}
	by choosing $M = \Omega(n^2)$.
	
	Let $\mu$ be the probability distribution over $E \times (E-e^*)$ such that the marginal distributions of $\mu$ on the first and second coordinates are equal to the distributions of $\mathcal{I}(f)$ and $\mathcal{I}(f')$, and
	\[
	d_{\mathrm{EM}}(\mathcal{A}(f),\mathcal{A}(f'))
	= \sum_{e \in E,e' \in E-e^*}\Pr_{(e_f,e_{f'}) \sim \mu}[e_f = e \wedge e_{f'} = e'] d_{\mathrm{EM}}\bigl(\mathcal{A}(f_e),\mathcal{A}(f'_{e'})\bigr).
	\]
	Let $\mu_{e^*}$ be the distribution $\mu$ conditioned on the sampled element not being $e^*$, and let $\tilde{\mu}_{e^*}$ be the distribution closest to $\mu_{e^*}$ in total variation distance such that the marginal distributions of $\mu_{e^*}$ and $\tilde{\mu}_{e^*}$ on the first coordinate are the same and the marginal distribution of $\tilde{\mu}_{e^*}$ on the second coordinate is equal to $\mathcal{I}(f')$.
	Then, we have $d_{\mathrm{TV}}(\mu_{e^*},\tilde{\mu}_{e^*}) = O(1/n)$ because $d_{\mathrm{TV}}(\mu,\mu_{e^*}) = O(1/n)$ by~\eqref{eq:whp-e^*}.
	
	Then, we have
	\begin{align*}
	& d_{\mathrm{EM}}\bigl(\mathcal{A}(f),\mathcal{A}(f')\bigr) \\
	& \geq \sum_{e,e'}\Pr_{(e_f,e_{f'}) \sim \mu_{e^*}}[e_f = e \wedge e_{f'} = e'] d_{\mathrm{EM}}\bigl(\mathcal{A}(f_e),\mathcal{A}(f'_{e'})\bigr) - k \cdot d_{\mathrm{TV}}(\mu,\mu_{e^*})\\
	& \geq \sum_{e,e'}\Pr_{(e_f,e_{f'}) \sim \tilde{\mu}_{e^*}}[e_f = e \wedge e_{f'} = e'] d_{\mathrm{EM}}\bigl(\mathcal{A}(f_e),\mathcal{A}(f'_{e'})\bigr) - k \cdot \bigl(d_{\mathrm{TV}}(\mu,\mu_{e^*}) + d_{\mathrm{TV}}(\mu_{e^*},\tilde{\mu}_{e^*})\bigr)\\
	& \geq d_{\mathrm{EM}}(\mathcal{A}(f_{e^*}),\mathcal{A}(f')) - O\left(\frac{k}{n}\right) \\
	& \geq  \frac{{(1 - \sqrt{1-c})}^2}{c} \cdot k - \varepsilon, \tag{by Lemma~\ref{lem:lb-after-selecting-e^*}}
	\end{align*}
	by choosing $n$ to be large enough.
\end{proof}

\section{Distributed Algorithms}\label{appendix:distributed}
In this section, we apply the techniques discussed so far to the theory of distributed algorithms. In particular, we aim to show that for any algorithm that we have shown to attain $\Omega(k)$ sensitivity, its corresponding distributed algorithm has $\Omega(k)$ sensitivity. Here the correspondence is given in~\cite{barbosa2016new}.

For our model of distributed computation, we will follow the example of~\cite{barbosa2016new} and use the Massively Parallel Communication (MPC) model. Particular details we note for our purposes are for this model there are $M$ machines, each with space $S$, and we have $MS = \mathcal{O}(n)$, where $n$ is the size of our ground set. Additionally, following the example of~\cite{barbosa2016new} again, we will restrict both $M, S$ to be $< n^{1 - \Omega(1)}$. Note this means that both $M, S$ will be $\omega_n(1)$.

\begin{algorithm}[]
	\caption{Randomized Distributed Greedy}\label{alg:1stdistgreedy}
	\Procedure{\emph{\Call{GreeDi}{$f,k,m$}}}{
		\Input{Monotone submodular function $f\colon 2^E \to \mathbb{R}_+$, an integer $k$, a number of machines $m$.}
		\For{each $e \in E$}{
			Assign $e$ to a random machine $i \in [m]$\;
		}
		Let $V_i$ be the elements on machine $i$\;
		\For{each $i \in [m]$}{
			Let $S_i$ be the output of the deterministic greedy on $V_i$\;
		}
		Let $S = \bigcup_{i \in [m]} S_i$, and place it on machine $1$\;
		Let $S'$ be the output of the deterministic greedy on $S$\;
		Let $T = \arg \max\{f(S'), f(S_i)\}$\;
		\Return $T$\;
	}
\end{algorithm}

\subsection{Warm-up: Simple Greedy}
We begin with a simpler version of a distributed greedy than that given in~\cite{barbosa2016new}, which is found in~\cite{barbosa2015power}. Similar to the authors we will call this algorithm \Call{GreeDi}{}, and its definition is given in Algorithm~\ref{alg:1stdistgreedy}.

The function we will use is:
\begin{align}
f(S) = Cx_1 + (1-cx_1)\sum_{i=2}^{n/2}x_i + \left(1-\frac{c}{2}\right)\sum_{i=n/2+1}^n x_i \label{eqn::GreeDiTest}
\end{align}
where $x_i = 1[e_i \in S]$, $C$ is a large constant, and $c$ is the curvature.
\begin{theorem}\label{theorem:GreeDi}
	The algorithm \Call{GreeDi}{} ran on the function in Equation~\eqref{eqn::GreeDiTest} attains $\Omega(k)$ sensitivity for sufficiently large $n$, for all $0 < c \leq 1$.
\end{theorem}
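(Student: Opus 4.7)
The plan is to analyze \Call{GreeDi}{} on $f$ and on $f^{\setminus e_1}$, show that with high probability over the random assignment of elements to machines the two output distributions are supported on essentially disjoint sets, and then conclude that the earth mover's distance is $\Omega(k)$.

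First I would analyze the deterministic greedy on each individual machine. Because $C$ is chosen much larger than everything else, $e_1$ is always selected first on whichever machine receives it. Once $e_1$ is in the partial solution, curvature reduces the marginals of $e_2,\ldots,e_{n/2}$ from $1$ to $1-c$, while the marginals of $e_{n/2+1},\ldots,e_n$ remain at $1-c/2 > 1-c$; hence subsequent selections on $e_1$'s machine all come from the second half. On every other machine, first-half elements have marginal $1 > 1-c/2$, so greedy selects only from the first half. Thus under $f$, the union $S = \bigcup_i S_i$ contains $e_1$, some $k-1$ second-half elements from $e_1$'s machine, and $(m-1)k$ first-half elements from the other machines.

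Next I would analyze the second-phase greedy on this union and the final $\arg\max$ comparison. Running greedy on $S$ again picks $e_1$ first, then $k-1$ second-half elements, giving value $C + (k-1)(1-c/2)$. This beats the best single-machine value (which is either $k$ on a non-$e_1$ machine or exactly $C + (k-1)(1-c/2)$ on $e_1$'s machine, tied), so the returned set is of the form $\{e_1\} \cup T$ with $T \subseteq \{e_{n/2+1},\ldots,e_n\}$, $|T|=k-1$. For $f^{\setminus e_1}$, every machine selects $k$ first-half elements, the union lies entirely in $\{e_2,\ldots,e_{n/2}\}$, and the final output is some $k$-subset $T'$ of the first half.

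Since $\{e_1\}\cup T$ and $T'$ are disjoint, any coupling of the two output distributions has symmetric difference $2k$ on the event that both analyses go through. The remaining task, and the main technical obstacle, is to control the ``bad'' event that some machine fails to receive enough elements of the required type: $e_1$'s machine must hold at least $k-1$ elements of the second half, and in the $f^{\setminus e_1}$ run every machine must hold at least $k$ elements of the first half. Under the MPC assumption $M,S < n^{1-\Omega(1)}$ with $MS = O(n)$, each machine receives $\Theta(n/M)$ elements from each half in expectation, so a standard Chernoff bound shows these events fail with probability $\exp(-\Omega(n/M)) = o(1)$ for $n$ sufficiently large. Combining, $d_{\mathrm{EM}}(\mathcal{A}(f),\mathcal{A}(f^{\setminus e_1})) \geq (1-o(1)) \cdot 2k = \Omega(k)$, proving the theorem.
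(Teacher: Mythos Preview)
Your proposal is correct and follows essentially the same strategy as the paper: the paper isolates the key probabilistic step as a lemma showing that with probability $1-o_n(1)$ the union $S$ contains at least $k-1$ elements from each half of the ground set (by arguing that $e_1$'s machine receives $\geq k$ second-half elements via a binomial tail bound), and then invokes the earlier centralized-greedy analysis to conclude. Your version carries out the same machine-by-machine case split and second-phase analysis more explicitly, and uses Chernoff rather than a direct binomial sum for the tail bound, but the structure and ideas are identical.
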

Due to the prior analysis of the greedy algorithm in a non-distributed setting, and noting that the optimal set consists of $e_1$ and $k-1$ elements from $\{e_{n/2+1},\ldots, e_n\}$, we see that instead proving the following lemma will instantly give our result, noting that $e_1$ will always feature in the final solution and be chosen first:
\begin{lemma}\label{lem:GreeDiLem}
	Consider \Call{GreeDi}{} ran on the function in Equation~\eqref{eqn::GreeDiTest}.
	The set $S$ in the algorithm contains at least $k-1$ elements from each of the subsets $\{e_2, \ldots, e_{n/2}\}$ and $\{e_{n/2+1}, \ldots, e_n\}$ with probability $1-o_n(1)$.
\end{lemma}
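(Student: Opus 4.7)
The plan is to combine concentration of the random element assignment with a deterministic analysis of the greedy phase on each machine. Set $G_1 = \{e_2,\ldots,e_{n/2}\}$ and $G_2 = \{e_{n/2+1},\ldots,e_n\}$. The first step is to show that with probability $1 - o_n(1)$, every machine simultaneously receives at least $k+1$ elements from each of $G_1$ and $G_2$. The number of $G_1$-elements on any fixed machine is binomial with mean $(n/2-1)/m$. The MPC constraints $m, S = n^{1-\Omega(1)}$ together with $mS = O(n)$ force $n/m = n^{\Omega(1)}$; moreover the aggregation step of \Call{GreeDi}{}, which places $\bigcup_i S_i$ on a single machine of space $S$, forces $mk \leq S = O(n/m)$, hence $k = o(n/m)$. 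A Chernoff bound then gives a per-machine, per-group failure probability of $\exp(-n^{\Omega(1)})$, and a union bound over $2m \leq 2n$ events still leaves $1 - o_n(1)$ success.

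Next I condition on this high-probability event and analyze \Call{DeterministicGreedy}{} on each $V_i$. Let $i^\star$ be the unique machine containing $e_1$. On machine $i^\star$, the marginal $C = f(e_1)$ dominates all others, so the first pick is $e_1$. After that, every remaining element of $G_1 \cap V_{i^\star}$ has marginal $1 - c$, whereas every remaining element of $G_2 \cap V_{i^\star}$ has marginal $1 - c/2 > 1 - c$; since $V_{i^\star}$ contains at least $k+1$ elements of $G_2$, the remaining $k-1$ picks are all drawn from $G_2$. On any other machine $i \neq i^\star$, the indicator $x_1$ stays $0$, so elements of $G_1$ have marginal $1$ and elements of $G_2$ have marginal $1 - c/2 < 1$; since $V_i$ contains at least $k+1$ elements of $G_1$, all $k$ picks are drawn from $G_1$.

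Combining the two cases, $S = \bigcup_i S_i$ contains exactly $k-1$ elements of $G_2$, coming from $S_{i^\star}$, and at least $(m-1)k \geq k-1$ elements of $G_1$, coming from the $m-1 \geq 1$ other machines. Both counts meet the $\geq k-1$ threshold required by the lemma, on the high-probability event of the first step.

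The main potential obstacle is the concentration step itself: we need $n/m$ to be sufficiently larger than $k$ so that the Chernoff tail leaves the binomial mass above $k+1$ simultaneously on every machine and for both groups. As noted, this is handled cleanly by the MPC parameter regime, since $n/m$ and $k$ live at polynomially separated scales for sufficiently large $n$. Everything else is a deterministic analysis driven by the fact that the only coefficient in $f$ depending on $x_1$ is the one multiplying the elements of $G_1$.
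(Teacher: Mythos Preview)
Your argument is correct and follows the same two-phase structure as the paper: a concentration step showing machines receive enough elements from each group, followed by a deterministic analysis of greedy on each machine (the machine holding $e_1$ picks $e_1$ then $G_2$-elements, the others pick $G_1$-elements). The only notable difference is in how the concentration hypothesis is secured: the paper simply fixes $k$ as a constant and argues directly that the binomial tail below $k$ vanishes for the single machine holding $e_1$, whereas you derive $k=o(n/m)$ from the aggregation constraint $mk\le S$, apply Chernoff, and union-bound over all machines and both groups. Your route is slightly more careful and yields a statement uniform in $k$ up to the MPC limits; the paper's route is shorter but tacitly relies on the same ``other machines have enough $G_1$-elements'' fact that you make explicit.
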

\begin{proof}[Proof of Lemma~\ref{lem:GreeDiLem}]
	Throughout this proof, we fix $k$. Let the machine $V_i$ be the machine that $e_1$ is put on. Note that as we have $M < n^{1 - \Omega(1)}$, then for large enough $n$, we have that at least $k$ elements from the subset $\{e_{n/2+1}, \ldots, e_n\}$ must also be included in $V_i$.
	More specifically, let $M < n^{c'}$, with $c' < 1$. Note that the distribution of how many elements from the subset $\{e_{n/2+1}, \ldots, e_n\}$ feature on $V_i$ is a Bernoulli trial, with $p < n^{c'}$.
	Thus, we can bound the probability of choosing less than $k$ of these elements on $V_i$ by
	\begin{equation}
	\sum_{i=0}^k \binom{n^{c'}}{i} {\left(\frac{1}{n^{c'}}\right)}^i{\left(1-\frac{1}{n^{c'}}\right)}^{n-i}.
	\end{equation}
	We then note that this quantity converges to $0$ as $n \rightarrow \infty$, thus showing that $V_i$ must be assigned at least $k-1$ elements from $\{e_{n/2+1}, \ldots, e_n\}$. As $e_1$ will be selected by this machine, we see that its output will also contain $k-1$ elements from $\{e_{n/2+1}, \ldots, e_n\}$.
	From the definition of the function, it is clear that any other machine $V_j\; (j \neq i)$ will select as many elements from the subset $\{e_{2}, \ldots, e_{n/2}\}$ as it can, thus giving our conclusion.
\end{proof}
Lemma~\ref{lem:GreeDiLem} and the previous analysis on proportional algorithms give Theorem~\ref{theorem:GreeDi}.

\subsection{General Distributed Algorithm}
In this subsection we now tackle the general framework for distributed algorithms given in~\cite{barbosa2016new}. We use the same model for distributed computation, and restrict our attention to the setting of Section~\ref{subsec:GenHard}, where the algorithm was defined by $k$ distributions over positive integers $D_{k,1}, \ldots, D_{k,k}$. For any algorithm $\mathcal{A}$ that we have used in a non-distributed setting, we call $\mathcal{A}_{d}$ the corresponding distributed algorithm formed according to the argument in~\cite{barbosa2016new}.

\subsubsection{Algorithm Framework}
We first take the time to explain the algorithm framework for generating a distributed algorithm for a centralized algorithm explained in~\cite{barbosa2016new}.

Firstly, we take a centralized sequential algorithm $\mathcal{A}$ with approximation guarantee $\alpha$, and let $\varepsilon > 0$ be the desired accuracy. Namely, our distributed algorithm will have an approximation factor $\alpha - \varepsilon$.

We have a total of $gm$ machines, divided into $g$ groups with $g = \Theta(1/(\alpha \varepsilon))$. The number of machines $m$ is chosen arbitrarily, in accordance with the restrictions we already laid out when discussing our general MPC framework.

The algorithm consists of $\Theta(1/\varepsilon)$ runs, where at each run we maintain an incumbent best solution $S^*$, and a pool of elements $C_{r-1}$ for each run $r$, with $C_0 = \emptyset$. In a run $r$, we take each group of machines $g$ and distribute the ground set $V$ uniformly on them, as in \Call{GreeDi}{}. This is done for each group of machines, and then the algorithm $\mathcal{A}$ is ran on each machine, taking the ground set to be the elements chosen in the distribution, and the pool $C_{r-1}$. Each machine $i$ will return a set $S_{i,r}$.

After that is completed, we update $S^*$ if applicable, by taking the max of $S_{i,r}$ over all $i$ and the current best solution $S^*$, and update the pool of elements $C_{r-1}$ to include the union of all solutions $S_{i,r}$ constructed during run $r$ to give $C_r$. After all runs, the algorithm returns $S^*$.

\subsubsection{Sensitivity Analysis}
In this setting we have $g$ groups of $m$ machines. We take $g = \Theta(1/(\alpha\varepsilon))$, where $\alpha$ is the approximation factor of the corresponding centralized algorithm, and $\varepsilon$ is the additive error to that factor. We assume that $\varepsilon$ does not depend on $n$. Note that this means that the number of machines in each group is $\omega_n(1)$.

We prove the following:
\begin{theorem}
	Let $\mathcal{A}$ be some algorithm with $\Omega(k)$ sensitivity, as shown by Theorem~\ref{mainhard}. Then $\mathcal{A}_d$ also has $\Omega(k)$ sensitivity for some function.
\end{theorem}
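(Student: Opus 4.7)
The plan is to lift the near-equality hard instance from the proof of Theorem~\ref{mainhard} to the distributed setting and to track the pool $C_r$ through all $\Theta(1/\varepsilon)$ runs. Fix the sets $\{S_{k,i}\}$ and distributions $\{D_{k,i}\}$ that define $\mathcal{A}$, and let $I_{\max}$ be the associated index bound. I would construct $f\colon 2^E \to \mathbb{R}_+$ in the spirit of the near-equality scheme, with a distinguished element $e^*$, a ``before'' block $E_1$ and an ``after'' block $E_0$, each of size $\Theta(n)$, and a low-value tail. Crucially, I would inflate the weight of $e^*$ enough that any $k$-subset containing $e^*$ has strictly greater $f$-value than any $k$-subset disjoint from $\{e^*\}$, so that the global argmax step in the distributed framework cannot ``escape'' $e^*$ once it enters the pool.

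The first step is a concentration argument. With $M = mg = n^{1-\Omega(1)}$ machines and independent uniform assignment, a Chernoff bound shows that each machine's portion contains $\Theta(n/M) = n^{\Omega(1)}$ elements from each of $E_1$, $E_0$, and the tail with probability $1-o_n(1)$. Since $I_{\max}$ is a constant in $n$ (by the standing assumption that $S_{k,i}$ does not depend on $n$), this guarantees that on every machine the top-$I_{\max}$ ordinal positions are filled by representatives of the correct blocks, mirroring the centralized near-equality analysis. Consequently, the ordinal-position argument of Theorem~\ref{mainhard} applies locally on every machine.

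Next I would analyze the pool across rounds. In round~$1$, the unique machine holding $e^*$ selects $e^*$ with probability $\Omega(1)$ at some step $i^*$ with $k-i^* = \Omega(k)$ (a consequence of lifting the constant-factor-approximation assumption of $\mathcal{A}$ to the local function), and the remaining $\Omega(k)$ picks come from its local $E_0$; every other machine outputs a $k$-subset of its local $E_1$. Hence $C_1$ contains $e^*$ together with $\Omega(k)$ elements of $E_0$. Since $C_{r-1}$ is broadcast to every machine at the start of run $r$, $e^*$ is present in every machine's effective input from round~$2$ onward, so the same local analysis shows that each subsequent machine output contains $e^*$ and $\Omega(k)$ elements of $E_0$. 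By the inflated weight of $e^*$, the global incumbent $S^*$ then always contains $e^*$ together with $\Omega(k)$ elements of $E_0$. Conversely, on the deleted instance $f^{\setminus e^*}$ no machine or pool ever contains $e^*$ in any run, so every output, and therefore $S^*$, is a $k$-subset of $V\setminus\{e^*\}$ containing $\Omega(k)$ elements of $E_1$. Since $E_0\cap E_1 = \emptyset$, the symmetric difference of the two $S^*$'s is $\Omega(k)$, yielding $d_{\mathrm{EM}}(\mathcal{A}_d(f),\mathcal{A}_d(f^{\setminus e^*})) = \Omega(k)$.

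The main obstacle is controlling the argmax step once the pool has grown. Without a large weight on $e^*$, an $E_1$-dominated set accumulated across runs could plausibly beat every $e^*$-containing output in $f$-value and thereby collapse the sensitivity; forcing $f(e^*)$ to dominate any $e^*$-free $k$-set sidesteps this. A secondary technical point is ensuring that the local ordinal structure of marginals is preserved on each machine after sub-sampling, which is resolved by the Chernoff concentration combined with $I_{\max}$ being independent of $n$, so only a constant number of top positions need to be controlled on each machine.
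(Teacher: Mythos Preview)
Your proposal has a real gap at the step where you assert that the machine holding $e^*$ ``selects $e^*$ with probability $\Omega(1)$ at some step $i^*$ with $k-i^*=\Omega(k)$.'' By inflating the weight of $e^*$ you force it to occupy ordinal position~$1$ on every machine that sees it. But the centralized argument behind Theorem~\ref{mainhard} never claims that position~$1$ is selected early: the large-element scheme only yields $\sum_{i=1}^k P_{(1-\alpha/2)k}(e_i)\ge \alpha k/2$, from which one extracts \emph{some} index $j\in[k]$ with $P_{(1-\alpha/2)k}(e_j)=\Omega(1)$, and the near-equality instance is then built around that specific $j$. The approximation guarantee forces $e^*$ into the final output with probability close to~$1$, but it places no constraint on \emph{when} during the $k$ steps it is picked; an \Call{IndependentSequential}{} rule can perfectly well omit index~$1$ from $S_{k,i}$ for all $i<k$ and still be a constant-factor approximation. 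If $e^*$ is chosen only at the last few steps, the subsequent picks from $E_0$ are $o(k)$ and your symmetric-difference bound collapses. The same problem recurs verbatim in your round-$2{+}$ analysis, since every machine again sees $e^*$ at ordinal position~$1$.

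The paper avoids this by planting $k$ large-weight elements $e_1,\dots,e_k$ (with the distinguished $e_j$ among them) rather than a single $e^*$. For large $n$ these $k$ elements land on distinct machines with probability $1-o_n(1)$, and the approximation guarantee forces each machine to include its one large element in its output; hence all of $e_1,\dots,e_k$ enter $C_1$. From run~$2$ on, every machine carries the entire block $\{e_1,\dots,e_k\}$ at ordinal positions $1,\dots,k$, so $e_j$ sits at the \emph{correct} position $j$ on every machine and the centralized analysis transfers directly. The idea you are missing is that one must seed the pool with the whole top-$k$ block, not just the single pivot element, precisely so that the pivot's ordinal position is preserved across machines and rounds; inflating a single element's weight destroys that position and with it the ``selected early'' guarantee.
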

\begin{proof}
	Consider the function:
	\begin{equation}
	f(S) = C\sum_{i=1}^k x_i + (1-cx_j)\sum_{i=k+1}^{n/2} x_i + \left(1-\frac{c}{2}\right)\sum_{n/2+1}^{n} x_i
	\end{equation}
	where $x_i = 1[e_i \in S]$, $e_j$ is an element chosen similarly to Theorem~\ref{mainhard}, and $C$ is a very large constant. This function roughly serves the purpose of the large-element scheme in the centralized case.
	
	We begin by considering the first run. Note that for very large $n$ and fixed $k$, we have that with probability $1 - o_n(1)$ that within each group $V$ of machines, that each machine $V_i$ will be assigned at most one of the elements $\{e_1, \ldots, e_k\}$. Due to the fact that our algorithm is a constant-factor approximation algorithm, that element must be selected in each machine. Therefore, we see that all of $\{e_1, \ldots, e_k\}$ must feature in a solution, and so we have that $C_1$ will contain all of those elements. 
	
	Additionally, due to the argument of Lemma~\ref{lem:GreeDiLem}, we see that within each group of machines $V$, that the machine $V_i$ which is assigned $e_j$ must also be assigned at least $k-1$ elements from the set $\{e_{n/2+1}, \ldots, e_n\}$ with probability $1 - o_n(1)$. Therefore, we see that $C_1$ will contain $k-1$ elements from $\{e_{n/2+1}, \ldots, e_n\}$. Similarly, it must also contain $k-1$ elements from $\{e_{k+1}, \ldots, e_{n/2}\}$ with probability $1 - o_n(1)$, by the logic used in the previous Theorem.
	
	As the sets $C_r$ grow larger with $r$, we see that on the final run, all machines in all groups will run the greedy algorithm on a set which contains all of the large elements $\{e_1, \ldots, e_k\}$, and at least $k-1$ of both $\{e_{k+1}, \ldots, e_{n/2}\}$ and $\{e_{n/2+1}, \ldots, e_n\}$ with probability $1 - o_n(1)$.
	
	The prior analysis of the algorithm $\mathcal{A}$ thus gives our conclusion. We note that from earlier argument, from run $2$ onwards the algorithm must behave as in the centralized case for this function, excluding the possibility of a higher objective value, as $e_j$ will always be present in the correct position.
\end{proof}

\section{Average Sensitivity}\label{appendix:AveAll}
Throughout the course of this document, we have presented all of our results in the context of worst-case sensitivity. We can define average sensitivity similar to Definition \ref{defn:sensRand} but instead of taking a maximum over deleting any element, we take an expectation over deleting one randomly:
\begin{definition}
	Let $\mathcal{A}$ be a randomized algorithm that takes a function $f\colon 2^E \to \mathbb{R}$ and returns a set $S \subseteq E$.
	The \emph{average sensitivity} of $\mathcal{A}$ on $f$ is
	\[
	\mathbb{E}_{e \in E} \left\{d_{\mathrm{EM}}(\mathcal{A}(f), \mathcal{A}(f^{\setminus  e}))\right\},
	\]
	where we identified $\mathcal{A}(f)$ and $\mathcal{A}(f^{\setminus e})$ with the distributions of $\mathcal{A}(f)$ and $\mathcal{A}(f^{\setminus e})$, respectively.
\end{definition}

In this section, we extend all the results to average sensitivity using a simple technique. Many of the worst-case results relied on constructing a function that was nearly modular, apart from one element $e_j$ which when selected, greatly changed the order of marginals, giving a factor of $(1-ce_j)$ in front of many of the marginals. To adapt our results, we will change that factor of $1-ce_j$ to $1-e\mathcal{I}$, where $\mathcal{I}$ corresponds to the indicator variable when a subset of elements of size proportional to $k$ is selected.

From a high level, we will see that when any element is deleted in $\mathcal{I}$, the function will always be modular, and this will give us our sensitivity in a similar way to the worst-case setting. In the worst-case sensitivity setting the trivial upper bound on sensitivity was $\mathcal{O}(k)$, but here it will be $\mathcal{O}(k^2/n)$.


\subsection{General case}\label{appendix:PropAve}
\begin{theorem}\label{theorem:ProportionalAve}
	For proportional algorithms, some function $f$ attains $\Omega(k^2/n)$ sensitivity.
\end{theorem}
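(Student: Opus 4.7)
My plan is to adapt the worst-case proportional construction from Section~\ref{subsec:proportional} (the function $Ax_1 + B_i(1-x_1)\sum x_i + C\sum x_i$) so that an entire subset of size $\Theta(k)$---rather than the single element $e_1$---serves as the trigger that flips the relative marginals. Fix $T \subseteq E$ with $t := |T| = \lfloor k/2 \rfloor$, partition $E = T \sqcup R \sqcup Q$ with $|R|,|Q| \geq k$, let $\mathcal{I}(S) := 1[T \subseteq S]$, and define
\[
f(S) = \sum_{e \in T} A_e x_e + \bigl(1 - \mathcal{I}(S)\bigr)\sum_{e \in R} B_e x_e + \sum_{e \in Q} C_e x_e.
\]
Since $\mathcal{I}$ is supermodular, $1 - \mathcal{I}$ is submodular; a short case analysis on whether $T \setminus S = \{e\}$ when adding $e \in T$ verifies that $f$ is monotone submodular provided $A_e \geq \sum_{e' \in R} B_{e'}$ for every $e \in T$. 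The crucial feature is that whenever any $e^* \in T$ is deleted from the ground set, $\mathcal{I}$ is identically zero on $2^{E \setminus \{e^*\}}$, so $f^{\setminus e^*}$ is modular.

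Using a strict descending weight hierarchy $A_1 \gg A_2 \gg \cdots \gg A_t \gg B_{t+1} \gg \cdots \gg B_{t+r} \gg C_{t+r+1} \gg \cdots$, with each consecutive ratio chosen large enough that the top remaining element has relative marginal gain exceeding $1 - \varepsilon_\mathcal{R}(\delta)$ at every step for some $\delta = o(1/k)$, the arguments behind Claims~\ref{claim:PropClaim1}--\ref{claim:PropClaim2} transfer directly. On input $f$, the proportional algorithm selects the elements of $T$ in weight-order with probability $1 - O(k\delta)$; at that moment $\mathcal{I}(S)$ snaps to $1$, the $R$-marginals collapse to $0$, and the remaining $k - t$ picks are drawn from the top of $Q$. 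On input $f^{\setminus e_j}$ for any $e_j \in T$, the indicator is pinned at $0$ throughout, so after the first $t-1$ picks exhaust $T \setminus \{e_j\}$ the strictly dominant $B$-weights drive the remaining $k - t + 1$ picks into $R$.

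The two outputs agree only on $T \setminus \{e_j\}$, and their $Q$- and $R$-portions together with $\{e_j\}$ are pairwise disjoint, so the earth mover's distance is at least $2(k - t) + 2 = \Omega(k)$ for each of the $t = \Theta(k)$ choices of $e_j \in T$; deletions of elements in $R$ contribute $0$, and deletions in $Q$ contribute only $O(1)$. Summing and dividing by $n$ yields
\[
\mathbb{E}_{e \in E} \bigl[d_{\mathrm{EM}}(\mathcal{A}(f),\mathcal{A}(f^{\setminus e}))\bigr] \;\geq\; \frac{t \cdot \Omega(k) + O(k)}{n} \;=\; \Omega\!\left(\frac{k^2}{n}\right).
\]
The main obstacle I anticipate is the bookkeeping of the weight hierarchy: the $1-\varepsilon_\mathcal{R}(\delta)$ dominance inequality at each step depends on the current sum of all remaining marginals, so the consecutive ratios must be chosen geometrically in the proportional-rule parameters, and one must carefully verify via a union bound over $k$ steps that the claimed trajectory happens with probability $1 - o(1)$ on both the original and deleted ground sets, so that failure contributions to $d_{\mathrm{EM}}$ remain lower-order.
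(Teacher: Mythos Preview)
Your proposal is correct and follows essentially the same approach as the paper: the paper's construction is precisely $f(S) = \sum_{i=1}^{k/2} A_i x_i + (1-\mathcal{I})\sum_{i=k+2}^n B_i x_i + \sum_{i=k/2+1}^{k+1} C_i x_i$ with $\mathcal{I} = 1[\{e_1,\ldots,e_{k/2}\} \subseteq S]$ and the same $A \gg B \gg C$ hierarchy, and the three claims you outline (behavior on $f$, on $f^{\setminus e_j}$ for $e_j \in T$, and for $e_j \notin T$) match the paper's claims verbatim, with the averaging over the $k/2$ trigger elements yielding $\Omega(k^2/n)$. The only cosmetic differences are that the paper takes $|Q| = k/2+1$ rather than $|Q| \geq k$, and that your ``contribute $0$'' for $R$-deletions should strictly be $o(1)$ due to the failure events, but this is absorbed in the lower-order terms exactly as you anticipate.
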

\begin{proof}
	We assume in this proof that $k$ is even, for the case where $k$ is odd we can change instances of $k/2$ to $(k+1)/2$. Let the indicator $\mathcal{I}$ take the value $1$ if all elements $e_1, \ldots, e_{k/2}$ are selected. Then consider the function:
	\begin{equation}
	f(e_1, \ldots, e_n) = \sum_{i=1}^{k/2} A_i e_i + (1-\mathcal{I})\sum_{i=k+2}^n B_i e_i + \sum_{i=k/2+1}^{k+1} C_i e_i
	\end{equation}
	and the constants $A_i, B_i, C$ are such that $A_1 \gg \ldots \gg A_{k/2} \gg B_{k+2} \gg \ldots \gg B_n \gg C_{k/2+1} \gg \ldots \gg C_{k+1}$.
	In particular, for small enough $\delta = o(1/n)$ and $\varepsilon = \varepsilon_\mathcal{R}(\delta)$, we want the following:
	\begin{align*}
	\frac{A_i}{\sum_{j=i+1}^{k+2}A_j+\sum_{j=k+2}^n B_i+\sum_{k/2+1}^{k+1}C_i} &> 1-\varepsilon, \\
	\frac{B_i}{\sum_{j=i+1}^n B_j+\sum_{k/2+1}^{k+1}C_i} &> 1-\varepsilon, \\
	\frac{C_i}{\sum_{j=i+1}C_j} &> 1-\varepsilon,
	\end{align*}
	where the first equation holds for all $i \in \{1, \ldots, k/2\}$, the second holds for all $i \in \{k+2, \ldots, n\}$, and the third holds for all $i \in \{k/2+1, \ldots, k+1\}$. It is clear that $f$ is monotone submodular.
	
	We now analyze the output of the sequential algorithm on $f$ and $f^{\setminus e_i}$ for $i \in [n]$.
	
	\begin{claim}
		We have
		\[
		\Pr\left[|\mathcal{A}(f,k,\mathcal{R}) \cap \{e_1,\ldots,e_{k+1}\}| = k\right] > 1 - (k/2)\delta.
		\]
	\end{claim}
	\begin{proof}
		From the choice of $A, B, C$, we first choose the sequence $e_1, \ldots, e_{k/2}$ with probability at least $1-\delta$ at each step. An application of Bernoulli's inequality shows this happens with probability at least $1-(k/2)\delta$.
		When this happens, the marginal gains of $e_i$ for $i > k+1$ are $0$ in subsequent steps, and hence we choose only elements $e_i$ for $i \leq n/2$.
	\end{proof}
	\begin{claim}
		We have
		\[
		\Pr\left[|\mathcal{A}(f^{\setminus e_i}, k,\mathcal{R}) \cap \{e_{k+2}, \ldots, e_n\}| = k/2+1 \right] > 1-k\delta , i \in{1, \ldots, k/2}
		\]
	\end{claim}
	\begin{proof}
		As in the previous claim, we see that we first choose the elements $\{e_1, \ldots, e_{k/2}\} \setminus e_i$ with probability $(1-\delta)^{k/2-1}$.
		
		By the choice of $B_i, C_i$, and the fact that $\mathcal{I}$ will still take the value $0$, we see that we successively choose the elements $e_{k+2}, \ldots, e_{3k/2+3}$ with probability at least $1-\delta$ in each case.
		
		We then see that the probability of selecting all of these elements in turn is at least $(1-\delta)^k$, and using Bernoulli's inequality gives our result.
	\end{proof}
	\begin{claim}
		We have
		\[
		\Pr\left[|\mathcal{A}(f^{\setminus e_i}, k,\mathcal{R}) \cap \{e_{k+2}, \ldots, e_n\}| \leq 1 \right] > 1-k\delta , i > k/2
		\]
	\end{claim}
	\begin{proof}
		Similar to the first claim in this proof, we see that we will first select the elements $e_1, \ldots, e_{k/2}$ with high probability. If $i > k+1$, then we select $\{e_{k/2+1},\ldots,e_{k+1}\}$ with high probability, and the probability of this happening will be at least $1 - k\delta$ by earlier argument.
		
		Instead, if $k/2 < i \leq k+1$, then after selecting $e_1, \ldots, e_{k/2}$, we will then select $\{e_{k/2+1},\ldots,e_{k+1}\} \setminus e_i$ and then $e_{k+2}$. This all happens with high probability, and again by similar argument it will be at least $1 - k\delta$, thus giving our conclusion.
	\end{proof}
	The claims above immediately show the sensitivity of $\mathcal{A}(\cdot,k,\mathcal{R})$ is $\Omega(k^2/n)$
\end{proof}

\begin{theorem}\label{theorem:RandomizedAve}
	The Algorithm \Call{RandomizedGreedy}{} from Algorithm~\ref{alg:greedy} attains $\Omega(k^2/n)$ average sensitivity for some function $f$.
\end{theorem}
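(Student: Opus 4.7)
The plan is to mimic how Theorem~\ref{theorem:ProportionalAve} adapted the worst-case proportional lower bound to an average-sensitivity lower bound, and apply the same ``trigger on a $\Theta(k)$-subset'' idea to the function used in Section~\ref{subsec:randomized-greedy} for the worst-case analysis of \Call{RandomizedGreedy}{}. Specifically, I would study
\[
f(S) \;=\; C\sum_{i=1}^{k/2} x_i \;+\; (1-\mathcal{I})\sum_{i=k/2+1}^{2k+1} x_i \;+\; \tfrac{1}{2}\sum_{i=2k+2}^{n} x_i, \qquad x_i := 1[e_i\in S],
\]
where $\mathcal{I}$ is an indicator attached to the super-group $\{e_1,\ldots,e_{k/2}\}$ (first attempt: $\mathcal{I} = \prod_{i=1}^{k/2} x_i$) and $C$ is large enough that $f$ is monotone submodular and the super-group occupies $k/2$ of the top-$k$ slots throughout the execution of \Call{RandomizedGreedy}{}. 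The intuition is the one from the proportional-average proof: deleting any $e_j$ with $j\le k/2$ forces $\mathcal{I}\equiv 0$, so $f^{\setminus e_j}$ behaves modularly and the output is pinned to picks from $\{e_{k/2+1},\ldots,e_{2k+1}\}$; whereas on $f$ the trigger fires with noticeable probability, after which the middle-group marginals drop to $0$ and the remaining picks come from the low group $\{e_{2k+2},\ldots,e_n\}$.

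The proof then has four steps. First, mirror Claim~\ref{claim:randClaim1} to show that for every $j\le k/2$ the algorithm on $f^{\setminus e_j}$ selects the $k/2-1$ remaining super elements together with $k/2+1$ middle elements. Second, analyse the run on $f$: derive, as in Claim~\ref{claim:randClaim2}, probabilities $\pi_i$ that the trigger first fires at step $i$, so that after step $i$ all subsequent picks come from the low group; the gap in the earth-mover distance is then
\[
d_{\mathrm{EM}}\!\bigl(\mathcal{A}(f),\mathcal{A}(f^{\setminus e_j})\bigr) \;\ge\; \sum_{i}\pi_i\,(k-i) \;=\; \Omega(k),
\]
for every $j\le k/2$, using the same summation trick that closed the worst-case \Call{RandomizedGreedy}{} lower bound. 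Third, verify that for $e_i$ outside the super-group the distributions of $\mathcal{A}(f)$ and $\mathcal{A}(f^{\setminus e_i})$ differ only by $o(k)$ in earth-mover distance, because the top-$k$ structure is essentially unperturbed. Fourth, average over $e\in E$:
\[
\mathbb{E}_{e\in E}\bigl[d_{\mathrm{EM}}(\mathcal{A}(f),\mathcal{A}(f^{\setminus e}))\bigr] \;\ge\; \frac{k/2}{n}\cdot\Omega(k) \;=\; \Omega(k^2/n),
\]
which matches the trivial upper bound.

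The technical heart of the argument---and its main obstacle---is guaranteeing $\sum_i \pi_i(k-i) = \Omega(k)$ in step two. Because \Call{RandomizedGreedy}{} picks \emph{uniformly} from the top-$k$ rather than in proportion to marginals, each remaining super is selected at step $t$ with probability only $(k/2-j_t)/k$, where $j_t$ is the number of supers already picked. Hence the literal ``all $k/2$ supers are in $S$'' trigger is a coupon-collector event whose expected completion time is $k\,H_{k/2}\sim k\ln k \gg k$, and it fails to fire within $k$ iterations with high probability. The proof therefore has to weaken $\mathcal{I}$: a natural attempt is to take $\mathcal{I}=1$ once a carefully chosen threshold number of supers has been picked, so that in the unperturbed run the trigger fires by a concentration argument at some $\tau=\Theta(k)$, while deleting a super shifts the hitting time of the threshold enough (at least for a constant fraction of super-indices) to swing an $\Omega(k)$-sized block of picks from the middle group into the low group. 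Carrying out this calibration, together with a coupling between the two runs of the algorithm that converts a difference in hitting times into an earth-mover bound, is where the real work lies and where the RandomizedGreedy analysis is genuinely harder than the proportional one.
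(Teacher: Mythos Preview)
Your construction and four-step outline match the paper exactly: same function, same indicator $\mathcal{I}=\prod_{i\le k/2}x_i$, same division into claims, and the same final averaging restricted to deletions from $\{e_1,\dots,e_{k/2}\}$. You are also right that the all-$k/2$-supers trigger is the crux, and your coupon-collector objection is sound: each fixed super sits in the top-$k$ set at every step and is chosen with probability exactly $1/k$, so after $k$ rounds the expected number of un-picked supers is $(k/2)(1-1/k)^k\approx k/(2e)$ and hence $\Pr[\mathcal{I}=1]=o(1)$. The paper's justification of its Claim~\ref{claim:BuchbinderAve1} does not actually address this; it reasons from the \emph{necessary} condition ``$e_1$ must be selected for $\mathcal{I}=1$'' to a \emph{lower} bound on the probability that $\mathcal{I}$ has fired by a given step, which is the wrong direction. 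So the difficulty you flag is a genuine gap in the paper's written argument as well, not only in your first attempt.

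Your proposed repair, however, cannot close that gap. If $\mathcal{I}$ is relaxed to fire once a threshold $\tau<k/2$ of supers has been picked, then deleting a single super $e_j$ still leaves $k/2-1\ge\tau$ supers available, and the trigger still fires on $f^{\setminus e_j}$; the hitting time of level $\tau$ shifts by only $O(1)$ in expectation when one of $\Theta(k)$ supers is removed, so the two runs can be coupled to within $O(1)$ picks and $d_{\mathrm{EM}}\bigl(\mathcal{A}(f),\mathcal{A}(f^{\setminus e_j})\bigr)=O(1)$ rather than $\Omega(k)$. The essential feature that makes the worst-case argument work---that removing the pivot \emph{permanently disables} the indicator---is precisely what any threshold relaxation discards, and no calibration of $\tau$ can restore it while keeping the super-group of size $\Theta(k)$. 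A correct average-sensitivity lower bound for \Call{RandomizedGreedy}{} therefore seems to need a genuinely different construction rather than a tweak of this one.
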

\begin{proof}
	We assume in this proof that $k$ is even, for the case where $k$ is odd we can change instances of $k/2$ to $(k+1)/2$. Let the indicator $\mathcal{I}$ take the value $1$ if all elements $e_1, \ldots, e_{k/2}$ are selected. The function we use is $f: 2^E \rightarrow \mathbb{R}$ with:
	\begin{equation}
	f(S) = C\sum_{i=1}^{k/2}x_i + (1-\mathcal{I})\sum_{i=k/2+1}^{2k+1}x_i + 0.5\sum_{i = 2k+2}^{n}x_i
	\text{ where }x_i = 1[e_i \in S]\text{ for each }i \in [n].
	\end{equation}
	where $C$ is large enough so the function remains monotone submodular. We now analyze the output of this algorithm on $f$ and $f^{\setminus e_i}$.
	\begin{claim}\label{claim:BuchbinderAve1}
		For $p_i = {\left(\frac{k-1}{k}\right)}^i$, we have
		\begin{equation}
		\Pr\left[|\mathcal{A}(f, k,\mathcal{R}) \cap \{e_{2k+2}, \ldots, e_n\}| = i \right] > p_{k-i-1}/k. \nonumber
		\end{equation}
	\end{claim}
	\begin{proof}
		We see that $p_i$ is equal to the probability that $e_1$ is not selected after $i$ iterations. Additionally, we see after we get $\mathcal{I}=1$, that all further elements under consideration will be from the set $e_{2k+2}, \ldots, e_n$.
		
		Noting that we require $e_1$ to be selected to get $\mathcal{I} = 1$, we see that we can bound the required probability below by $p_{k-i-1}/k$, such as in the worst-case sensitivity proof.
	\end{proof}
	\begin{claim}
		For $p_i = {\left(\frac{k-1}{k}\right)}^i$, we have
		\begin{equation}
		\Pr\left[|\mathcal{A}(f^{\setminus e_i}, k,\mathcal{R}) \cap \{e_{2k+2}, \ldots, e_n\}| = i \right] > p_{k-i-1}/k, i > k/2 \nonumber
		\end{equation}
	\end{claim}
	\begin{proof}
		In this case we see that no large elements are deleted, so the indicator variable $\mathcal{I}$ can still be set to $1$. The proof now proceeds as in the previous claim.
	\end{proof}
	
	\begin{claim}\label{claim:BuchbinderAve3}
		We have
		\[
		|\mathcal{A}(f^{\setminus e_i},k,\mathcal{R}) \cap \{e_1,\ldots,e_{2k+1}\} \setminus e_i| = k, i \in \{1, \ldots, k/2\}
		\]
	\end{claim}
	\begin{proof}
		As we select from the top $k$ marginals, at the first step we will select one of $e_1, \ldots, e_{k+1}$ (excluding $e_i$), and then add $e_{k+2}$ to the set of elements that we could possibly choose. Note that as the indicator variable $\mathcal{I}$ will never take the value $1$, as we have deleted one of the elements necessary.
		
		After $k$ steps, the element with maximum index we could include is $e_{2k+1}$, giving our claim.
	\end{proof}
	
	These claims enable us to give a simple bound on the sensitivity of this function:
	\begin{align*}
	&d_{\mathrm{EM}}(\mathcal{A}(f), \mathcal{A}(f^{\setminus  e})) \nonumber \\
	&\geq \sum_{i=1}^{k/2}d_{\mathrm{EM}}(\mathcal{A}(f), \mathcal{A}(f^{\setminus  e_i})) \nonumber \\
	&\geq \frac{k}{2n}\sum_{j=0}^{k-1} p_{k-j-1} \frac{2k-2j}{k} \\
	&=
	\frac{k}{2n}2k \left(1 - 2 {\left(\frac{k-1}{k}\right)}^k\right)
	= \Omega(k^2/n)
	\end{align*}
	giving us our result. The first inequality follows from only considering when elements with index $k/2$ are deleted as a simple upper bound, and the second inequality follows from the first and third Claims showing we can treat each summand the same in terms of sensitivity.
	
\end{proof}

\subsection{Bounded curvature}\label{appendix:AveGeneral}
\begin{theorem}\label{theorem:AveGeneral}
	Let $\mathcal{A}$ be a sequential algorithm that satisfies the hypotheses of Theorem~\ref{mainhard}. Then there is a function that attains $\mathcal{O}(k^2/n)$ sensitivity, even for curvature equal to $0 < c \leq 1$.
\end{theorem}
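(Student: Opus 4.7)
The plan is to carry the construction of Theorem~\ref{mainhard} over to the average-sensitivity setting using the same indicator trick that converts the worst-case lower bounds of Section~\ref{sec:sequential} into their average-sensitivity counterparts (Theorems~\ref{theorem:ProportionalAve} and~\ref{theorem:RandomizedAve}). In the near-equality scheme of Theorem~\ref{mainhard}, the marginal ordering is shifted by the factor $(1-c\,x_{e_j})$ tied to a single pivot $e_j$. For the average-sensitivity version I would fix a set $T\subseteq E$ with $|T|=\Theta(k)$ and replace that factor by $(1-c\,\mathcal{I})$, where $\mathcal{I}=\prod_{e\in T}x_e$ is the indicator that every element of $T$ has been selected. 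The remaining three-block structure ($E_1\supseteq T$ of near-equal marginals and $E_0$ of slightly lower marginals) is retained, and $T$ collectively plays the role previously played by the single element $e_j$.

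The argument would then proceed in four stages. First, I would choose $T$ of size $\Theta(k)$ from among the high-priority positions identified by the large-element scheme so that on the full ground set the event $\{\mathcal{I}=1\}$ holds with $\Omega(1)$ probability; conditional on this, the near-equality analysis from Theorem~\ref{mainhard} applies verbatim and $\mathcal{A}(f)$ commits $\Omega(k)$ of its remaining picks to $E_0$. Second, deleting any $e\in T$ pins $\mathcal{I}\equiv 0$, which reduces the function to a modular one whose maximizer on $E\setminus\{e\}$ lies entirely inside $E_1$, so $\mathcal{A}(f^{\setminus e})$ never touches $E_0$. Third, Lemma~\ref{lem:hardmain}, applied with $e^*$ taken to be the last element of $T$ that $\mathcal{A}$ selects, yields $d_{\mathrm{EM}}(\mathcal{A}(f),\mathcal{A}(f^{\setminus e}))=\Omega(k)$ for every $e\in T$. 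Fourth, averaging this $\Omega(k)$ contribution from the $|T|=\Theta(k)$ ``critical'' deletions uniformly over the $n$ possible single-element deletions delivers the claimed $\Omega(k^2/n)$ average-sensitivity lower bound.

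The main obstacle is the first stage: upgrading the expectation bound $\sum_{i=1}^{k}P_{(1-\alpha/2)k}(e_i)\geq \alpha k/2$ of Equation~\eqref{equation:largescheme} to the joint statement $\Pr[T\subseteq \mathcal{A}(f)]=\Omega(1)$. Constant-factor approximation on its own does not force all of a fixed $\Theta(k)$-sized set to be selected, and a naive independence argument would give an exponentially small joint probability. The resolution must exploit the \Call{IndependentSequential}{} structure: because each $S_{k,i}$ is bounded independently of $n$ and decisions use only ordinal information, we can calibrate the weights on $T$ hierarchically (in the spirit of $A_1\gg A_2\gg\cdots$ in Theorem~\ref{theorem:ProportionalAve}) so that at every early step either the considered positions $S_{k,i}$ lie entirely inside the unchosen part of $T$ or missing a single remaining $T$-element blows up the approximation gap beyond the permitted factor $\alpha$. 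A careful choice along these lines pushes $\Pr[T\subseteq \mathcal{A}(f)]$ to $1-o(1)$, closing the loop. Monotonicity, submodularity, and curvature at most $c$ of the resulting function are immediate from the construction, exactly as in the verification inside Theorem~\ref{mainhard}.
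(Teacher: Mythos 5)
Your overall construction matches the paper's: you replace the single pivot $e_j$ in Theorem~\ref{mainhard} by a set $T$ of size $\Theta(k)$ and use the product indicator $\mathcal{I}=\prod_{e\in T}x_e$ to gate the curvature factor, and then observe that the $\Theta(k)$ ``critical'' deletions out of $n$ give the $\Omega(k^2/n)$ average-sensitivity lower bound. The paper does exactly this, and in fact packages the required deduction as a set-valued generalization of Lemma~\ref{lem:hardmain} (their Lemma~\ref{lem:hardmainAve}, with $P_{i^*}(T)$ the probability that \emph{all} of $T$ is selected by step $i^*$), rather than applying Lemma~\ref{lem:hardmain} with a run-dependent ``last element of $T$ selected'' as you do; the lemma as stated requires a fixed $e^*$, so the paper's reformulation is the cleaner way to make that step rigorous.

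The substantive issue is in your ``main obstacle'' paragraph. You correctly flag that the constant-factor approximation only yields the averaged bound $\sum_i P_{(1-\alpha/2)k}(e_i)\geq \alpha k/2$, hence $\Omega(k)$ individual elements each selected with probability $\Omega(1)$, which is strictly weaker than the joint statement $\Pr[T\subseteq\mathcal{A}(f)\text{ by step }i^*]=\Omega(1)$ that Lemma~\ref{lem:hardmainAve} needs. But your proposed remedy does not work in this algorithm class. Setting weights on $T$ hierarchically ($A_1\gg A_2\gg\cdots$) is exactly the proportional-algorithm trick from Theorem~\ref{theorem:ProportionalAve}: it works there because a proportional decision rule reacts to the magnitude of the marginals. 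An \Call{IndependentSequential}{} algorithm, by the very hypothesis of Theorem~\ref{mainhard}, selects based \emph{only} on the ordinal ranking of marginals; scaling the weights hierarchically produces the same ordering as setting them all equal with an arbitrary tie-break, so it cannot change a single selection probability. Concretely, for \Call{RandomizedGreedy}{} on the large-element scheme, the selected subset of the $k$ large elements is an exchangeable random subset, and for any fixed $T$ with $|T|=\gamma k$ the joint probability $\Pr[T\subseteq S_{i^*}]$ decays roughly like $(m/k)^{\gamma k}$ where $m<k$ is the number of large elements chosen by step $i^*$ --- exponentially small, not $\Omega(1)$. So your stage-one calibration does not close the gap. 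It is worth noting that the paper's own proof also passes over this point with a single sentence (``we must select $\Omega(k)$ of the large elements with probability $\Omega(1)$''), which reads as the per-element bound rather than the joint one; you have identified a real weakness, but you have not repaired it, and hierarchical weighting cannot be the repair inside the ordinal-only model.
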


For this Theorem, we can use a variant of Lemma~\ref{lem:hardmain}. For this Lemma, we extend the notation $P_{i^*}$ to take a set as its argument. We define $P_{i^*}(S)$ to be the probability that all elements in $S$ are selected by step $i^*$. We can now state the Lemma as follows:
\begin{lemma}\label{lem:hardmainAve}
	Let $\mathcal{A}$ be an algorithm, $f:2^E \to \mathbb{R}_+$ be a monotone submodular function, $T \subset E$ be a collection of elements of size proportional to $k$.
	Suppose that we have $P_{i^*}(T)$ is $\Omega(1)$ for some $i^* \in [k]$ with $k-i^* = \Omega(k)$.
	Additionally, suppose there exist subsets of elements $E_1 = \{e_1, \ldots, e_{a-1}\}$ and $E_0 = \{e_a,e_{a+1},\ldots,e_b\}$ with $T \subset E_1$ such that $\mathcal{A}(f^{\setminus e})$, with $e \in T$, only selects elements from $E_1$, and $\mathcal{A}(f)$ only selects from $E_1$ before all elements in $T$ have been chosen, and only selects elements from $E_0$ after all elements in $T$ have been chosen.
	Then the function $f$ attains $\Omega(k)$ sensitivity for the algorithm $\mathcal{A}$.
\end{lemma}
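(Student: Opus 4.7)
The plan is to follow the same template as the proof of Lemma~\ref{lem:hardmain}, but with the single pivotal element $e^*$ replaced by an arbitrary member of the distinguished set $T$, and to observe that the same lower bound applies uniformly over all $e \in T$. Since $|T| = \Omega(k)$, averaging the individual worst-case bounds over the full ground set then yields the claimed $\Omega(k^2/n)$ average sensitivity (the lemma statement written as ``$\Omega(k)$ sensitivity'' is the per-element bound that drives Theorem~\ref{theorem:AveGeneral}).

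First I would fix an arbitrary $e \in T$ and, just as in Lemma~\ref{lem:hardmain}, lower bound the earth mover's distance by the total variation contribution on the set $E_0$:
\[
d_{\mathrm{EM}}\bigl(\mathcal{A}(f), \mathcal{A}(f^{\setminus e})\bigr)
\;\ge\;
\sum_{e' \in E_0} \bigl| P_k(e') - P_k^{\setminus e}(e') \bigr|.
\]
The second step is to use the structural hypothesis on $E_1$: when we delete $e \in T$ from the ground set, the set $T$ can never be fully selected by $\mathcal{A}(f^{\setminus e})$, so the switching condition ``all of $T$ has been chosen'' is never triggered. By assumption this forces $\mathcal{A}(f^{\setminus e})$ to draw every one of its $k$ elements from $E_1 \setminus \{e\}$, and therefore $P_k^{\setminus e}(e') = 0$ for every $e' \in E_0$.

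Third, I would bound $\sum_{e' \in E_0} P_k(e')$ from below. Conditional on the event that all elements of $T$ have been selected by step $i^*$, which occurs with probability at least $P_{i^*}(T) = \Omega(1)$, each of the remaining $k - i^*$ iterations must, by the structural hypothesis on $\mathcal{A}(f)$, choose an element from $E_0$. Hence the expected number of $E_0$-elements selected satisfies
\[
\sum_{e' \in E_0} P_k(e') \;\ge\; P_{i^*}(T) \cdot (k - i^*) \;=\; \Omega(1)\cdot \Omega(k) \;=\; \Omega(k).
\]
Combining the three steps gives $d_{\mathrm{EM}}\bigl(\mathcal{A}(f),\mathcal{A}(f^{\setminus e})\bigr) = \Omega(k)$ for every $e \in T$, which is the per-element statement. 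Averaging only over the contributions of $e \in T$ (and discarding the rest) then produces
\[
\mathop{\mathbf{E}}_{e \in E}\bigl[d_{\mathrm{EM}}(\mathcal{A}(f),\mathcal{A}(f^{\setminus e}))\bigr]
\;\ge\; \frac{|T|}{n}\cdot \Omega(k) \;=\; \Omega(k^2/n),
\]
as needed by Theorem~\ref{theorem:AveGeneral}.

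The main obstacle is really just the verification that deleting a single $e \in T$ genuinely locks the algorithm out of $E_0$ even though the ``trigger'' $T$ now has one fewer element: this is where the structural assumption that $\mathcal{A}(f^{\setminus e})$ only selects from $E_1$ is used in full strength. Once this is granted, the rest of the argument is an essentially cosmetic variant of the original lemma, with $P_{i^*}(e^*)$ replaced by $P_{i^*}(T)$ and the cardinality of $T$ picked up during the averaging step.
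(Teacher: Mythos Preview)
Your proposal is correct and follows essentially the same route as the paper: fix $e\in T$, lower-bound the earth mover's distance by the total-variation contribution on $E_0$, use $P_k^{\setminus e}(e')=0$ for $e'\in E_0$ together with $\sum_{e'\in E_0}P_k(e')\ge P_{i^*}(T)(k-i^*)=\Omega(k)$, and then average over $e\in T$ to pick up the $\Theta(k/n)$ factor. If anything, your version spells out the vanishing of $P_k^{\setminus e}$ on $E_0$ more explicitly than the paper does.
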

\begin{proof}
	Note that the sensitivity can be bounded from below by the following total variation distance between the probabilities of selecting elements $e_i$ in $\mathcal{A}(f)$ and $\mathcal{A}(f^{\setminus e})$, with $e \in T$. Fix a particular $e \in T$ and we have:
	\begin{align*}
	d_{\mathrm{EM}}(\mathcal{A}(f), \mathcal{A}(f^{\setminus e})) \geq \sum_{i=1}^n |P_k(e_i) - P_k^{\setminus e}(e_i)|
	\end{align*}
	we can again bound this from below by restricting the summation to elements in the set $E_0$. Then noting that $\Omega(k)$ steps are still taken when we start selecting from $E_0$, and that $P_{i^*}(T)$ is $\Omega(1)$, we have:
	\begin{align*}
	\sum_{i=1}^n |P_k(e_i) - P_k^{\setminus e}(e_i)| \geq \sum_{e' \in E_0} |P_k(e) - P_k^{\setminus e}(e')| \geq P_{i^*}(T) \cdot (k-i^*) = \Omega(k).
	\end{align*}
	Now note that this happens whenever we delete an element from $T$, which by our assumption on the size of $T$, occurs with probability $\Theta(k/n)$. Lower bounding the average sensitivity by when we delete an element from $T$, our conclusion follows.
\end{proof}
We can now move onto the proof of our theorem.
\begin{proof}[Proof of Theorem~\ref{theorem:AveGeneral}]
	Again we will look at two functions, one in the large-element scheme and one in the near-equal scheme. First, take the following function:
	\begin{equation}
	f(S) = \sum_{i=1}^k x_i + \varepsilon\sum_{i=k+1}^n x_i, \text{ where } x_i = 1[e_i \in S]. \label{eqn:largeFave}
	\end{equation}
	We note that as our algorithm has a constant-factor approximation, we must select $\Omega(k)$ of the large elements with probability $\Omega(1)$ while $\Omega(k)$ steps are still remaining. Label those $\Omega(k)$ elements as the set $T$. Let the indicator variable $\mathcal{I}$ take the value $1$ when all elements in $T$ have been selected. This indicator will serve the purpose of $e_j$ in the worst-case sensitivity setting. Additionally, let $e_j$ be the element in $T$ with maximal index.
	
	Now consider the following function:
	\begin{align}
	f(S) &= \nonumber \\
	&\left((1-c\mathcal{I})\sum_{i \leq j, i \not \in \mathcal{I}} e_i\right) + \sum_{i \in \mathcal{I}}e_i + \left((1-c + \varepsilon(1-\mathcal{I}))\sum_{i=j+1}^{I_{\max}+1}e_i\right) \nonumber \\
	&+ (1-c + \varepsilon/2)\sum_{i=I_{\max}+2}^{2I_{\max}+1} e_i + \varepsilon\sum_{2I_{\max}+2}^{n} e_i \nonumber
	\end{align}
	Much like in the worst-case sensitivity scenario, due to the ordering of marginals being preserved before the indicator variable $\mathcal{I}$ takes the value $1$, we see that the two functions behave the same until that happens. This ensures that all elements in the set $T$ will be chosen in the near-equality function with probability $\Omega(1)$ while $\Omega(k)$ steps are still remaining.
	
	The analysis now follows similarly to the worst-case sensitivity setting. We get the statement for Theorem~\ref{theorem:AveGeneral} using Lemma~\ref{lem:hardmainAve}, noting we can take $T$ as defined earlier, $E_1 = \{e_1, \ldots, e_{I_{\max}+1}\}$, $E_0 = \{e_{I_{\max}+2}, \ldots, e_{2I_{\max}+1}\}$.
\end{proof}

\subsection{Distributed Algorithms}
\begin{theorem}\label{theorem:GreeDiAve}
	The Algorithm \Call{GreeDi}{} has average sensitivity $\Omega(k^2/n)$, even for bounded curvature.
\end{theorem}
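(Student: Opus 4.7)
The plan is to adapt the worst-case construction from Theorem~\ref{theorem:GreeDi} to the average-sensitivity setting by replacing the single pivotal element $e_1$ with an indicator over $k/2$ pivotal elements, following the same blueprint used in Theorems~\ref{theorem:ProportionalAve} and~\ref{theorem:AveGeneral}. Concretely, I would work with
\[
f(S) = C\sum_{i=1}^{k/2} x_i + (1 - c\mathcal{I}(S))\sum_{i=k/2+1}^{k+1} x_i + \left(1-\frac{c}{2}\right)\sum_{i=k+2}^n x_i,
\]
where $x_i = 1[e_i \in S]$, $\mathcal{I}(S) = 1[\{e_1,\ldots,e_{k/2}\} \subseteq S]$, and $C$ is large enough to ensure monotonicity; a direct marginal check shows $f$ is monotone submodular with curvature $c$, exactly as for the single-pivotal analogue in Equation~\eqref{eqn::GreeDiTest}.

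The key---and most delicate---design choice is that the middle block has size only $k/2+1$ rather than $\Theta(n)$: on a machine receiving only $O(k/m)$ pivotal elements the local indicator $\mathcal{I}$ never triggers, so middle elements (marginal $1$) would be preferred over tail (marginal $1-c/2$); if the middle block were of size $\Theta(n)$ then each machine would fill entirely with middle, preventing any tail from reaching the pool and collapsing the eventual symmetric difference to $O(1)$. I would then analyze \Call{GreeDi}{} on $f$ by first using a union bound (as in the proof of Lemma~\ref{lem:GreeDiLem}) to show that for sufficiently large $n$, with probability $1-o_n(1)$ no single machine receives all $k/2$ pivotal elements, so $\mathcal{I}$ never triggers locally. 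Each machine thus outputs its pivotal (weight $C$) together with all its middle elements and enough tail elements to fill to size $k$; summing across machines, the pool $S = \bigcup_i S_i$ contains all $k/2$ pivotal, all $k/2+1$ middle, and $\Omega(mk)$ tail elements. The final greedy on $S$ then picks the $k/2$ pivotal (triggering $\mathcal{I}=1$ globally) followed by $k/2$ tail elements, since the tail marginal $1-c/2$ now exceeds the suppressed middle marginal $1-c$; comparing $f(S')=\Theta(Ck)$ against $\max_i f(S_i)=O(Ck/m)$ confirms that $S'$ is the algorithm's output.

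For the deletion step, removing any pivotal $e_j$ with $j \leq k/2$ affects only the single machine containing $e_j$ (that machine loses $e_j$ and gains one extra tail element to fill), so the pool differs from the original in at most two elements. However, since the ground set no longer contains $e_j$, the indicator $\mathcal{I}$ can never be $1$, so the final greedy now picks $k/2-1$ pivotal followed by all $k/2+1$ middle elements (middle marginal $1$ now exceeds tail marginal $1-c/2$). The symmetric difference from the original output is $\{e_j\} \cup \{\text{$k/2$ tail}\} \cup \{\text{$k/2+1$ middle}\}$, of cardinality $k+2$, and the Kantorovich dual formulation of $d_{\mathrm{EM}}$ lifts this to the earth mover's distance between output distributions. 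Since each of the $k/2$ pivotal deletions contributes $\Omega(k)$ while non-pivotal deletions yield only $O(1)$ (the pool and the final greedy are essentially unchanged), averaging over $e \in E$ gives
\[
\mathbb{E}_{e \in E}\bigl[d_{\mathrm{EM}}(\mathcal{A}(f),\mathcal{A}(f^{\setminus e}))\bigr] \geq \frac{k/2}{n}\cdot \Omega(k) = \Omega(k^2/n),
\]
which is the claimed bound.
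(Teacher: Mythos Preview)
Your proof is correct and follows the same approach as the paper's: replace the single pivot by an indicator $\mathcal{I}$ over $k/2$ pivots, show the first-round pool contains all pivots, all middles, and many tails, and then contrast the final greedy's output before and after deleting one pivot. The paper uses a middle block of size $k$ (versus your $k/2+1$) and the slightly stronger high-probability event that each machine holds at most one pivot and one middle; one small correction---your claim $\max_i f(S_i)=O(Ck/m)$ does not follow from merely ``no machine holds all $k/2$ pivots'' (a single machine could still hold $k/2-1$ of them), but the needed inequality $f(S')>\max_i f(S_i)$ still holds since every $S_i$ is missing at least one pivot and $C$ is large.
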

\begin{proof}
	We let $\mathcal{I}$ be the indicator that takes on the value $1$ when all of $e_1, \ldots, e_{k/2}$ are selected, and take the function:
	\begin{equation}
	f(e_1,\ldots,e_n) = C\sum_{i=1}^{k/2}e_i + (1-c\mathcal{I}) \sum_{i=k/2+1}^{3k/2}e_i + (1-c/2)\sum_{i=3k/2+1}^n
	\end{equation}
	for a large constant $C$, and curvature $c$. By taking large $n$, each machine $V_i$ will have at most one of the large elements $e_1, \ldots, e_{k/2}$, with probability $1 - o_n(1)$. Further, each machine will have at most one of the elements in the second summand $e_{k/2+1}, \ldots, e_{3k/2}$ with probability $1 - o_n(1)$.
	
	We see that each machine must select the elements it contains from the first and second summand, and any remaining elements will come from the third summand.
	
	We then see that the set $S$ in the \Call{GreeDi}{} definition in Algorithm~\ref{alg:1stdistgreedy} must contain all elements from the first summand, second summand, and at least $k-1$ elements from the third summand with probability $1 - o_n(1)$.
	
	Earlier analysis of the centralized greedy algorithm now gives the result.
\end{proof}

\begin{theorem}\label{theorem:DistributedAve}
	Let $\mathcal{A}$ be a centralized sequential algorithm which satisfies the hypotheses of Theorem~\ref{mainhard} and has average sensitivity $\Omega(k^2/n)$. The corresponding distributed algorithm $\mathcal{A}_d$ also has average sensitivity $\Omega(k^2/n)$, even for bounded curvature.
\end{theorem}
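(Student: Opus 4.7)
The plan is to combine the hard instance from the proof of Theorem~\ref{theorem:AveGeneral} (the near‑equality scheme with a set‑indicator $\mathcal{I}$ replacing the single ``trigger'' element $e_j$) with the multi‑round concentration argument already used for the distributed worst‑case result and for Theorem~\ref{theorem:GreeDiAve}. Concretely, I would start from the large‑element scheme of~\eqref{eqn:largeFave} applied to $\mathcal{A}$, isolate a set $T$ of $\Theta(k)$ elements that the centralized $\mathcal{A}$ must select with probability $\Omega(1)$ while $\Omega(k)$ steps remain, and then replace the factor $(1-ce_j)$ in the near‑equality function by $(1-c\mathcal{I})$, where $\mathcal{I}$ is the indicator that every element of $T$ has been chosen. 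Call this function $f$; it is monotone submodular with curvature $c$, and for every $e\in T$ the restriction $f^{\setminus e}$ is modular (up to the perturbations already present in the near‑equality scheme).

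Next I would analyze one run $r$ of $\mathcal{A}_d$ on $f$. Because $g=\Theta(1/(\alpha\varepsilon))$ and the number of machines per group satisfies $m < n^{1-\Omega(1)}$, the ground set of size $\Theta(n)$ is split into pieces of size $\Theta(n/m)=\omega_n(k)$; by a standard Chernoff/Bernoulli calculation (identical in spirit to Lemma~\ref{lem:GreeDiLem} and the proof of Theorem~\ref{theorem:GreeDiAve}), with probability $1-o_n(1)$ every machine receives at most one element of $T$ and the ``low'' elements outside $T\cup E_1$ are spread so thinly that on each machine they appear to the local greedy exactly as in the centralized large‑element scheme. By the constant‑factor approximation of $\mathcal{A}$ on each machine, every element of $T$ that landed on a machine is selected into the local $S_{i,1}$ with probability $\Omega(1)$. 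Hence after the first run the pool $C_1$ contains every element of $T$ with probability $1-o_n(1)$.

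From round $r=2$ onward the pool $C_{r-1}$ already contains all of $T$ and a representative set of ``low'' elements in $E_1$ and $E_0$, so on each machine the local instance behaves exactly as the centralized algorithm $\mathcal{A}$ applied to $f$: the indicator $\mathcal{I}$ will flip at some step and then force all remaining selections into $E_0$. Any group of machines whose local outputs together dominate $S^{*}$ therefore produces a final distribution indistinguishable, up to $o_n(1)$ in total variation, from the centralized $\mathcal{A}(f)$. The same argument, applied to $f^{\setminus e}$ for $e\in T$, shows that $\mathcal{A}_d(f^{\setminus e})$ mirrors $\mathcal{A}(f^{\setminus e})$, which by Lemma~\ref{lem:hardmainAve} (and the choice of $T$, $E_1$, $E_0$) differs from $\mathcal{A}(f)$ by $\Omega(k)$ in earth mover's distance. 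Averaging over a uniformly random deleted element and using $|T|=\Theta(k)$ gives
\begin{equation*}
\mathbb{E}_{e\in E}\bigl[d_{\mathrm{EM}}(\mathcal{A}_d(f),\mathcal{A}_d(f^{\setminus e}))\bigr]\ \ge\ \frac{|T|}{n}\cdot\Omega(k)\ =\ \Omega(k^2/n),
\end{equation*}
as required.

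The main obstacle, and the place where I would spend most of the work, is the second step: controlling the coupling between $\mathcal{A}_d(f)$ and $\mathcal{A}_d(f^{\setminus e})$ across all $\Theta(1/\varepsilon)$ runs while tracking the pool $C_{r-1}$. The randomness of the assignment of elements to machines is shared in a natural coupling, but the $o_n(1)$ error from ``some machine sees two elements of $T$'' must be absorbed without destroying the $\Omega(k^2/n)$ bound; this is fine because $k^2/n = \omega_n(o_n(1))$ once we take $n$ large, but it has to be made quantitative. Once this concentration step is pinned down, the rest of the argument is essentially a bookkeeping reduction to the centralized Theorem~\ref{theorem:AveGeneral}.
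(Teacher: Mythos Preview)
Your proposal is correct and follows essentially the same route as the paper: replace the single trigger $e_j$ by the set indicator $\mathcal{I}$ over a $\Theta(k)$-size set $T$, argue that after the first run the pool $C_1$ contains all of $T$ together with enough elements from both ``halves'' (with probability $1-o_n(1)$ via the same machine-load argument as Lemma~\ref{lem:GreeDiLem}), and then observe that from run $2$ onward every machine faces the centralized instance, so Lemma~\ref{lem:hardmainAve} applies and averaging over deletions in $T$ gives $\Omega(k^2/n)$. The only cosmetic difference is that the paper works directly with a large-element-style function $C\sum_{i\le k}x_i+(1-c\mathcal{I})\sum_{k<i\le 3k/2}x_i+(1-c/2)\sum_{i>3k/2}x_i$ rather than the full near-equality function you invoke, and it does not dwell on the coupling issue you flag---it simply absorbs the $o_n(1)$ failure probability without comment.
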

\begin{proof}
	We'll consider the following function, similar to the large-element case in the centralized setting:
	\begin{equation}
	f(S) = C\sum_{i=1}^k x_i + (1-c\mathcal{I})\sum_{i=k+1}^{3k/2} + \left(1-\frac{c}{2}\right)\sum_{3k/2+1}^{n} e_i
	\end{equation}
	where the indicator $\mathcal{I}$ is the same as in the centralized case. As is the case for \Call{GreeDi}{}, considering each separate group of machines $V$, each machine in that group $V_i$ will have at most one element from the first summand, and similarly for the second.
	
	Similarly to \Call{GreeDi}{}, and to the centralized version of this proof, we see that every machine will choose the elements in the first and second summand that have been assigned to that machine, with the remainder coming from the third summand.
	
	We then see that $C_1$ will contain all elements $\{e_1, \ldots, e_{3k/2}\}$, along with at least $k-1$ elements from the third summand, and this will happen with probability $1 - o_n(1)$. Similarly to the worst-case scenario, we see that on the final run, all machines in all groups will run the greedy algorithm containing all elements in the set $\{e_1, \ldots, e_{3k/2}\}$, and at least $k-1$ other elements.
	
	Earlier analysis of the centralized algorithm now gives the result. As in the worst-case setting, we note from run $2$ onwards the algorithm must behave as in the centralized case for this function, excluding the possibility of a higher objective value, as all elements that the indicator $\mathcal{I}$ requires will always be present in the correct position.
\end{proof}

\end{document}